\documentclass{article}

\PassOptionsToPackage{sort&compress,numbers}{natbib}

\usepackage[preprint]{neurips_2026}
\usepackage{neurips-mystyle}
\usepackage[belowskip=0pt]{caption}
\usepackage[utf8]{inputenc} 
\usepackage[T1]{fontenc}    
\usepackage{xurl}            

\usepackage{hyperref}
\usepackage{booktabs}       

\newcommand{\pmv}[2]{$#1{\scriptscriptstyle\pm}#2$}
\newcommand{\bestpmv}[2]{$\mathbf{#1}{\scriptscriptstyle\pm}\mathbf{#2}$}

\usepackage{amsfonts}       
\usepackage{nicefrac}       
\usepackage{xcolor}         
\usepackage{comment}
\usepackage{subcaption}
\allowdisplaybreaks

\newcommand{\titletext}{Heterogeneous Judge-Aware Ranking with\\
Sensitivity, Disagreement, and Confidence}

\title{\titletext}

\author{Shibo Yu\\The University of Hong Kong\\\texttt{shibo-yu01@connect.hku.hk}
\And
Yingzhou Wang\\The University of Hong Kong\\\texttt{wyingz@connect.hku.hk}
\And
Yan Chen\\The University of Hong Kong\\\texttt{yanc25@hku.hk}
\And
Guodong Li\\The University of Hong Kong\\\texttt{gdli@hku.hk}
\And
Jin-Hong Du\\The University of Hong Kong\\\texttt{jinhongd@hku.hk}
}
\date{}

\begin{document}

\maketitle

\begin{abstract}
Pairwise comparisons from multiple judges are central to large language model evaluation and preference modeling, yet standard ranking pipelines often pool judgments into a single score vector, treating systematic judge disagreement as noise. 
We propose \emph{Heterogeneous Judge-Aware (HJA) ranking}, a structured multi-judge ranking framework that separates consensus ranking, judge-specific sensitivity to consensus, and residual preference disagreement. 
HJA thereby treats ranking, judge sensitivity, and structured disagreement as separate inferential targets.
We establish conditions under which this decomposition is identifiable and develop an anchored alternating algorithm that preserves the identifying geometry. 
For confidence quantification, we study a fixed-panel repeated-comparison regime in which the judge panel may remain fixed or modest while information grows through repeated judgments. 
This yields uncertainty statements for consensus and judge-specific ranking contrasts, sensitivity parameters, pairwise probabilities, and summaries of residual disagreement.
Experiments on synthetic and real multi-judge comparison data show that HJA improves recovery, robustness, uncertainty calibration, and near-tie performance relative to pooled and sensitivity-only baselines. 
The fitted model also provides diagnostics for judge disagreement and model-affinity patterns, giving a statistically grounded framework for ranking under heterogeneous comparative judgments.
\end{abstract}

\begin{bibunit}[unsrtnat]

\section{Introduction}

Comparative judgments have become a standard interface for evaluating and aligning modern language models. Pairwise preferences are used to build leaderboards, benchmark systems, and provide supervision for post-training and preference optimization \citep{zheng2023mtbench,christiano2017preferences,ouyang2022instructgpt,rafailov2023dpo}. 
Increasingly, however, these judgments are not produced by a single homogeneous evaluator, but by panels of human annotators, model-based judges, prompting protocols, or evaluation settings applied to the same collection of candidate outputs \citep{chen2013crowdbt,otani2016irt,zhao2025lmc,dhurandhar2024ranking}. 
In such settings, disagreement among judges is not merely exchangeable noise around one latent ranking. Recent evidence shows substantial variation across judge models, languages, tasks, and evaluation protocols, with especially large consequences for subjective evaluations and near-tie comparisons \citep{bavaresco2025llmjudges,fu2025multilingualjudge,guerdan2025indeterminacy,gao2025reevaluating}. 
Thus, moving from one judge to many does not simply increase sample size for a classical ranking problem; it changes the object of inference from a single pooled ordering to a structured multi-judge ranking problem.

To address this problem, we develop a heterogeneous judge-aware ranking framework that treats multi-judge comparisons as evidence for several distinct ranking targets rather than as noisy votes for one pooled ordering. 
As summarized in \Cref{fig:overview}, the framework separates a consensus ranking shared across judges, judge-specific sensitivity to that consensus, and structured residual disagreement across judges and items. 
This distinction is important because two judges may differ not only in how reliably they track the common ranking signal, but also in which types of items or responses they systematically favor. 
The resulting model supports uncertainty quantification for consensus scores, judge-specific comparisons, consensus sensitivity, and latent disagreement patterns.

A key technical distinction is that the low-rank structure is not used as an arbitrary latent preference space. 
As formalized in \Cref{sec:HJA}, HJA decomposes the judge-specific Bradley--Terry--Luce (BTL) score matrix as \(S=\gamma\mu^\top+UV^\top\), where \(\mu\) is the row-average consensus direction, \(\gamma_k\) is judge \(k\)'s loading on that direction, and \(UV^\top\) is constrained to represent residual disagreement orthogonal to consensus. 
Thus \(\gamma_k\) measures sensitivity to the shared ranking signal, rather than global judge quality, while the low-rank component captures structured deviations after consensus has been removed. 
The resulting targets are therefore not merely an improved predictor of pairwise preferences, but the identifiable components of this canonical decomposition.

The main technical challenge is to make this separation statistically well-defined, estimable, and inferentially useful. 
We first show that the consensus component, judge-specific consensus sensitivities, and structured disagreement subspace are identifiable under explicit normalization and rank conditions (\Cref{prop:ident}). 
We then estimate the model by constrained maximum likelihood and develop an anchored proximal alternating algorithm that preserves the identifying geometry of the decomposition while enjoying a local linear convergence guarantee (\Cref{alg:mle} and \Cref{thm:main_convergence}). 
For inference, we study a repeated-comparison asymptotic regime in which the number of judges may remain fixed or modest while information grows through repeated comparisons, and establish asymptotic normality for the constrained estimator (\Cref{thm:asymptotics}). 
This yields confidence intervals for smooth ranking targets, including consensus score contrasts, judge-specific score contrasts, consensus sensitivity, pairwise preference probabilities, and summaries of latent disagreement.


Empirically, we evaluate HJA through synthetic recovery studies and real-data benchmarks (\Cref{subsec:comparison}). 
Synthetic experiments show that HJA recovers consensus rankings, judge-specific behavior, structured heterogeneity, and uncertainty calibration as comparison information increases, while misspecified baselines degrade under stronger heterogeneity. 
Across four real LLM-evaluation datasets, HJA improves robustness to noisy judges, held-out pairwise prediction, and near-tie performance, especially when judge disagreement is consequential. 
Beyond predictive benchmarking, we use HJA as a diagnostic tool to report judge-specific rankings with confidence intervals, identify high-leverage judges, and localize structured judge-model affinity patterns, including possible self- or family-affinity effects without treating them as universal findings (\Cref{subsec:real-analysis}).

\begin{figure}[!t]
    \centering
    \includegraphics[width=0.9\textwidth]{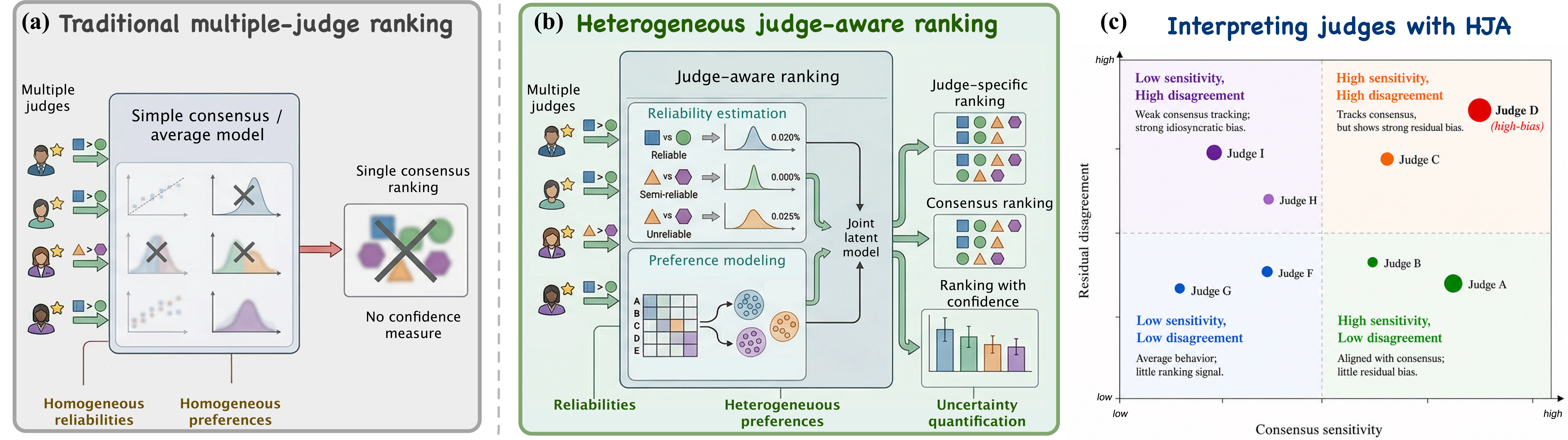}
    \caption{
    Overview of Heterogeneous Judge-Aware (HJA) ranking.
    Unlike pooled multi-judge ranking, which collapses heterogeneous judgments into a single consensus score, HJA models judge-specific latent scores through the structured decomposition.
    This yields consensus and judge-specific rankings, sensitivity assessment, structured disagreement analysis, and uncertainty quantification.
    }
    \label{fig:overview}
    \vspace{-2pt}
\end{figure}

\vspace{-4pt}
\section{Related Work}
    
\noindent\textbf{Pairwise ranking, annotator heterogeneity, and judge-aware aggregation.}
Our starting point is the BTL model for ranking from pairwise comparisons \citep{bradley1952rank,hunter2004mm}. 
Beyond the homogeneous BTL setting, prior work in crowdsourcing and machine translation evaluation shows that aggregation should account for annotator-specific sensitivity, calibration, or sharpness when comparisons are collected from multiple annotators \citep{chen2013crowdbt,otani2016irt}. 
Motivated by modern multi-judge evaluation settings, we build on this perspective by allowing disagreement to involve structured preference heterogeneity beyond scalar sensitivity.

\noindent\textbf{LLM-as-a-judge and evaluator disagreement.}
Pairwise LLM judging became a practical standard through benchmarks and platforms such as MT-Bench and Chatbot Arena \citep{chiang2024chatbot,zheng2023mtbench}, and early evidence suggested that strong LLM judges could sometimes serve as useful proxies for human evaluation \citep{chiang2023alternative}. 
More recent work has weakened that optimistic view. Large-scale empirical studies report substantial variation across tasks and benchmarks \citep{bavaresco2025llmjudges}; multilingual evaluations show poor cross-language stability \citep{fu2025multilingualjudge}; and rating-indeterminacy analyses argue that collapsing multiple defensible judgments into a single forced label can itself introduce evaluation bias \citep{guerdan2025indeterminacy}. 
Near-tie settings are especially fragile, since small prompt or decoding changes can induce large changes in pairwise outcomes and hence in final leaderboards \citep{gao2025reevaluating}. 
These observations suggest that disagreement among judges should be explicitly modeled, rather than averaged away as if all judges were interchangeable.

\noindent\textbf{Judge-aware ranking and uncertainty quantification.}
JA-Ranking \citep{xu2026judgeaware} introduces judge-specific discrimination parameters for ranking LLM outputs without ground truth, capturing scalar sensitivity variation across judges. 
In our notation, this is the sensitivity-only structure \(S=\gamma\mu^\top\), the \(r=0\) special case of HJA. 
Collaborative ranking and preference-completion methods instead learn structured heterogeneous preferences from pairwise data \citep{lu2015individualized,park2015preference}, with related work imposing low-rank structure on pairwise preference probabilities or heterogeneous preference populations \citep{rajkumar2016can,jin2020rank,wu2015clustering}. 
These approaches show that preference heterogeneity can be low-dimensional, but typically treat the individualized or low-rank preference object as the primary target. 
HJA uses low-rank structure for a different inferential purpose: it canonically decomposes the judge-specific BTL score matrix into a row-average consensus direction, judge-specific sensitivity to that direction, and an orthogonal residual disagreement component. 
This yields separate targets for consensus ranking, judge-specific ranking, and latent disagreement, with fixed-panel uncertainty quantification for identifiable components rather than only pairwise prediction. 
Complementary work on comparative uncertainty argues that evaluation pipelines should retain confidence information rather than reduce judgments to deterministic verdicts \citep{fathullah2025generalised,wang2025judgmentdist}; HJA connects this perspective to structured multi-judge ranking with fixed or modest judge panels.
See \Cref{app:alternative-formulations} for a detailed comparison.

\section{Problem Setup}

\noindent\textbf{Pairwise judgments from multiple judges.}
We consider repeated pairwise comparison settings with multiple judges.
Let $[N]=\{1,\ldots,N\}$ and $[K]=\{1,\ldots,K\}$ index the items and the judges.
For each judge $k\in[K]$ and unordered item pair $(i,j)$ with $1\le i<j\le N$, $n_{kij}$ denotes the number of repeated comparisons between items $i$ and $j$, and $Y_{kij}\in\{0,\ldots,n_{kij}\}$ denotes the number of times judge $k$ prefers item $i$ to item $j$. 
In LLM evaluation and preference modeling, pairwise comparisons are routinely used to benchmark systems, rank models without reference answers, and train preference-aligned models \citep{zheng2023mtbench,dhurandhar2024ranking,christiano2017preferences,ouyang2022instructgpt,rafailov2023dpo}.
Write
\begin{align}
    \Omega := \{(k,i,j): 1\le k\le K,\ 1\le i<j\le N,\ n_{kij}>0\}
    \label{eq:Omega}
\end{align}
for the set of observed judge-item-pair triples.
Throughout, statistical information grows through repeated judgments on observed triples, while the number of judges $K$ may remain fixed or modest.

A pooled ranking model treats all judges as interchangeable and collapses their responses into a single latent score vector. In multi-judge evaluation, however, this is often too restrictive. 
This concern is already visible in crowdsourced pairwise ranking and MT evaluation, where annotator-specific sensitivity or calibration affects aggregation \citep{chen2013crowdbt,otani2016irt}, and is amplified in LLM-as-a-judge settings with judge, language, and protocol sensitivity \citep{bavaresco2025llmjudges,fu2025multilingualjudge,gao2025reevaluating}.
Judges may differ not only in how strongly they align with an overall ranking signal, but also in systematic ways that cannot be explained by a common score plus independent noise. These considerations suggest that the target of inference should not be a single pooled ranking alone, but a decomposition that preserves a consensus ranking while allowing structured judge-specific departures from it.

\noindent\textbf{Consensus ranking with sensitivity, preference heterogeneity, and confidence.}
A judge-aware ranking model should do more than output a single pooled ordering of the items. 
In many evaluation settings, the substantive questions are not only which items rank highest on average, but also which judges track the shared ranking signal most strongly, where systematic preference disagreement remains, and how confident we should be in the ranking conclusions.

Accordingly, we seek a representation that separates three statistical targets: 
(i) \emph{consensus ranking}, which summarizes the ordering most strongly supported across judges;
(ii) \emph{consensus sensitivity}, which is judge-specific sensitivity to the shared consensus ranking direction and measures how strongly a judge tracks the common ranking signal, rather than the judge's overall quality or calibration across all possible preference dimensions;
and (iii) \emph{structured preference heterogeneity}, which allows judges to differ not only in sharpness or noise level, but also in systematic preferences that cannot be explained by a scalar rescaling alone.
This perspective is aligned with collaborative ranking and preference-completion models, which treat individualized preferences as structured low-dimensional signals rather than independent noise \citep{lu2015individualized,park2015preference,fan2025uncertainty}.

In addition to recovering these latent components, the model should support confidence statements for the inferential outputs they induce. In particular, we would like to quantify uncertainty in the consensus ranking, in judge-specific comparisons, and in the estimated sensitivity and disagreement structure. This is especially important in near-tie regimes, where small perturbations in judgments can produce large changes in the final ordering \citep{guerdan2025indeterminacy,gao2025reevaluating}.
These considerations motivate a structured latent-score model that separates shared ranking signal from judge-specific alignment and residual disagreement. 
The next section introduces such a decomposition.

\section{Heterogeneous Judge-Aware Ranking Model}\label{sec:HJA}

\subsection{Judge-specific latent variable model}
We model each judge as having a latent score vector over the $N$ items, with pairwise judgments determined by differences in those scores. 
Let $S\in\RR^{K\times N}$ denote the score matrix where the $k$th row $S_{k,:}$ represents the score vector for judge $k$.
Conditional on $S$, the counts $\{Y_{kij}:(k,i,j)\in\Omega\}$ are assumed to be independent and follow Bradley--Terry--Luce (BTL) model \citep{bradley1952rank,luce1959individual}:
\begin{equation}
    Y_{kij}\sim \mathrm{Binomial}(n_{kij},p_{kij}), \qquad \mathrm{logit}(p_{kij})=S_{ki}-S_{kj}, \qquad (k,i,j)\in\Omega. \label{eq:binomial-model}
\end{equation}
Equivalently, \(p_{kij}={\exp(S_{ki})}/\{\exp(S_{ki})+\exp(S_{kj})\}.\)
For each judge $k$, let $\cG_k=([N],\cE_k)$ denote the comparison graph with edge set
\(
\cE_k=\{\{i,j\}:(k,i,j)\in\Omega\}.
\)
For the main development, we assume that $\cG_k$ is connected for every $k$, as is standard in BTL estimation \citep{hunter2004mm}. 
This condition implies that each judge-specific score vector is identifiable up to an additive constant from that judge's observed comparisons. Together with the row-centering constraint
\begin{equation}
    S \one_N = \zero_K,
    \label{eq:S-id}
\end{equation}
this removes the usual additive invariance of the BTL model, so the collection $\{p_{kij}:(k,i,j)\in\Omega\}$ uniquely identifies $S$.
We allow for disconnected individual comparison graphs in the partial sampling regime of \Cref{subsec:partial-sample}, provided that the pooled comparison structure together with the structural constraints of our model still identifies the latent score matrix.

To separate shared ranking information from structured disagreement across judges, we model the latent score matrix through a consensus-plus-heterogeneity decomposition:
\begin{equation}
S = \gamma \mu^\top + UV^\top,
\label{eq:score-decomp}
\end{equation}
where $\mu \in \RR^N$ is the consensus item score vector, $\gamma \in \RR^K$ collects judge-specific consensus-sensitivity parameters, $U \in \RR^{K \times r}$ contains judge loadings on latent disagreement directions, and $V \in \RR^{N \times r}$ contains the corresponding item coordinates. Equivalently, one has
\begin{equation}
\mathrm{logit}(p_{kij})
=
\gamma_k(\mu_i-\mu_j)
+
U_{k,:}(V_{i,:}-V_{j,:})^\top,\qquad (k,i,j)\in\Omega.
\label{eq:judge-aware-link}
\end{equation}
This decomposition separates two distinct sources of variation across judges. The term
$\gamma_k(\mu_i-\mu_j)$ captures how strongly judge $k$ responds to the common consensus signal, and therefore determines how sharply judge $k$ discriminates along the shared ranking direction $\mu$. 
In the unlikely case where $\gamma_k<0$, the judge $k$ is anti-aligned with the consensus direction or may be \emph{malicious} \citep{chen2013crowdbt,jing2018hybrid}.
The residual term $U_{k,:}(V_{i,:}-V_{j,:})^\top$ captures systematic departures from that consensus that cannot be explained by a scalar rescaling alone. As a result, the model distinguishes judges who differ mainly in sharpness from judges who exhibit structured preference heterogeneity across items.

This distinction is particularly useful in LLM-as-a-judge settings. Some judges may be weakly aligned with the common ordering and therefore behave as comparatively noisy evaluators, while others may display systematic deviations tied to answer style, language, task domain, or other latent attributes. Under \eqref{eq:score-decomp}, the consensus vector $\mu$ represents the population-level ranking target, whereas $UV^\top$ encodes interpretable disagreement structure around that target.

\subsection{Identification of latent components}
\label{subsec:id}

The decomposition in \eqref{eq:score-decomp} introduces interpretable latent components for consensus, consensus sensitivity, and structured heterogeneity, but the factorization is not identified by the likelihood alone: many parameter tuples can generate the same row-centered score matrix \(S\). 
The role of the following constraints is therefore not merely to choose a convenient scale, rotation, or centering convention. 
They define the canonical representative in which \(\mu\) is the cross-judge consensus direction, \(\gamma\) measures judge-specific sensitivity to that direction, and \(UV^\top\) represents residual disagreement.

\begin{condition}[Normalization]\label{cond:norm}
    The parameters $(\gamma,\mu,U,V)$ satisfy:
    \begin{enumerate}[label=(\roman*)]
        \item\label{cond:center} {Centering:} $\one_N^\top \mu = 0$ and $V^\top \one_N = \zero_r$.
        \item\label{cond:scale} {Scaling:} $\one_K^\top U = \zero_r^\top$ and $\one_K^\top \gamma = K$.
        \item\label{cond:orth} {Orthogonality:} $\mu^\top V = \zero_r^\top$.
        \item\label{cond:subspace} {Subspace anchoring:} $N^{-1}V^\top V = I_r$ and $K^{-1}U^\top U = D$ for some diagonal matrix $D$ with strictly decreasing positive diagonal entries.
    \end{enumerate}
\end{condition}

\Cref{cond:norm}\ref{cond:center}-\ref{cond:scale} fix a canonical representative of the decomposition in \eqref{eq:score-decomp}. 
The centering constraints remove the additive indeterminacy inherited from the BTL model and align 
the decomposition with the row-centering normalization \(S\one_N=\zero_K\). The scaling constraints 
then calibrate the consensus component so that \(\mu\) represents the cross-judge consensus direction 
and \(\gamma\) measures judge-specific alignment with that direction. Indeed, under \Cref{cond:norm}\ref{cond:scale}, if \(S^\top\one_K\neq \zero_N\), then
\[
\frac{1}{K}\one_K^\top S
=
\frac{1}{K}\one_K^\top \gamma \mu^\top
+
\frac{1}{K}\one_K^\top U V^\top
=
\mu^\top,
\qquad
\gamma
=
\frac{S\mu}{\|\mu\|_2^2}
=
\frac{K}{\|\one_K^\top S\|_2^2}SS^\top \one_K.
\]
Thus \(\mu\) is the row-average BTL score direction induced by the judge-specific score matrix. 
Correspondingly, \(\gamma_k\) is the consensus sensitivity of judge \(k\)'s loading on this consensus direction.

\Cref{cond:norm}\ref{cond:orth}-\ref{cond:subspace} separate consensus from heterogeneity and anchor the residual factorization. 
The constraint \(\mu^\top V=\zero_r^\top\) requires the heterogeneous item directions to be orthogonal to the consensus direction, while the subspace-anchoring condition removes rotational ambiguity up to column-wise sign changes. 
Define the heterogeneous component of the latent score matrix:
\begin{equation} \label{eq:S-residual}
    \tilde{S} := SP_{S^\top \one_K}^\perp, 
\end{equation}
where $P_A^\perp$ denotes the projection matrix onto the orthogonal complement of $A$'s column space.
In the degenerate case \(S^\top\one_K=\zero_N\), the induced model reduces to pure heterogeneity.

Accordingly, for a fixed rank \(r\le R_{\max}:=\min\{K-1,N-2\}\), we define
\[
\Theta_r
=
\left\{
(\gamma,\mu,U,V):
\mu\in\RR^N,\ \gamma\in\RR^K,\ U\in\RR^{K\times r},\ V\in\RR^{N\times r},
\ \text{\Cref{cond:norm} holds}
\right\}.
\]
While \Cref{cond:norm} specifies a normalized parameterization, it does not guarantee that an arbitrary centered score matrix \(S\) admits such a decomposition. 
To recover \((\gamma,\mu,U,V)\) uniquely from \(S\), we therefore impose regularity conditions on $S$ and its heterogeneous component \(\tilde{S}\).
\begin{condition}\label{cond:S-rank}
    The cross-judge aggregate score vector of $S$ is nonzero, i.e., \(S^\top\one_K\neq \zero_N\), and the heterogeneous score matrix $\tilde{S}$ has $r$ distinct positive singular values.
\end{condition}

The next proposition formalizes the identifiability of the model. 
First, the BTL comparison law identifies the row-centered judge-specific score matrix \(S\). 
Second, \Cref{cond:norm,cond:S-rank} recover from \(S\) the canonical consensus-plus-heterogeneity decomposition, uniquely up to column-wise sign changes in \((U,V)\). 
Compared to generic low-rank preference modeling \citep{rajkumar2016can,jin2020rank,wu2015clustering}, the latent factors in HJA are therefore not introduced merely as a flexible approximation: they are identifiable components with distinct meanings as consensus, consensus sensitivity, and residual disagreement.
\begin{proposition}[Identification] \label[proposition]{prop:ident}
    Consider the probabilistic preference model \eqref{eq:binomial-model} subject to \eqref{eq:S-id}. Then,
    \begin{enumerate}
        \item[(i)] (BTL identification)
        The latent score matrix $S$ can be identified from the generative probability distribution of pairwise comparisons in \eqref{eq:binomial-model}.
        
        \item[(ii)] 
        (Parameter identification) If \Cref{cond:S-rank} holds for $S$, then there exists $(\gamma, \mu,U,V)\in\Theta_r$ such that \eqref{eq:score-decomp} holds. Such existence is unique up to some sign changes in the columns of $U$ and $V$.
    \end{enumerate}
\end{proposition}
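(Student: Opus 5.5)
For part~(i) I will argue judge by judge, using the connectivity of each $\cG_k$. The law of $Y_{kij}$ for $(k,i,j)\in\Omega$ determines $p_{kij}$, since $n_{kij}>0$. Then $\mathrm{logit}(p_{kij})=S_{ki}-S_{kj}$ identifies every score difference along an edge of $\cG_k$. Because $\cG_k$ is connected, any two items $i,j$ are joined by a path $i=i_0,i_1,\ldots,i_m=j$, and summing the edge differences telescopes to $S_{ki}-S_{kj}$. Hence row $S_{k,:}$ is determined up to an additive constant. The row-centering constraint \eqref{eq:S-id} then fixes that constant, so $S$ is identified.

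For existence in part~(ii) I will build the parameters explicitly from $S$. Set $\mu:=K^{-1}S^\top\one_K$, which is nonzero by \Cref{cond:S-rank}, and $\gamma:=S\mu/\|\mu\|_2^2$. Then:
\begin{itemize}
\item[(a)] $\one_N^\top\mu=K^{-1}\one_K^\top S\one_N=0$ by \eqref{eq:S-id}.
\item[(b)] $\one_K^\top\gamma=\one_K^\top S\mu/\|\mu\|_2^2=K\mu^\top\mu/\|\mu\|_2^2=K$.
\item[(c)] $\gamma\mu^\top=S\mu\mu^\top/\|\mu\|_2^2=S(I-P^\perp_{\mu})$, so $S-\gamma\mu^\top=\tilde S$ as defined in \eqref{eq:S-residual}.
\end{itemize}
Next take a thin SVD $\tilde S=A\Sigma B^\top$ over its $r$ positive singular values $\sigma_1>\cdots>\sigma_r$. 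Put $V:=\sqrt N\,B$ and $U:=A\Sigma/\sqrt N$, so that $UV^\top=\tilde S$. The constraints in \Cref{cond:norm} follow from where the singular vectors live.
\begin{itemize}
\item[(d)] The columns of $B$ lie in the row space of $\tilde S$. Since $\tilde S\one_N=S\one_N-\gamma\mu^\top\one_N=\zero_K$ and $\tilde S\mu=\zero_K$, this row space is orthogonal to both $\one_N$ and $\mu$. This gives $V^\top\one_N=\zero_r$ and $\mu^\top V=\zero_r^\top$.
\item[(e)] The columns of $A$ lie in the column space of $\tilde S$. Since $\one_K^\top\tilde S=K\mu^\top P^\perp_{\mu}=\zero_N^\top$, this gives $\one_K^\top U=\zero_r^\top$.
\item[(f)] $N^{-1}V^\top V=I_r$, and $K^{-1}U^\top U=\Sigma^2/(NK)$, which is diagonal with strictly decreasing positive entries.
\end{itemize}

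For uniqueness, take any $(\gamma,\mu,U,V)\in\Theta_r$ with $S=\gamma\mu^\top+UV^\top$. Multiplying on the left by $K^{-1}\one_K^\top$ and using \Cref{cond:norm}\ref{cond:scale} forces $\mu=K^{-1}S^\top\one_K$. Multiplying on the right by $\mu$ and using $V^\top\mu=\zero_r$ forces $\gamma=S\mu/\|\mu\|_2^2$. So $(\gamma,\mu)$ coincide with the construction, and $UV^\top=\tilde S$. The normalization then gives
\[
\tilde S^\top\tilde S\,V=VU^\top UV^\top V=NK\,VD .
\]
Thus the columns of $V$ are eigenvectors of $\tilde S^\top\tilde S$ with distinct positive eigenvalues $NKD_{ll}$, each of norm $\sqrt N$. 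Since $\mathrm{rank}(\tilde S)\le r$ and $\tilde S$ has $r$ positive singular values, these exhaust the nonzero spectrum. Distinctness then pins each column of $V$ down up to sign. Finally $U=\tilde S V/N$ inherits the same column signs, and $D=\Sigma^2/(NK)$ is determined.

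The main obstacle is this last step. It needs \Cref{cond:S-rank} to mean that $\tilde S$ has \emph{exactly} $r$ nonzero singular values, all distinct. With a larger rank the equation $UV^\top=\tilde S$ would have no solution in $\Theta_r$, and repeated singular values would allow rotations beyond sign flips. I will make this reading explicit. I will then check carefully that the spectral argument uses only \Cref{cond:norm}\ref{cond:subspace}, and not an a priori assumption that $V$ comes from an SVD.
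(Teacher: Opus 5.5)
Your proposal is correct and follows essentially the same route as the paper. You obtain $\mu$ from the row average and $\gamma$ by right-multiplying by $\mu$, you construct the parameters explicitly via the projection $SP^\perp_{S^\top\one_K}$ and a thin SVD, and you get uniqueness of $(U,V)$ up to column signs from simple singular values. Your version is slightly more careful in three places: you telescope along paths of the connected graph in part (i), you read the rank condition explicitly as saying that $\tilde S$ has exactly $r$ distinct nonzero singular values (the paper uses this only implicitly when it sets $r=\mathrm{rank}(\tilde S)$), and your eigenvector argument does not need the paper's unnecessary claim that $\gamma$ is orthogonal to the columns of $U$.
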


\Cref{prop:ident} shows that the model parameters correspond to uniquely defined features of the data-generating mechanism, rather than to an arbitrary factorization of \(S\). The only residual ambiguity can be eliminated via some sign conventions. This identification result provides the foundation for the constrained estimation and uncertainty quantification procedures developed in the next section.

\section{Estimation and Uncertainty Quantification}\label{sec:est-uq}
\subsection{Constrained maximum likelihood estimation}\label{subsec:estimation}

We first consider estimation for a fixed heterogeneity rank \(r \leq R_{\max}:=\min\{K-1,N-2\}\); practical rank-selection criteria are discussed in
\Cref{app:subsec:rank-selection}.
Let \(\theta=(\gamma,\mu,U,V)\in\Theta_r\). 
For each observed triple
\((k,i,j)\in\Omega\), define
\(
\eta_{kij}(\theta)=
\gamma_k(\mu_i-\mu_j)
+
U_{k,:}(V_{i,:}-V_{j,:})^\top .
\)
Dropping constants independent of \(\theta\), the negative log-likelihood is
\begin{equation} \label{eq:ll}
\mathcal{L}_n(\theta)
:=
\sum_{(k,i,j)\in\Omega}
n_{kij}
\Big[
-\bar{Y}_{kij}\eta_{kij}(\theta)
+
\log\big\{1+\exp(\eta_{kij}(\theta))\big\}
\Big],
\end{equation}
where \(\bar{Y}_{kij}=Y_{kij}/n_{kij}\). We estimate the identifiable
components of the HJA model by the constrained maximum-likelihood estimator (MLE)
\begin{equation}
\widehat{\theta}
=
(\widehat{\gamma},\widehat{\mu},\widehat{U},\widehat{V})
\in
\arg\min_{\theta\in\Theta_r}
\mathcal{L}_n(\theta).
\label{eq:mle}
\end{equation}
The constraint set \(\Theta_r\) encodes the normalizations used in \Cref{subsec:id} to make the consensus direction, judge-specific consensus sensitivities, and low-rank disagreement component statistically identifiable, so that \(\widehat{\theta}\) estimates the canonical representative of the decomposition \eqref{eq:score-decomp}.
The resulting optimization problem is nonconvex because of the bilinear factorization in \((U,V)\) and the nonlinear anchoring constraints in \(\Theta_r\). In addition, direct alternating updates over the judge and item blocks do not preserve the nonlinear identification constraints, including \(\mu^\top V=0\), \(N^{-1}V^\top V=I_r\), and \(K^{-1}U^\top U\).
To address this, we develop a proximal alternating algorithm for constrained maximum likelihood estimation that maintains the identifying structure of the decomposition while exploiting the blockwise smoothness of the loss function, shown in \Cref{alg:mle}. We combine proximal block updates with a re-anchoring step to obtain the canonical representative in \(\Theta_r\). 
Further implementation details, including initialization and the re-anchoring routine, are given in \Cref{app:est-main}.

\begin{algorithm}[t]
    \caption{
    Heterogeneous judge-aware (HJA) ranking by proximal anchored alternating MLE
    }
    \label{alg:mle}
    \small
    \begin{algorithmic}[1]
    \Require Data $\{(n_{kij},Y_{kij})\}_{(k,i,j)\in\Omega}$, rank \(r\), tolerance \(\varepsilon\), iterations \(T\), and parameters \(\{\tau^{(t)}\}_{t\geq 0}\).
    \Ensure Estimated parameter $\hat\theta$, Fisher information matrix $\cI_n(\hat\theta)$, and a smooth target estimand \(q\).
    \State $\theta^{(0)} \gets \textsc{Initialize}(\{(n_{kij},Y_{kij})\}_{(k,i,j)\in\Omega},r)$ based on \Cref{alg:initialize}.
    \For{\(t=0,1,\ldots,T-1\)}
        \State Set \(\tau \gets \tau^{(t)}\).
        \State Compute \((\tilde{\gamma},\tilde{U})\) as the unique minimizer of
        \(
        \mathcal{L}_n(\gamma,\mu^{(t)},U,V^{(t)})
        + \frac{\tau}{2}\|\gamma-\gamma^{(t)}\|_2^2
        + \frac{\tau}{2}\|U-U^{(t)}\|_F^2
        \)
        subject to \(\one_K^\top \gamma=K\) and \(\one_K^\top U=\zero_r^\top\). \label{line:argmin1}
        \State Compute \((\tilde{\mu},\tilde{V})\) as the unique minimizer of
        \(
        \mathcal{L}_n(\tilde{\gamma},\mu,\tilde{U},V)
        + \frac{\tau}{2}\|\mu-\mu^{(t)}\|_2^2
        + \frac{\tau}{2}\|V-V^{(t)}\|_F^2
        \)
        subject to \(\one_N^\top \mu=0\) and \(\one_N^\top V=\zero_r^\top\).\label{line:argmin2}
        \State \(\theta^{(t+1)} \gets \textsc{ReAnchor}(\tilde{\gamma},\tilde{\mu},\tilde{U},\tilde{V})\) based on \Cref{alg:reanchor}.
        
        \If{
        \(
        {
        |\mathcal{L}_n(\theta^{(t+1)})-\mathcal{L}_n(\theta^{(t)})|
        }/{
        [1+|\mathcal{L}_n(\theta^{(t)})|]
        }
        <\varepsilon
        \)
        }
            \State \textbf{break}
        \EndIf
    \EndFor
    \State \(\widehat{\theta}\gets \theta^{(t+1)}\) and $\cI_n(\hat\theta) \gets \textsc{EstimateFisherInfo}(\hat\theta)$ based on \Cref{alg:covariance}.
    \State Compute \(\widehat q=q(\widehat\theta)\), \(\widehat{\mathrm{se}}(\widehat q)\), and the Wald interval \(\widehat q\pm z_{1-\alpha/2}\widehat{\mathrm{se}}(\widehat q)\) using \Cref{alg:uq}.
    \end{algorithmic}
\end{algorithm}

To analyze \Cref{alg:mle} in the local nonconvex regime, we work in an identified chart around the target constrained MLE. 
\Cref{asm:algo-chart} ensures that the re-anchoring map is locally well defined and nondegenerate; \Cref{asm:algo-local-min} imposes an isolated constrained local minimizer with positive curvature in this chart; and \Cref{asm:algo-init} places the initialization in its basin of attraction.
Let \(\widehat{\theta}\) be defined by \eqref{eq:mle}, and let \(\{\theta^{(t)}\}_{t\geq 0}\) denote the anchored sequence generated by \Cref{alg:mle}. 
Throughout the remainder of \Cref{sec:est-uq}, we fix the local column signs of \((U,V)\) by the anchoring convention (e.g., by \Cref{rem:sign-id}) and work in the corresponding local identified parameterization.

\begin{theorem}[Locally linear convergence of \Cref{alg:mle}]
\label{thm:main_convergence}    
    Under \Cref{asm:algo-chart,asm:algo-local-min,asm:algo-init}, the iterates remain in the identified local chart, the likelihood values
    \(\{\mathcal{L}_n(\theta^{(t)})\}_{t\geq 0}\) are nonincreasing, and
    \(
        \theta^{(t)} \to \widehat{\theta} .
    \) 
    Moreover, there exist constants \(C<\infty\), \(\rho\in(0,1)\), and \(t_0\geq 0\) such that, for all \(t\geq t_0\),
    \(
        \|\theta^{(t)}-\widehat{\theta}\|_2
        +
        \mathcal{L}_n(\theta^{(t)})-\mathcal{L}_n(\widehat{\theta})
        \leq
        C\rho^{\,t-t_0}.
    \)
\end{theorem}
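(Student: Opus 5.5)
The plan is to view one iteration of \Cref{alg:mle} as a map $\theta^{(t+1)}=\mathcal{T}(\theta^{(t)})=\textsc{ReAnchor}\circ\Phi_2\circ\Phi_1(\theta^{(t)})$, where $\Phi_1,\Phi_2$ are the two linearly constrained proximal block updates. I would then combine three ingredients: a descent inequality, a local error bound in the identified chart, and a contraction argument around $\widehat{\theta}$.

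\textbf{Step 1: well-posedness and monotonicity.} By \Cref{asm:algo-chart}, the iterates lie in a neighborhood $\mathcal{N}$ of $\widehat\theta$ on which $\mathcal{L}_n$ is $C^2$ with bounded Hessian. For $\tau$ larger than the negative part of the blockwise Hessian, each proximal subproblem in lines~\ref{line:argmin1}--\ref{line:argmin2} is strongly convex on an affine set. Its minimizer is therefore unique and depends smoothly on the input, by the implicit function theorem applied to the KKT system with Lagrange multipliers for the linear centering constraints. Optimality against the previous point gives
\[
\mathcal{L}_n(\tilde\theta)+\tfrac{\tau}{2}\|\tilde\theta-\theta^{(t)}\|_2^2\le \mathcal{L}_n(\theta^{(t)}).
\]
The key observation about \textsc{ReAnchor} is that it leaves $S=\gamma\mu^\top+UV^\top$ unchanged. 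It only recovers the canonical representative via \Cref{prop:ident}(ii): it moves the row-mean into the consensus term, projects onto $P^\perp_{S^\top\one_K}$, and takes an SVD with signs fixed by \Cref{rem:sign-id}. Since $\mathcal{L}_n$ depends on $\theta$ only through $S$, we get $\mathcal{L}_n(\theta^{(t+1)})=\mathcal{L}_n(\tilde\theta)$, and monotonicity follows. By \Cref{cond:S-rank} the singular values stay distinct and $S^\top\one_K\neq0$ near $\widehat S$, so \textsc{ReAnchor} is a smooth, locally Lipschitz map near $\widehat S$. Together with a sublevel-set argument using \Cref{asm:algo-init} (the sublevel set inside $\mathcal{N}$ is separated from $\partial\mathcal{N}$), this keeps the iterates in the chart.

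\textbf{Step 2: fixed point and linearization.} Since $\widehat\theta$ is a constrained stationary point with positive-definite reduced Hessian $H$ in the chart (\Cref{asm:algo-local-min}), $\widehat\theta$ is a fixed point of $\mathcal{T}$. The Jacobian $J=D\mathcal{T}(\widehat\theta)$, restricted to the tangent space of $\Theta_r$, is the derivative of the anchoring map composed with a block Gauss--Seidel operator. Split $H+\tau I=(L+D_\tau)+L^\top$ with a block-lower part $L$ and block-diagonal $D_\tau$. The composed proximal step has linearization $M=I-(L+D_\tau)^{-1}(H)$, restricted to the chart. The standard symmetric Gauss--Seidel/SOR argument (Ostrowski--Reich) gives spectral radius $<1$ whenever $H\succ0$ and $D_\tau$ is positive definite. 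I must also handle the gauge directions: the block updates can move $\theta$ off the chart along the orbit of reparametrizations leaving $S$ fixed, namely $\gamma\mapsto a\gamma$, $\mu\mapsto\mu/a$, rotations, and shifts between the consensus and $UV^\top$ terms. The anchoring map is constant along these directions, so its derivative annihilates them, and only the chart-tangent part of $M$ survives. Showing that this projected operator $P_{\text{chart}}MP_{\text{chart}}$ still has spectral radius $\rho_0<1$ is, I expect, \emph{the main obstacle}. My first attempt would be a Lyapunov-function argument: the descent inequality from Step 1 gives a sufficient decrease $\mathcal{L}_n(\theta^{(t)})-\mathcal{L}_n(\theta^{(t+1)})\ge c\|\theta^{(t+1)}-\theta^{(t)}\|^2$, measured in the chart metric because the anchoring map is bi-Lipschitz onto its image. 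Alongside it I would prove a relative-error bound $\|\nabla_{\text{chart}}\mathcal{L}_n(\theta^{(t+1)})\|\le C\|\tilde\theta-\theta^{(t)}\|$ from the block optimality conditions and Lipschitz gradients.

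\textbf{Step 3: rates.} Local strong convexity in the chart gives a Polyak--\L{}ojasiewicz inequality $\mathcal{L}_n(\theta)-\mathcal{L}_n(\widehat\theta)\le \frac{1}{2\lambda_{\min}(H)}\|\nabla_{\text{chart}}\mathcal{L}_n(\theta)\|^2$. Combined with the two inequalities from Step 2, this yields
\[
\mathcal{L}_n(\theta^{(t+1)})-\mathcal{L}_n(\widehat\theta)\le \rho\{\mathcal{L}_n(\theta^{(t)})-\mathcal{L}_n(\widehat\theta)\}
\]
with $\rho=1/(1+c\lambda_{\min}/C^2)<1$, using the Attouch--Bolte template. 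Quadratic growth, $\|\theta-\widehat\theta\|^2\le (2/\lambda_{\min})\{\mathcal{L}_n(\theta)-\mathcal{L}_n(\widehat\theta)\}$, then converts this into $\|\theta^{(t)}-\widehat\theta\|\le C'\rho^{(t-t_0)/2}$. This proves convergence $\theta^{(t)}\to\widehat\theta$ and the stated bound after renaming $\rho^{1/2}$ as $\rho$. The index $t_0$ absorbs the initial iterations needed to enter the region where the Taylor bounds hold.
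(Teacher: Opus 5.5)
Your proposal is correct and follows essentially the paper's route: proximal sufficient decrease plus the $S$-invariance of \textsc{ReAnchor} gives monotonicity, and the sublevel-set/basin argument keeps the iterates in the chart. The positive-definite chart Hessian then gives a local PL (KL exponent $1/2$) inequality, which together with the block first-order optimality conditions yields the linear rate through the Attouch--Bolte template. The paper proves sequence convergence first, via a unique-stationary-point argument, whereas you read it off from quadratic growth; this difference is cosmetic.

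The Gauss--Seidel linearization in your Step 2 is an unnecessary detour. You should also pair the pre-anchor decrease $\mathcal{L}_n(\theta^{(t)})-\mathcal{L}_n(\theta^{(t+1)})\ge\frac{\tau}{2}\|\tilde\theta-\theta^{(t)}\|_2^2$ from Step 1 directly with your relative-error bound, rather than appealing to a bi-Lipschitz property of \textsc{ReAnchor}: that map is constant along gauge orbits, so it is not bi-Lipschitz.
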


\Cref{thm:main_convergence} suggests that the re-anchored alternating updates of \Cref{alg:mle} are stable within the identified local chart, i.e., the objective decreases monotonically, the iterates converge to the intended constrained MLE, and the final local convergence is geometric. See \Cref{app:est:subsec:conv} for its proof.

\subsection{Uncertainty quantification} \label{subsec:UQ}
Point estimates alone are insufficient for judge-aware ranking: in near-tie regimes, small perturbations in the observed comparisons can change the consensus ordering, judge-specific rankings, or estimated disagreement structure. 
We therefore quantify uncertainty for the score contrasts that determine ranking conclusions, rather than only for the latent components themselves. 
We study this problem in a fixed-rank repeated-comparison regime, where \(K\) and \(N\) may remain fixed while information accumulates through repeated comparisons on observed triples. 
This regime matches LLM evaluation settings in which the same judge panel can be queried repeatedly across prompts, random seeds, or judging instances, even when adding trusted judges is costly.

We impose the following regularity conditions for statistical inference in this setting. 
\Cref{asm:sampling} assumes stable asymptotic sampling fractions \(n_{kij}/n\to w_{kij}\); \Cref{asm:graph} requires the active comparison graph for each judge to be connected; \Cref{asm:score-space} places the true parameter in the local identifiable geometry; and \Cref{asm:local-chart-regular} ensures that the structured parameterization is locally regular after fixing the column signs of \((U,V)\). 
Together, graph connectivity identifies the row-centered score matrix from active pairwise differences, while local regularity ensures that this information passes to the consensus, sensitivity, and heterogeneity parameters.

With the local sign convention fixed as in \Cref{subsec:estimation} and
\(\Omega_w\) defined in \eqref{eq:active-graph}, we define the Fisher information
for the anchored true parameter \(\theta^*\in\Theta_r\) by
\begin{align}
    \cI(\theta^*)
    =
    \sum_{(k,i,j)\in\Omega_w}
    w_{kij} p_{kij}^*(1-p_{kij}^*)
    \nabla_{\theta}\eta_{kij}(\theta^*)
    \nabla_{\theta}\eta_{kij}(\theta^*)^\top .\label{eq:fisher-aggregate}
\end{align}
The next theorem shows the inferential properties of the constrained MLE and the delta-method limit of the smooth target functionals.

\begin{theorem}[Fixed-rank asymptotics]
\label{thm:asymptotics}
    Under
    \Cref{asm:sampling,asm:graph,asm:score-space},
    the constrained MLE satisfies
    \(\widehat{\theta}\pto\theta^*\). Additionally under \Cref{asm:local-chart-regular},
    \(\cI(\theta^*)\) is positive definite in the identified
    local parameterization, and
    \[
        \sqrt{n}\big(\widehat{\theta}-\theta^*\big)
        \dto
        \mathcal{N}\!\left(
            0,
            \cI(\theta^*)^{-1}
        \right).
    \]
    Consequently, for any smooth scalar target \(q:\Theta_r\to\mathbb{R}\),
    \[
        \sqrt{n}\{q(\widehat{\theta})-q(\theta^*)\}
        \dto
        \mathcal{N}\!\left(
            0,
            \nabla_{\theta}q(\theta^*)^\top
            \cI(\theta^*)^{-1}
            \nabla_{\theta}q(\theta^*)
        \right).
    \]
\end{theorem}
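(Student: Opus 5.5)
The proof follows the classical route for $M$-estimators with a fixed-dimensional parameter, working in the identified local chart. We argue at the level of the score matrix $S$ first and then transfer to $\theta$ through the canonical decomposition map of \Cref{prop:ident}.

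\textbf{Consistency.} Write the normalized loss as $n^{-1}\mathcal{L}_n(\theta)=\sum_{\Omega}(n_{kij}/n)[-\bar Y_{kij}\eta_{kij}+\log(1+e^{\eta_{kij}})]$. By \Cref{asm:sampling} and the law of large numbers, $\bar Y_{kij}\to p^*_{kij}$ on every active triple with $n_{kij}\to\infty$. The normalized loss therefore converges, uniformly on compacts, to the population criterion
\[
\ell(\theta)=\sum_{\Omega_w}w_{kij}\,\mathrm{KL}\big(p^*_{kij}\,\|\,\sigma(\eta_{kij}(\theta))\big)+\text{const}.
\]
This criterion is minimized exactly when $\eta_{kij}(\theta)=\eta_{kij}(\theta^*)$ on $\Omega_w$. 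By \Cref{asm:graph}, each active graph is connected, and this together with $S\one_N=\zero_K$ forces $S(\theta)=S^*$. \Cref{asm:score-space} and \Cref{prop:ident}(ii) then give $\theta=\theta^*$ once the sign convention is fixed.

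The main obstacle is that $\Theta_r$ is not compact and the criterion is not convex in $\theta$. I would handle this in two steps. First, I would establish consistency of $\widehat S=S(\widehat\theta)$ using convexity of the loss in $S$. The set of row-centered matrices of rank at most $r+1$ is closed, and the population loss is strictly convex in the active differences with a unique minimizer $S^*$, so a standard argmin/convexity argument (of the Hjort--Pollard type, restricted to a closed set) yields $\widehat S\pto S^*$. Second, I would transfer this to $\theta$ by continuity of the decomposition map $S\mapsto(\gamma,\mu,U,V)$ near $S^*$. The map is built from $\mu=K^{-1}S^\top\one_K$, then $\gamma=S\mu/\|\mu\|^2$, then the SVD of $SP^\perp_{\mu}$. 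It is continuous, indeed smooth, near $S^*$ because $S^{*\top}\one_K\neq0$ and the $r$ singular values are distinct and positive (\Cref{cond:S-rank}, guaranteed by \Cref{asm:score-space}). Singular-value continuity (Weyl) keeps these properties in a neighborhood, and the local sign convention removes the column-sign ambiguity. The continuous mapping theorem then gives $\widehat\theta\pto\theta^*$.

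\textbf{Asymptotic normality.} Since $\Theta_r$ is a smooth manifold near $\theta^*$, I would pick a local chart $\theta=\phi(\beta)$ with $\beta$ of dimension equal to the number of free parameters. By consistency, $\widehat\beta$ is an interior critical point with probability tending to one. A Taylor expansion of the score gives
\[
0=\nabla_\beta\mathcal{L}_n(\widehat\beta)=\nabla_\beta\mathcal{L}_n(\beta^*)+\nabla^2_\beta\mathcal{L}_n(\bar\beta)(\widehat\beta-\beta^*).
\]
The score is $n^{-1/2}\nabla_\beta\mathcal{L}_n(\beta^*)=-\sum\sqrt{n_{kij}/n}\,\sqrt{n_{kij}}(\bar Y_{kij}-p^*_{kij})\nabla_\beta\eta_{kij}$. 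It is a sum of independent centered binomial terms, so the Lindeberg CLT gives $\mathcal{N}(0,\cI_\beta)$ in the limit, where $\cI_\beta$ is the chart version of \eqref{eq:fisher-aggregate}. The residual terms $\sum(\bar Y-p^*)\nabla^2\eta$ in the Hessian vanish, and by continuity $n^{-1}\nabla^2\mathcal{L}_n(\bar\beta)\pto\cI_\beta$.

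Positive definiteness of $\cI_\beta$ comes from \Cref{asm:local-chart-regular}. We have $v^\top\cI_\beta v=\sum w\,p(1-p)(\nabla\eta^\top v)^2$, and this equals zero only if $d S(\theta^*)[d\phi\,v]$ has zero active differences. Connectivity forces that differential to be zero, and the immersion property of the chart, which is the content of \Cref{asm:local-chart-regular}, then forces $v=0$. Inverting gives $\sqrt n(\widehat\beta-\beta^*)\dto\mathcal N(0,\cI_\beta^{-1})$. Pushing this through $d\phi$, and reading $\cI(\theta^*)^{-1}$ as the inverse on the tangent space of the identified parameterization (that is, $d\phi\,\cI_\beta^{-1}d\phi^\top$), yields the stated limit.

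\textbf{Smooth targets.} The final claim for $q(\widehat\theta)$ follows from the delta method.
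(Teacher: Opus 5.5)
Your proposal is correct, and it reaches normality by a different route from the paper.

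\textbf{Consistency.} The two arguments are close here. The paper also proves $\widehat S\to S^*$ first. It confines minimizers to a compact ball through an explicit coercivity lemma, which telescopes along paths of each connected active graph, rather than through convexity. It then asserts continuity of $\Phi^{-1}$ directly from \Cref{prop:ident}. Your convexity argument gets the same confinement more cheaply. It uses convexity on the ambient row-centered space and needs only $M_n(\widehat S)\le M_n(S^*)$. Your explicit inverse construction (row mean, projection, SVD with simple singular values, Weyl) actually justifies the continuity the paper takes for granted.

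\textbf{Normality.} The paper claims that standard MLE theory in score space gives $\sqrt n(\widehat S-S^*)\to\mathcal N(0,\mathcal I_{\mathcal S}(S^*)^{-1})$. It then applies the delta method through $\Phi^{-1}$ together with $\mathcal I_\Theta=D\Phi^\top\mathcal I_{\mathcal S}D\Phi$. You instead expand the score directly in a chart of the constrained manifold. Your route is the sounder one, for three reasons:
\begin{itemize}
\item $\widehat S$ is constrained to the rank-$(r+1)$ structured set, which is a proper submanifold of the row-centered space whenever $r<R_{\max}$.
\item Its limiting covariance is therefore $J(J^\top\mathcal I_{\mathcal S}J)^{-1}J^\top$ with $J=D\Phi(\theta^*)$. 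This is singular on the row-centered space, so it cannot equal $\mathcal I_{\mathcal S}^{-1}$.
\item Pushing $\mathcal I_{\mathcal S}^{-1}$ through the differential of $\Phi^{-1}$ does not in general return $(J^\top\mathcal I_{\mathcal S}J)^{-1}$. The paper's intermediate step is literally valid only when $r=R_{\max}$.
\end{itemize}
Your chart computation yields $(d\phi^\top\mathcal I(\theta^*)\,d\phi)^{-1}$ directly. It also makes explicit the tangent-space reading of $\mathcal I(\theta^*)^{-1}$ that the statement requires, since $\mathcal I(\theta^*)$ is singular in ambient coordinates.

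\textbf{Positive definiteness.} Both proofs use the same argument. The graph-Laplacian kernel is the constants, row-centering kills them, and injectivity of $D\Phi$ pulls positivity back to the parameter. The paper's contribution is to present this as a clean two-layer picture: score-space nondegeneracy from connectivity, then parameter-space nondegeneracy from chart regularity.
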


\Cref{thm:asymptotics} shows that repeated judgments can support inference on the structured components of the HJA model even when the judge panel itself does not grow.
The Fisher information in \eqref{eq:fisher-aggregate} can be consistently estimated by the plug-in observed design estimator
\begin{equation}
\label{eq:fisher-plugin}
    \mathcal{I}_{n}(\widehat{\theta})
    =
    {\textstyle\sum}_{(k,i,j)\in\Omega}
    \frac{n_{kij}}{n}
    \widehat p_{kij}(1-\widehat p_{kij}
    \nabla_{\theta}\eta_{kij}(\widehat{\theta})
    \nabla_{\theta}\eta_{kij}(\widehat{\theta})^\top ,
\end{equation}
where
\(
    \widehat p_{kij}
    =
    \{1+\exp[-\eta_{kij}(\widehat{\theta})]\}^{-1}.
\)
Under the assumptions of \Cref{thm:asymptotics},
\(
    \mathcal{I}_{n}(\widehat{\theta})
    \pto
    \mathcal{I}(\theta^*).
\)
Thus Wald standard errors for smooth targets \(q(\theta)\) are given by
\(
    \widehat{\mathrm{se}}\{q(\widehat{\theta})\}
    =
    [
    \frac{1}{n}
    \nabla q(\widehat{\theta})^\top
    \mathcal{I}_n(\widehat{\theta})^{-1}
    \nabla q(\widehat{\theta})
    ]^{1/2}.
\)
Typical choices of \(q\) include consensus contrasts \(\mu_i-\mu_j\), judge-specific contrasts \(S_{ki}-S_{kj}\), consensus-sensitivity parameters \(\gamma_k\), pairwise probabilities \(p_{kij}\), and smooth summaries of \(UV^\top\).
These are the natural inferential objects for ranking: full rankings are discontinuous at ties, whereas pairwise score differences remain smooth and directly quantify ranking uncertainty.

\vspace{-2mm}
\section{Empirical Evaluation}\label{sec:emp}

\subsection{Benchmarking across synthetic and real data}\label{subsec:comparison}
\noindent\textbf{Synthetic simulations.}
We first use synthetic data to isolate the effect of structured judge heterogeneity under a known latent score matrix and known consensus-plus-heterogeneity decomposition. 
Data are generated from \eqref{eq:binomial-model} and \eqref{eq:score-decomp} with \(N=8\) items, \(K=4\) judges, rank \(r=1\), and 50 random seeds. We compare 4 methods: HJA, JA-Ranking~\citep{xu2026judgeaware}, pooled BTL \((\gamma=\one_K,r=0)\), and an unstructured judge-wise BTL estimator followed by a rank-\((r+1)\) truncated SVD of the estimated score matrix; and consider 2 regimes: a sample-size experiment varying the total number of comparisons \(n_{\mathrm{cmp}}\) at fixed heterogeneity scale \(h=1\), and a heterogeneity experiment varying \(h\) at fixed \(n_{\mathrm{cmp}}=800\), where \(h\) scales the residual component \(UV^\top\). 
Full data-generating details, metric definitions, convergence diagnostics, and sensitivity-only simulations are deferred to \Cref{app:simulation}.

Figure~\ref{fig:synthetic} shows that HJA improves score recovery, consensus ranking, uncertainty calibration, and judge-specific sign prediction as comparison information increases. The advantage is most pronounced as heterogeneity grows: pooled BTL and JA-Ranking are structurally misspecified because they cannot represent residual judge-item preference structure, and hence concentrate around biased score matrices in larger samples. 
The unstructured BTL+SVD baseline is less accurate and has higher score-level variance because it estimates the full score matrix before imposing low-rank structure. 
The ablation in \Cref{app:subsec:ablation} further shows that the structured optimization and reanchor step are necessary to stabilize the identified decomposition and improve score-level recovery.

\begin{figure}[!t]
    \centering
    \includegraphics[width=0.9\linewidth]{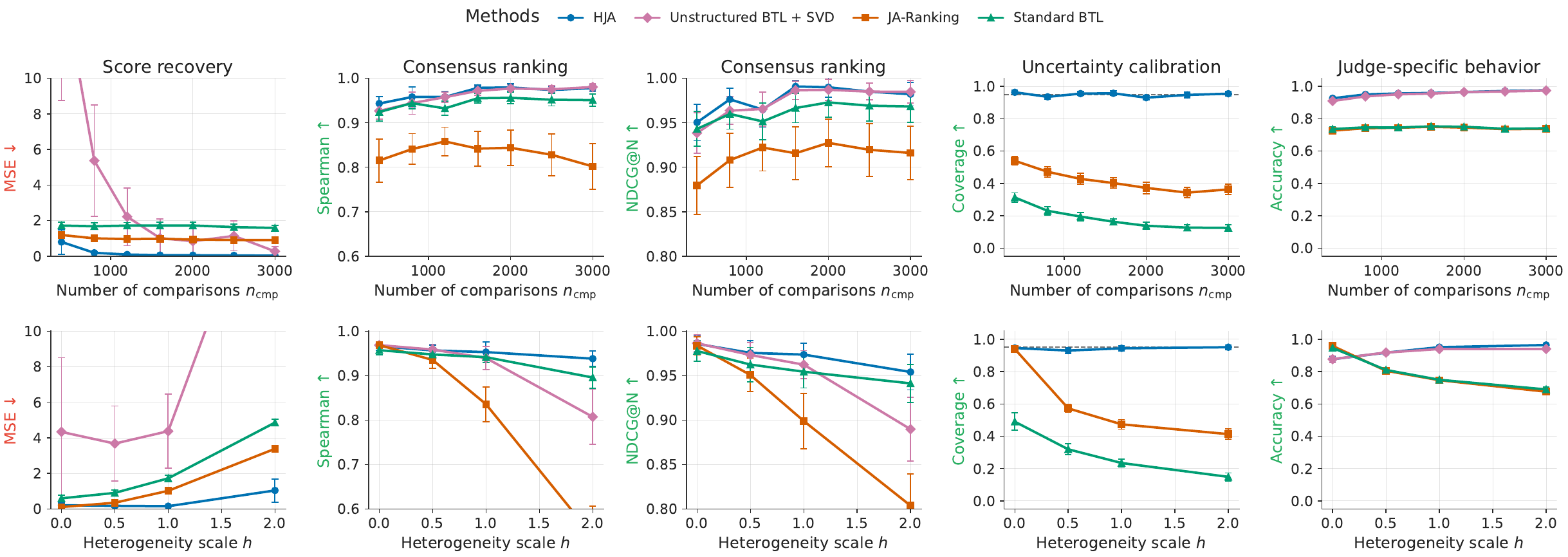}
    \caption{
    Benchmarking on synthetic simulations, with (a) varying numbers of pairwise comparisons \(T \in \{400,800,1200,1600,2000,2500,3000\}\) at heterogeneity scale 1.0, and (b) varying heterogeneity scales over \(\{0,0.5,1,2\}\) and fixed \(T=800\) comparisons. 
    Each panel reports the mean over 50 independent repetitions with 95\% normal-approximation error bars. Columns show performance on score recovery (MSE), consensus rank quality (Spearman and NDCG@N), uncertainty calibration (nominal 95\% score-entry coverage), and judge-specific behavior (sign accuracy).
    }\label{fig:synthetic}
    \vspace{-2mm}
\end{figure}

\begin{table*}[!b]
    \centering
    \scriptsize
    \setlength{\tabcolsep}{2.4pt}
    \renewcommand{\arraystretch}{1.0}
    \caption{
    Real-data comparison. Robustness is evaluated under increasing noisy-judge perturbation; near-tie accuracy is evaluated across increasingly less ambiguous near-tie groups.
    Entries are mean \(\pm\) standard deviation in 20 random seeds; bold marks the best mean per dataset-metric group.
    }\label{tab:dataset_comparison}
    \begin{tabular*}{\textwidth}{@{\extracolsep{\fill}}llcccccccc@{}}
    \toprule
    \multirow{2}{*}{\textbf{Dataset}}
    & \multirow{2}{*}{\textbf{Method}}
    & \multicolumn{3}{c}{\textbf{Noisy-judge robustness}}
    & \multirow{2}{*}{\textbf{Hold-out accuracy}}
    & \multicolumn{3}{c}{\textbf{Near-tie accuracy}} \\
    \cmidrule(lr){3-5} \cmidrule(lr){7-9}
    & & Low & Mid & High & & Closest & Mid & Farthest \\
    \midrule
    
    \multirow{3}{*}{\shortstack[l]{Chatbot\\Arena}}
    & HJA
    & \bestpmv{0.95}{0.07}
    & \bestpmv{0.85}{0.09}
    & \bestpmv{0.82}{0.07}
    & \bestpmv{0.65}{0.01}
    & \bestpmv{0.68}{0.05}
    & \bestpmv{0.69}{0.04}
    & \bestpmv{0.67}{0.06} \\
    & JA-Ranking
    & \pmv{0.77}{0.09}
    & \pmv{0.51}{0.12}
    & \pmv{0.46}{0.11}
    & \pmv{0.58}{0.01}
    & \pmv{0.58}{0.03}
    & \pmv{0.54}{0.05}
    & \pmv{0.50}{0.06} \\
    & BTL
    & \pmv{0.47}{0.16}
    & \pmv{0.33}{0.13}
    & \pmv{0.21}{0.11}
    & \pmv{0.58}{0.01}
    & \pmv{0.57}{0.04}
    & \pmv{0.54}{0.05}
    & \pmv{0.51}{0.06} \\
    
    \cmidrule{2-9}
    
    \multirow{3}{*}{MT-Bench}
    & HJA
    & \bestpmv{1.00}{0.00}
    & \bestpmv{1.00}{0.00}
    & \bestpmv{1.00}{0.00}
    & \bestpmv{0.76}{0.01}
    & \bestpmv{0.74}{0.02}
    & \bestpmv{0.84}{0.01}
    & \bestpmv{0.89}{0.01} \\
    & JA-Ranking
    & \pmv{0.67}{0.00}
    & \pmv{0.67}{0.00}
    & \pmv{0.67}{0.00}
    & \pmv{0.70}{0.01}
    & \pmv{0.62}{0.02}
    & \pmv{0.78}{0.01}
    & \pmv{0.86}{0.01} \\
    & BTL
    & \pmv{0.92}{0.15}
    & \pmv{0.87}{0.17}
    & \pmv{0.85}{0.20}
    & \pmv{0.70}{0.01}
    & \pmv{0.62}{0.02}
    & \pmv{0.78}{0.01}
    & \pmv{0.86}{0.01} \\
    
    \cmidrule{2-9}
    
    \multirow{3}{*}{UltraFeedback}
    & HJA
    & \pmv{0.81}{0.06}
    & \bestpmv{0.65}{0.11}
    & \bestpmv{0.67}{0.13}
    & \bestpmv{0.70}{0.01}
    & \bestpmv{0.69}{0.05}
    & \bestpmv{0.68}{0.04}
    & \bestpmv{0.71}{0.06} \\
    & JA-Ranking
    & \bestpmv{0.83}{0.11}
    & \pmv{0.63}{0.11}
    & \pmv{0.62}{0.16}
    & \pmv{0.62}{0.01}
    & \pmv{0.55}{0.06}
    & \pmv{0.58}{0.06}
    & \pmv{0.60}{0.04} \\
    & BTL
    & \pmv{0.57}{0.09}
    & \pmv{0.32}{0.12}
    & \pmv{0.30}{0.10}
    & \pmv{0.61}{0.01}
    & \pmv{0.55}{0.06}
    & \pmv{0.58}{0.06}
    & \pmv{0.60}{0.04} \\
    
    \cmidrule{2-9}
    
    \multirow{3}{*}{In-house}
    & HJA
    & \bestpmv{0.94}{0.05}
    & \bestpmv{0.83}{0.05}
    & \pmv{0.33}{0.05}
    & \bestpmv{0.62}{0.01}
    & \bestpmv{0.60}{0.07}
    & \bestpmv{0.56}{0.10}
    & \pmv{0.58}{0.10} \\
    & JA-Ranking
    & \pmv{0.65}{0.06}
    & \pmv{0.43}{0.09}
    & \bestpmv{0.36}{0.08}
    & \pmv{0.61}{0.01}
    & \pmv{0.58}{0.07}
    & \bestpmv{0.56}{0.08}
    & \bestpmv{0.59}{0.09} \\
    & BTL
    & \pmv{0.35}{0.08}
    & \pmv{0.18}{0.06}
    & \pmv{0.12}{0.04}
    & \pmv{0.61}{0.01}
    & \pmv{0.57}{0.07}
    & \bestpmv{0.56}{0.08}
    & \bestpmv{0.59}{0.09} \\
    
    \bottomrule
    \end{tabular*}
\end{table*}

\noindent\textbf{Real-data robustness evaluation.}
\label{subsec:real-evaluation}
We next use four representative real-world evaluation datasets to test whether modeling structured judge heterogeneity improves ranking under noisy, uneven, and near-tie judgments: Chatbot Arena \citep{chiang2024chatbot}, MT-Bench \citep{zheng2023mtbench}, UltraFeedback \citep{cui2024ultrafeedback}, and an in-house judgment dataset \citep{xu2026judgeaware}. 
We compare the induced consensus rankings by HJA, JA-Ranking, and BTL along three axes: robustness to injected noisy judges, held-out pairwise prediction, and accuracy on empirically ambiguous near-tie pairs, where automatic rankings are especially fragile \citep{guerdan2025indeterminacy,gao2025reevaluating}.
Noisy-judge robustness measures how stable the estimated ranking remains as additional noisy judges are injected; held-out prediction evaluates record-level generalization; and near-tie accuracy focuses on item pairs with empirical win rates close to \(0.5\), where small modeling biases can flip the induced ordering. 
see \Cref{app:subsec:real-evaluation} for dataset and experimental details. 

\Cref{tab:dataset_comparison} shows that HJA performs strongly on the three public benchmarks,  Chatbot Arena, MT-Bench, and UltraFeedback, with clear gains in held-out prediction and consistent improvements in robustness and near-tie accuracy. On the in-house dataset \cite{xu2026judgeaware}, though HJA does not dominate JA-Ranking under the strongest perturbations or all near-tie slices, it remains competitive. 
Overall, these results support the benefit of modeling structural judge heterogeneity beyond scalar sensitivity.

\vspace{-2mm}
\subsection{Consensus, sensitivity, and disagreement in real data}
\label{subsec:real-analysis}

\begin{figure}[!t]
    \centering
    \includegraphics[width=0.95\linewidth]{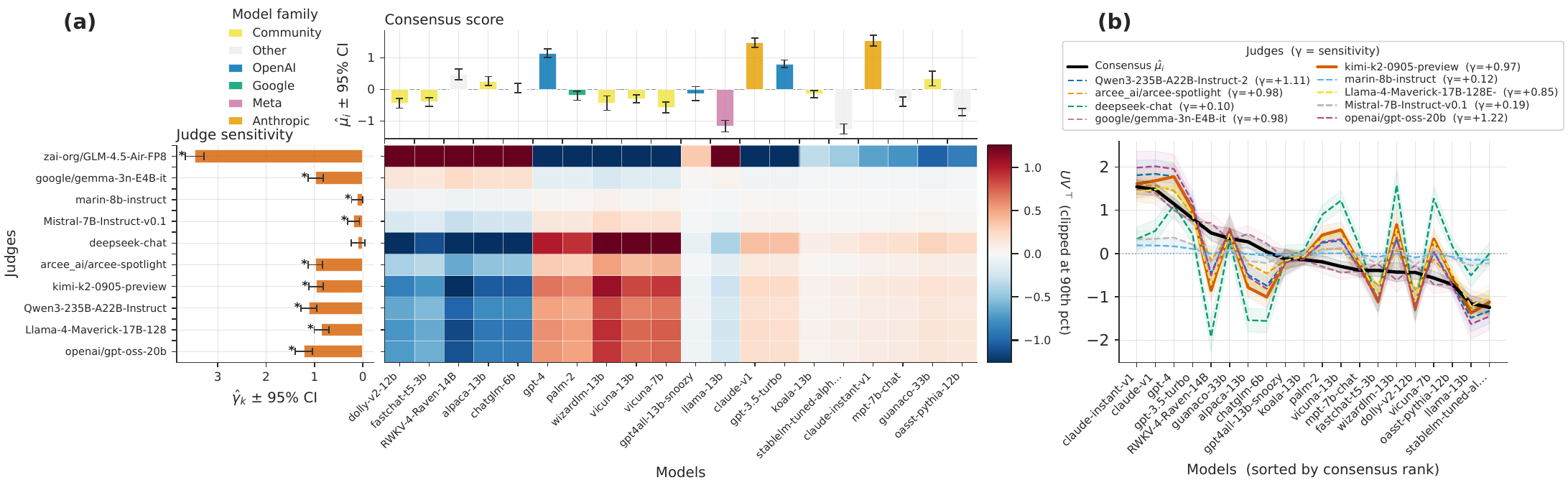}
    \caption{
    HJA diagnostic analysis on Chatbot Arena.
    (a) Heatmap shows the heterogeneous preference matrix \(\hat U\hat V^\top\); barplots show judge sensitivity \(\hat\gamma\) and consensus score \(\hat\mu\), with 95\% confidence intervals.
    The models are colored by LLM model families.
    (b) Judge-specific score profiles \(\hat S_{k,:}\), ordered by consensus score \(\hat\mu\) with the high-leverage outlier excluded.
    }
    \label{fig:chatbot_arena}
    \vspace{-2pt}
\end{figure}

Beyond benchmark comparison, we use HJA as a full-data diagnostic tool for real multi-judge evaluation. 
We focus on Chatbot Arena, a large crowdsourced LLM comparison dataset, and report analogous analyses for the other real datasets in \Cref{app:subsec:interpretation}. 
The fitted decomposition separates three quantities: the consensus ranking, judge sensitivity, and residual judge-model interactions. As shown in \Cref{fig:chatbot_arena}(a), this separation is useful for judge assessment: a large \(\hat\gamma_k\) indicates a strong response to the consensus direction, but reliability also requires small residual leverage. 
The judge \texttt{zai-org/GLM-4.5-Air-FP8} has both unusually large consensus sensitivity and unusually large residual disagreement, making it a high-leverage judge rather than simply a strong judge; this agrees with the broader diagnostic in \Cref{fig:leverage-glm}.
HJA also provides judge-specific rankings through the fitted score profiles
\(
\hat S_{k,:}=\hat\gamma_k\hat\mu^\top+\hat U_{k,:}\hat V^\top .
\)
\Cref{fig:chatbot_arena}(b) shows these profiles ordered by the consensus ranking. Unlike fitting each judge independently, HJA estimates these rankings jointly through a shared consensus direction and a low-rank disagreement structure. This pooling is important in sparse or noisy pairwise data: judge-specific deviations are allowed, but they are regularized through common latent disagreement directions rather than estimated as unrelated score vectors.

Finally, the residual component \(\hat U\hat V^\top\) localizes structured judge-model affinities. After identifying high-leverage outliers, we can refit or visualize the model with those judges filtered to obtain a more stable consensus and a less saturated interaction heatmap; see \Cref{fig:chatbot-uvt-heatmap}. The same residual structure can be inspected by clustering judges and models, as in \Cref{fig:bicluster,fig:biclustering-all}, to reveal groups with similar evaluation behavior. These analyses suggest that HJA can flag unreliable or high-leverage judges, recover judge-specific ranking preferences, and expose systematic affinity patterns, while treating self- or family-preference effects as diagnostic hypotheses rather than universal conclusions.

\vspace{-2mm}
\section{Conclusion and discussion}
\label{sec:conclusion}
\vspace{-1mm}
We introduced HJA ranking, an identifiable multi-judge BTL framework that separates consensus ranking, judge sensitivity, and structured residual disagreement. 
With constrained estimation and fixed-panel uncertainty quantification, HJA yields rankings that are more robust, interpretable, and diagnostic under heterogeneous judgments. 
The framework is not intended to certify rankings under adversarial judge panels: severe coordinated bias may affect the consensus estimate, and the nonconvex optimization remains locally initialized. Hence, deliberate manipulations may cause HJA to output biased rankings, which poses a potential negative impact.
Thus, HJA should be viewed as a statistically grounded diagnostic layer for heterogeneous evaluation before benchmark design.

\putbib[references]

\clearpage

\appendix
\counterwithin{theorem}{section}
\counterwithin{equation}{section}

\renewcommand{\thesection}{\Alph{section}}
\setcounter{table}{0}
\renewcommand{\thetable}{\thesection\arabic{table}}
\setcounter{figure}{0} 
\renewcommand\thefigure{\thesection\arabic{figure}}
\setcounter{algorithm}{0}
\renewcommand{\thealgorithm}{\thesection.\arabic{algorithm}}
\renewcommand{\theHalgorithm}{\thesection.\arabic{algorithm}}

\begin{center}
\Large
{\bf Appendix}
\end{center}

This serves as an appendix to the main paper.
Below, we provide an outline for the appendix along with a summary of the notation used in the main paper and the appendix.

\paragraph{Organization.}
The content of the appendix is organized as follows.

\begin{table}[!ht]
\centering \small
\begin{tabularx}{0.95\textwidth}{l l D}
    \toprule    
    \multicolumn{2}{c}{\textbf{Appendix}} & \textbf{Content} \\
    \midrule
    \Cref{app:alternative-formulations} &  & Relation to alternative formulations.\\\midrule \addlinespace[0.5ex]
    \Cref{app:ident} & \ref{app:ident:prop:ident} & Proof of \Cref{prop:ident} on model identification.\\
    \midrule \addlinespace[0.5ex]     
    \multirow{3}{*}{\Cref{app:est}} &\ref{app:subsec:rank-selection} & Rank selection criteria.\\
    & \ref{app:est-main}& Algorithms and implementation details. \\    
    & \ref{app:est:subsec:conv} & Proof of \Cref{thm:main_convergence} on convergence analysis. \\    
    \midrule \addlinespace[0.5ex]
    \multirow{4}{*}{\Cref{app:stat}} & \ref{app:consistency} & Proof of consistency part of \Cref{thm:asymptotics}. \\
    & \ref{app:normality} & Proof of asymptotic normality part in \Cref{thm:asymptotics} via graph theory. \\
    & \ref{app:lemmas} & Auxiliary lemmas used for the proofs.\\
    & \ref{subsec:partial-sample} & Partial-sampling (pooled-graph treatment) regime. \\\midrule \addlinespace[0.5ex]

    \multirow{5}{*}{\Cref{app:simulation}} & \ref{app:sim-main-dgp} & Main heterogeneous DGP specification.\\
    & \ref{app:sim-protocol} & Comparison design, fitting protocol, and metrics.\\
    & \ref{app:subsec:sim-converge} & Convergence diagnostics.\\
    & \ref{app:sim-ja-dgp} & JA-Ranking DGP specification and results.\\
    & \ref{app:subsec:ablation} & Ablation study. \\\midrule \addlinespace[0.5ex]
    
    \multirow{2}{*}{\Cref{app:sec:real}} & \ref{app:subsec:real-evaluation} & Real-data evaluation details.\\
    & \ref{app:subsec:interpretation} & Interpretation on real data analyses. \\

    \bottomrule
\end{tabularx}
\end{table}
\addcontentsline{toc}{part}{\appendixname}

\paragraph{Notation.}
An overview of some general notations used in the main paper and the appendix is as follows.

In \(\RR^d\), the $j$-th standard basis vector is denoted by $e_j$ and the zero (or all-one) vector is denoted by \(\zero_d\) (or \(\one_d\)) or simply \(\zero\) (or \(\one\)) if the dimension is clear from the context.
The cardinality of a set $\cS$ is denoted by $|\cS|$. The indicator function is denoted by $\ind_{\{\cdot\}}$.



We use ``$o$'' and ``$\cO$'' to denote the little-o and big-O notations; ``$\op$'' and ``$\Op$'' are their probabilistic counterparts. For sequences $\{a_n\}$ and $\{b_n\}$, we write $a_n\lesssim b_n$ if $a_n=\cO(b_n)$; and $a_n\asymp b_n$ if $a_n=\cO(b_n)$ and $b_n=\cO(a_n)$. 
Convergence almost surely, in probability, and in distribution is denoted by ``$\xrightarrow{\mathrm{a.s.}}$'', ``$\pto$'', and ``$\dto$'', respectively. When ``positivity'' property is imposed or stated on a matrix, it means positive definiteness. We use $\mathrm{KL}(p\,\|\,q)$ to denote the Kullback–Leibler (KL) divergence between two distributions $p$ and $q$.

\paragraph{Reproducibility.} All experiment results in this paper can be reproduced with datasets openly accessible. The code files can be downloaded from \url{https://anonymous.4open.science/r/HJA-Ranking-37BF/}.

\clearpage
\section{Relation to Alternative Formulations}
\label{app:alternative-formulations}

\noindent\textbf{Relation to JA-Ranking.}
JA-Ranking \citep{xu2026judgeaware} and related judge-aware BTL models represent judge heterogeneity primarily through a judge-specific discrimination or sensitivity parameter, leading to a sensitivity-only score matrix of the form \(S=\gamma\mu^\top\). HJA contains this model as the special case \(r=0\), up to the normalization \(1_K^\top\gamma=K\). The main distinction is that, for \(r>0\), HJA allows judges to deviate from the consensus direction through the residual component \(UV^\top\). Thus \(\gamma_k\) should be interpreted as judge \(k\)'s sensitivity to the consensus ranking direction, rather than as a global sensitivity parameter scaling all aspects of the judge-specific score vector. A judge can be weakly aligned with the consensus while still having strong, structured preferences in the residual disagreement subspace. This differs from formulations in which a single positive discrimination parameter scales the entire latent score vector.
If one instead wishes $\gamma$ to scale the full judge-specific score, one may set
\(
U=\diag(\gamma)\tilde U
\)
and write
\[
S
=
\gamma\mu^\top+\diag(\gamma)\tilde U V^\top
=
\diag(\gamma)\bigl(\one_K\mu^\top+\tilde U V^\top\bigr).
\]
Under this alternative parameterization, $\gamma$ plays the role of a global sensitivity parameter, since it scales both the consensus and heterogeneous components simultaneously.

Additionally, our formulation does not impose strict positivity on the coefficients $\gamma_k$. This is useful in settings with noisy or partially adversarial judges: $\gamma_k=0$ means that judge $k$ does not respond to the consensus direction and contributes only through structured disagreement around consensus, whereas $\gamma_k<0$ indicates systematic opposition to the consensus signal. By contrast, JA-ranking imposes a strict positivity restriction on the corresponding discrimination parameters.

Thirdly, JA-Ranking assumes a balanced comparison design in which each unordered model pair is sampled uniformly for each judge. HJA instead works directly with the observed triple set \(\Omega\) in \eqref{eq:Omega}, allowing arbitrary and unbalanced cell counts \(n_{kij}\), including designs in which some judge--pair cells are observed much more often than others or are missing entirely. This flexibility is important in realistic LLM evaluation pipelines, where certain comparisons may be difficult to obtain because of API rate limits, model availability, annotation cost, safety filters, or adaptive sampling policies that focus effort on near-tie or high-disagreement pairs. In the main theory, identification and inference are guaranteed when each judge-specific active comparison graph is connected. Beyond this regime, \Cref{subsec:partial-sample} gives a pooled-graph treatment: disconnected judge-specific graphs can still be handled when the pooled comparison graph is connected and the structural low-rank constraints rule out unidentified local perturbations.

\noindent\textbf{Relation to low-rank BTL and collaborative ranking models.}
Low-rank preference and collaborative ranking models \citep{rajkumar2016can,jin2020rank,wu2015clustering} also exploit low-dimensional structure in heterogeneous preferences. However, HJA uses low rank for a different inferential purpose. Rather than treating the factorization of the full score matrix as the primary object, HJA first defines the consensus direction as the row-average score vector and then constrains the residual factor \(UV^\top\) to be orthogonal to that consensus direction. Consequently, the low-rank component represents structured disagreement around consensus, not an arbitrary rotated latent preference space. The normalization and rank conditions in Proposition~\ref{prop:ident} make this separation identifiable from the BTL comparison law, up to the usual column-wise sign ambiguity of the factors.

\noindent\textbf{Relation to two-step BTL plus spectral decomposition.}
A natural alternative is to first estimate an unrestricted judge-specific BTL score matrix \(\hat S\), and then recover \((\hat\gamma,\hat\mu,\hat U,\hat V)\) by applying the decomposition in Proposition~\ref{prop:ident} followed by a truncated singular value decomposition (SVD). This two-step procedure is useful as a diagnostic and as a computational baseline, especially when each judge has many repeated comparisons over a dense comparison graph. However, it separates score estimation from the structural constraints and therefore does not exploit the low-rank consensus-plus-heterogeneity model during likelihood fitting. In finite samples, this can lead to high variance in the estimated score matrix before the spectral step is applied, as shown in \Cref{fig:synthetic}, when the heterogeneity level increases. 
By contrast, the constrained MLE estimates the identifiable components jointly, borrows strength across judges and items through the structural model, and directly supports the Fisher-information-based uncertainty quantification.

\noindent\textbf{Relation to random graph theory.}
The inference properties of HJA rely on the connectivity structure of the active comparison graphs $\mathcal{G}_k^w$ (where edge $(i,j)$ exists if $(k,i,j)\in\Omega$) for each judge $k$. The weighted graph-Laplacian framework in \Cref{prop:fisher_graph_nondegenerate} shows that Fisher information nondegeneracy is controlled by properties of the graph, similar to spectral properties in random graph theory \citep{chung1997spectral}. With these techniques, we can establish the properties of MLE without requiring a fully connected comparison graph, which is important for real-world evaluation settings, as judges may only compare subsets of items.

In the asymptotic regime where the comparison graph $\mathcal{G}$ (where edge $(i,j)$ exists if $(k,i,j)\in\Omega$ for some $k$) is fixed, connectivity is imposed as a deterministic structural assumption (\Cref{asm:graph}). However, when comparisons are allocated randomly, or the graph grows dynamically (e.g., active learning over time), one could appeal to results from random graph theory, such as properties of Erdős-Rényi \citep{erdos1959graph}, to establish that connectivity holds with high probability. 
Such connections suggest that HJA's asymptotic theory could be extended to settings with random or growing comparison graphs, though the current paper focuses on fixed-panel designs common in LLM evaluation.

\clearpage
\section{Identification}\label{app:ident}
\subsection[Proof of Proposition 1]{Proof of \Cref{prop:ident}}\label{app:ident:prop:ident}

\begin{proof}[Proof of \Cref{prop:ident}]
    The proof consists of three parts. First for (i), given any set of pairwise probabilities, $S$ is uniquely identified if each of its row is restricted to have zero sum (i.e., Eq. \eqref{eq:S-id}). The second part proves that the mapping induced by \eqref{eq:score-decomp} from $\Theta_r$ to the space of $S$ is injective. Finally, Part (iii) imposes \eqref{eq:S-id} on the space of $S$, which makes such mapping bijective (one-to-one). Part (iii), combined with Part (ii), establishes the unique existence, which is identification.
    \paragraph{Part (i): Identification of BTL score matrix.}
    For each fixed judge $k$, the pairwise probabilities determine all score differences $S_{ki}-S_{kj}$ through the inverse logit \eqref{eq:binomial-model}. Hence, if $\tilde{S}$ is another matrix generating the same pairwise probabilities, then $$\tilde{S}_{ki}-\tilde{S}_{kj}=S_{ki}-S_{kj} \qquad \text{for all } i,j.$$
    Therefore, for each $k$, the vector $\tilde{S}_{k,:}-S_{k,:}$ is constant across items, i.e., there exists $c_k\in\RR$ such that
    $$\tilde{S}_{k,:}=S_{k,:}+c_k\one_N^\top.$$
    Right-multiplying both sides by $\one_N$ yields 
    $$\tilde{S}_{k,:}\one_N=S_{k,:}\one_N+c_k\one_N^\top\one_N=S_{k,:}\one_N+Nc_k.$$
    But it is required that $S\one_N=\zero$, with the same applying to $\tilde{S}$, and therefore $c_k=0$ for all $k\in\{1,\ldots,K\}$, i.e., $S=\tilde{S}$.

    \paragraph{Part (ii): Identification of latent components.}
    Below, we prove that if $S=\gamma\mu^\top+UV^\top$ with $(\gamma,\mu,U,V)\in\Theta_r$, the parameters are uniquely identified subject to some sign changes to be specified later.

    Multiplying both sides of \eqref{eq:score-decomp} with $\frac{1}{K}\one_K^\top$ yields 
    \begin{equation} \label{eq:mu}
    \frac{1}{K}\one_K^{\top}S = \frac{1}{K}\one_K^{\top}\gamma\mu^{\top} + \frac{1}{K}\one_K^{\top}UV^{\top} = \mu^{\top},
    \end{equation}
    so $\mu$ is immediately identified by the row average of $S$.
    
    To further separate the consensus item direction from the heterogeneity item subspace, by \Cref{cond:norm}\ref{cond:orth},
    \begin{equation} \label{eq:gamma}
        S\mu = \gamma\|\mu\|^2 + U(V^{\top}\mu) = \gamma\|\mu\|^2 \Longrightarrow \gamma = \frac{S\mu}{\|\mu\|^2},
    \end{equation}
    and thus $\gamma$ is identified.

    We proceed to identify $U$ and $V$ from $\tilde{S}:=S-\gamma\mu^\top=UV^\top$. Write $U=[u_1,\dots,u_r]$ and $V=[v_1,\dots,v_r]$. By \Cref{cond:norm}\ref{cond:subspace}, the vectors $v_1,\dots,v_r$ are orthonormal after normalization by $\sqrt{N}$, and the columns of $U$ are mutually orthogonal with $\|u_j\|_2^2=Kd_j$. Together with the orthogonality condition (\Cref{cond:norm}\ref{cond:orth}), this yields the orthogonal decomposition
    $$\tilde{S}=\sum_{j=1}^r u_jv_j^\top,$$
    where the left vectors $\gamma,u_1,\dots,u_r$ are mutually orthogonal and the right vectors $\mu,v_1,\dots,v_r$ are mutually orthogonal.
    Therefore, after normalization, this is a SVD of $\tilde S$. Its nonzero singular values are
    $$
    \sigma_j=\|u_j\|_2\|v_j\|_2=\sqrt{KN\,d_j}, \qquad j=1,\dots,r.
    $$
    The strict inequalities $d_1>\cdots>d_r>0$ imply that $\sigma_1,\dots,\sigma_r$ are pairwise distinct and therefore simple.

    Now let $\tilde{S}=\tilde{U}\tilde{V}^\top$ be another representation satisfying the same conditions. The same argument shows that this representation also produces an SVD of $\tilde{S}$ with the same simple nonzero singular values. By the uniqueness of singular vectors associated with simple singular values, there exist signs $\varepsilon_1,\dots,\varepsilon_r\in\{\pm1\}$ such that
    $$ 
        \tilde{u}_j=\varepsilon_j u_j,\qquad
        \tilde{v}_j=\varepsilon_j v_j,\quad j=1,\dots,r.
    $$
    For each $j$, the pair $(\tilde{u}_j,\tilde{v}_j)$ may be replaced by $(-\tilde{u}_j,-\tilde{v}_j)$ without changing the product $\tilde{u}_j\tilde{v}_j^\top$. And therefore $U$ and $V$ are identified up to the sign changes in their corresponding columns.
    
    \begin{remark}[Sign of factors and loadings] \label{rem:sign-id}
        In subsequent analysis where such sign ambiguities also need ruling out, we may require the first non-zero elements of each column of $U$ to be positive, so that the signs of $U$ and $V$ are anchored and fixed.
    \end{remark}
    
    \paragraph{Part (iii): Existence of parameters recovered from $S$.} 
    We are left to show that given any matrix $S\in\RR^{K\times N}$ satisfying $S\one_N=\zero$ and \Cref{cond:S-rank}, these parameters can indeed be constructed. Let $\mu=\frac{1}{K}S^\top\one_K$, and we have $\one_N^\top\mu=0$.
    Then, $$\gamma:=\frac{1}{\|\mu\|^2}S\mu=\frac{K}{\|\one_K^\top S\|^2}SS^\top\one_K$$ satisfies $\one_K^\top \gamma = K$, which is the second part of \Cref{cond:norm}\ref{cond:scale}.

    After recovering the consensus part, the remaining $\tilde{S}=S-\alpha SS^\top\one_K \one_K^\top S$, where $\alpha=1/\|\one_K^TS\|^2$. Hence, \(\tilde{S}=SP_{S^\top \one_K}^\perp\) where $P_A^\perp$ is the projection matrix onto the orthogonal complement of $\text{col}(A)$.
    Therefore, $\one_K^{\top}\tilde{S} = \one_K^\top SP_{S^\top \one_K}^\perp=\zero^\top$ automatically holds. 
    Take the SVD of $\tilde{S}$ to get $\tilde{S}=P\Sigma R^\top$. Let $V=\sqrt{N}R$ and $U= \frac{1}{\sqrt{N}} P\Sigma$, and \Cref{cond:norm}\ref{cond:subspace} is satisfied. The rank $r$ is also determined to be $\rank(\tilde{S})$.

    To verify the first part of \Cref{cond:norm}\ref{cond:scale}, we left-multiply $\one_K^\top$ to the equation $\tilde{S}=UV^\top$ and get $(\one_K^\top U)V^\top=\one_K^{\top}\tilde{S}=\zero^\top$, i.e., $V (\one_K^\top U)^\top=\zero$. But $V\in\RR^{N\times r}$ is constructed to have full column rank, and consequently \Cref{cond:norm}\ref{cond:scale} must hold.

    Finally, we study $V^\top\mu$. Note that by \eqref{eq:S-residual}, $\tilde{S}\mu=\frac{1}{K}SP_{S^\top \one_K}^\perp\one_K^\top S=\zero$, and hence $UV^\top\mu=\zero$. But $U$ is also constructed to have full column rank, and $V^\top\mu$ must be zero. Taking the transpose yields \Cref{cond:norm}\ref{cond:orth}.
    A similar argument verifies $V^\top\one_N=\zero$.

    Parts (ii) and (iii) jointly corroborate \Cref{prop:ident}(ii).
\end{proof}

\clearpage
\section{Estimation}\label{app:est}

\subsection{Rank selection}
\label{app:subsec:rank-selection}
The rank $r$ of the factorization $UV^\top$ serves as a structural hyperparameter that controls the model's capacity to capture structured disagreement around consensus, thereby affecting model complexity. 
We offer three ways to select the rank $r$ of HJA.

\paragraph{Cross Validation.}
In experiments with dense comparison graphs, we can mask a validation subset of the pairwise comparisons $\Omega_{\text{val}} \subset \Omega$. For each candidate rank $r$, we estimate the parameters $\hat{\theta}_r$ on the training set and select the $r$ that minimizes the negative log-likelihood on $\Omega_{\text{val}}$. This guards against overfitting by measuring generalizability and is hence suitable for empirical practices with real-world datasets. We adopt this method when dealing with real-world datasets in this paper; see \Cref{app:subsec:real-evaluation}.

\paragraph{Spectral Residual Analysis.}
According to \eqref{eq:S-residual}, estimation of the preference residuals $UV^\top$ as a whole does not depend on $r$. Therefore, as a straightforward diagnostic, $r$ can be selected by analyzing the spectral decay of $UV^\top$, which can be estimated by fitting the consensus-only model ($S_0 = \gamma \mu^\top$) and computing $S-S_0$. A scree plot of the SVD of this residual matrix can help locate the optimal $r$, which typically corresponds to the elbow of the plot, reflecting the dominant aspects of disagreement. However, since the optimization is structural, this method may serve as a heuristic for future work. Another potential problem with this approach lies in the instability of $\hat S$ directly estimated from the data, which is shown in \Cref{fig:synthetic}.

\paragraph{Information-theoretic approach: BIC.}

We also propose the BIC method for hyperparameter selection, which is statistically guaranteed to consistently estimate the true rank $r^*$ of the underlying judge preference matrix $(UV^\top)^*$.
Note that the BIC for a candidate rank $r$ is defined as
\begin{equation} \label{eq:BIC}
    \text{BIC}(r) = 2 \mathcal{L}_n(\hat{\theta}_r) + d_r \log n,
\end{equation}
where $\mathcal{L}_n(\hat{\theta}_r)$ is the minimized negative log-likelihood under rank $r$, and $d_r := r(K + N - r-3)$ is the effective degrees of freedom (having accounted for the identifiability constraints).
The following theorem shows the consistency of BIC under the setting of \Cref{thm:asymptotics}.

\begin{theorem}[Consistency of rank selection]
\label{thm:bic_consistency}
Under \Cref{asm:sampling,asm:graph,asm:score-space,asm:local-chart-regular} with fixed comparison graph, the rank selected by minimizing BIC defined in \eqref{eq:BIC} is statistically consistent, i.e.,
\begin{equation*}
    \lim_{n \to \infty} \mathbb{P} \left\{ \arg\min_{r \in \{0, \dots, R_{\max}\}} \mathrm{BIC}(r) = r^* \right\} = 1.
\end{equation*}
\end{theorem}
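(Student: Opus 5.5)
The plan is the classical two-regime argument for BIC consistency in nested parametric models, applied to the likelihood $\mathcal{L}_n$ in \eqref{eq:ll}. Since $R_{\max}$ is finite and fixed, it suffices to show that $\mathbb{P}\{\mathrm{BIC}(r)>\mathrm{BIC}(r^*)\}\to1$ for each fixed $r\neq r^*$. We write
\[
\mathrm{BIC}(r)-\mathrm{BIC}(r^*)=2\{\mathcal{L}_n(\hat\theta_r)-\mathcal{L}_n(\hat\theta_{r^*})\}+(d_r-d_{r^*})\log n.
\]
We first check that $d_r=r(K+N-r-3)$ is strictly increasing on $\{0,\dots,R_{\max}\}$. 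Its increment is $d_{r+1}-d_r=K+N-2r-4$, which is positive because $r+1\le\min\{K-1,N-2\}$. This ensures the penalty term has the right sign in each regime.

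\emph{Underfitting ($r<r^*$).} The rank-$r$ model cannot represent $S^*$. By \Cref{prop:ident}, a score matrix in the rank-$r$ model has heterogeneous part $\tilde S$ of rank at most $r$, whereas $\tilde S^*$ has rank $r^*$. Every rank-$r$ $S$ in the model also lies in the rank-$r^*$ model. Under \Cref{asm:sampling,asm:graph}, $n^{-1}\mathcal{L}_n(\theta)$ converges uniformly on compacts to a population risk of the form $\sum_{\Omega_w}w_{kij}\,\ell(p^*_{kij},p_{kij}(\theta))$. This risk is the true entropy plus a weighted KL divergence. By graph connectivity and Part (i) of \Cref{prop:ident}, it is uniquely minimized over score matrices at $S^*$.

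Hence $\inf_{\text{rank }r}$ of the population risk exceeds its value at $S^*$ by some $\delta>0$. To get this strict gap we need the infimum to be attained or bounded away. This can fail if rank-$r$ models approach $S^*$ along sequences leaving compacts, so I would argue coercivity: diverging scores force some $p_{kij}\to\{0,1\}$ on a connected graph, which has infinite KL cost. We also use that the set of rank-$\le r$ residual matrices is closed. Consequently
\[
n^{-1}\{\mathcal{L}_n(\hat\theta_r)-\mathcal{L}_n(\hat\theta_{r^*})\}\ge\delta-\op(1),
\]
using consistency of $\hat\theta_{r^*}$ from \Cref{thm:asymptotics}. The $O(\log n)$ penalty difference is then negligible against $2n\delta$.

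\emph{Overfitting ($r>r^*$).} Here $S^*$ lies in the rank-$r$ model, so $\mathcal{L}_n(\hat\theta_r)\le\mathcal{L}_n(\hat\theta_{r^*})$. The task is to show that the likelihood-ratio gain is $\Op(1)$, which gives
\[
\mathrm{BIC}(r)-\mathrm{BIC}(r^*)\ge(d_r-d_{r^*})\log n-\Op(1)\to\infty.
\]
The clean route avoids the singular parameterization in $\theta$: at $S^*$ with $r>r^*$ the ordering $D$ has zero entries, so \Cref{asm:local-chart-regular} fails for the overfitted model. Instead we bound
\[
0\le\mathcal{L}_n(\hat\theta_{r^*})-\mathcal{L}_n(\hat\theta_r)\le\mathcal{L}_n(S^*)-\inf_{S}\mathcal{L}_n(S),
\]
where the infimum runs over all row-centered $S$ (the saturated judge-wise BTL model). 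Under \Cref{asm:graph}, the judge-wise BTL likelihood is a regular, finite-dimensional, strictly convex exponential-family model. By the standard Wilks expansion, $2\{\mathcal{L}_n(S^*)-\inf_S\mathcal{L}_n(S)\}\dto\chi^2_{K(N-1)}$, which is $\Op(1)$. Combining with $\mathcal{L}_n(S^*)\ge\mathcal{L}_n(\hat\theta_{r^*})$ gives the needed bound.

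The main obstacle is the underfitting gap: establishing that $\delta>0$ and the uniform convergence are uniform over the noncompact, nonconvex rank-$r$ set. I would handle it with the coercivity argument together with lower semicontinuity of rank. The overfitting step is comparatively routine once it is reduced to the saturated convex model.
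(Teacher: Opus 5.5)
Your argument is correct. It shares the paper's two-regime skeleton, but the overfitting step takes a different route.

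For $r>r^*$, the paper appeals to nonstandard likelihood-ratio theory (boundary asymptotics and reduced-rank tests) to assert that the gain $\mathcal{L}_n(\hat\theta_{r^*})-\mathcal{L}_n(\hat\theta_r)$ is $O_p(1)$. You instead bound this gain by the likelihood-ratio statistic of the saturated row-centered judge-wise BTL model. You use only $\mathcal{L}_n(\hat\theta_{r^*})\le\mathcal{L}_n(S^*)$ and $\mathcal{L}_n(\hat\theta_r)\ge\inf_{S\mathbf{1}_N=0}\mathcal{L}_n(S)$, and then apply the ordinary Wilks theorem. That theorem applies because the saturated model is a regular convex exponential family whose information is nondegenerate by judge-wise connectivity.

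\emph{What your route buys.} It never analyzes the overfitted model near $S^*$. For $r>r^*$, $S^*$ is a singular point of the determinantal set $\{\operatorname{rank}S\le r+1\}$, not a boundary point of a regular model. The local-chart regularity assumption is unavailable there, and the conditions of the cited boundary results are never verified.

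\emph{What the paper's route buys.} If made rigorous, it would give the exact (non-$\chi^2$) limit of the nested LR statistic. That is more than BIC needs; your $\chi^2_{K(N-1)}$ domination is cruder but sufficient.

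\emph{Underfitting case.} Here you also repair a step the paper glosses. The paper invokes compactness of $\Theta$, which fails. Your coercivity-plus-closedness argument legitimately yields $\delta_r>0$.

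A few precision points follow.
\begin{itemize}
\item $\Theta_r$ demands $r$ strictly positive, distinct singular values. So for $r>r^*$, $S^*$ lies only in the closure of the rank-$r$ model, and likewise for your nesting remark when $r<r^*$. Your chain never uses the leftmost ``$0\le$'', so nothing breaks.
\item The closedness step is cleanest stated as follows. Every $S$ in the rank-$r$ model has $\operatorname{rank}S=r+1$, because $\mathbf{1}_K^\top\gamma=K\neq0$ and $\mathbf{1}_K^\top U=0$. By contrast $\operatorname{rank}S^*=r^*+1$, and $\{\operatorname{rank}S\le r+1\}$ is closed. Phrasing it through $\tilde S=SP^\perp_{S^\top\mathbf{1}_K}$ is awkward, since that map is discontinuous where $S^\top\mathbf{1}_K=0$.
\item Lower-bounding $\inf M_n$ over the unbounded rank-$r$ set needs coercivity that is uniform in $n$. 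One way is the bound $-\bar y x+\log(1+e^x)\ge\min(\bar y,1-\bar y)|x|$ on active edges.
\item Consistency of $\hat\theta_{r^*}$ is not needed in Case 1. The bound $\mathcal{L}_n(\hat\theta_{r^*})\le\mathcal{L}_n(S^*)$ together with the pointwise LLN at $S^*$ suffices.
\end{itemize}
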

We now provide the formal proof for Theorem \ref{thm:bic_consistency}. The goal is to show that minimizing the Bayesian Information Criterion (BIC) consistently selects the true heterogeneity rank $r^*$ as the total number of comparisons $n \to \infty$ while the comparison graph $\Omega$ remains fixed. Note that the sample log-likelihood is $-\cL_n(\theta)$ throughout the paper.
\begin{proof}[Proof of \Cref{thm:bic_consistency}]
Let $\mathbb{P}^*$ denote the true data-generating distribution parameterized by $\theta^* \in \Theta_{r^*}$, where $\Theta_r$ is the constrained parameter space corresponding to rank $r$ (adhering to the identifiability constraints in \Cref{prop:ident}). 
To prove $\lim_{n \to \infty} \mathbb{P}\left(r^*=\arg\min_r \text{BIC}(r)\right) = 1$, we must show that for any $r \neq r^*$, $\mathbb{P}\left(\text{BIC}(r) - \text{BIC}(r^*) > 0\right) \to 1$. We partition the analysis into two cases: the under-parameterized regime ($r < r^*$) and the over-parameterized regime ($r > r^*$).

\paragraph{Case 1: Under-parameterized regime ($r < r^*$).}
When $r < r^*$, the candidate model space $\Theta_r$ does not contain the true parameter $\theta^*$. Because the BTL model is a strictly concave Generalized Linear Model (GLM) mapping bounded logits to probabilities, and our identifiability constraints make the parameterization injective, $\theta^*$ cannot be perfectly approximated by any $\theta \in \Theta_r$. 

Following the argument of \Cref{app:consistency} (especially \eqref{eq:kld-score}), by the properties of the KL divergence, the maximum expected log-likelihood under the restricted space $\Theta_r$ is strictly less than under the true parameter:
\begin{equation*}
    \delta_r := M(\theta^*) - \sup_{\theta \in \Theta_r} M(\theta) > 0
\end{equation*}
By the Uniform Law of Large Numbers (since $\Theta$ is compact and the BTL log-likelihood is continuous and bounded), we have $\frac{1}{n}\mathcal{L}_n(\hat{\theta}_r) \pto \sup_{\theta \in \Theta_r} M(\theta)$ and $\frac{1}{n}\mathcal{L}_n(\hat{\theta}_{r^*}) \pto M(\theta^*)$. 
Hence \(\mathcal L_n(\hat\theta_r)-\mathcal L_n(\hat\theta_{r^*})=n\delta_r+\op(n)\).

The difference in BIC is
\begin{align}
    \text{BIC}(r) - \text{BIC}(r^*) &= -2\left(\mathcal{L}_n(\hat{\theta}_{r^*}) - \mathcal{L}_n(\hat{\theta}_r)\right) + (d_r - d_{r^*})\log(n) \label{eq:bic-diff}
    \\
    &= -2n \left( \frac{1}{n}\mathcal{L}_n(\hat{\theta}_{r^*}) - \frac{1}{n}\mathcal{L}_n(\hat{\theta}_r) \right) + (d_r - d_{r^*})\log(n). \notag
\end{align}
Since the term inside the first parenthesis converges in probability to $\delta_r > 0$, the likelihood difference scales as $\mathcal{O}_p(n)$. But the penalty difference $(d_r - d_{r^*})\log(n)$ only scales as $\mathcal{O}(\log n)$, making the linear growth of the likelihood deficit dominate. Thus, for any $r < r^*$,
\begin{equation*}
    \lim_{n \to \infty} \mathbb{P}\left( \text{BIC}(r) - \text{BIC}(r^*) > 0 \right) = 1.
\end{equation*}

\paragraph{Case 2: Over-parameterized regime ($r > r^*$)}
When $r > r^*$, the candidate model space $\Theta_r$ strictly nests the true model space $\Theta_{r^*}$ in terms of closure (i.e., $\overline{\Theta_{r^*}} \subset \overline{\Theta_r}$). Therefore, $\theta^* \in \overline{\Theta_r}$, and the model can be viewed as correctly specified but over-parameterized.

Because the true model is nested within the larger model, we invoke theory on the Likelihood Ratio (LR) test. Under the identifiability constraints (\Cref{subsec:id}) and assumptions required by asymptotic normality, the Fisher Information Matrix evaluated at $\theta^*$ is non-singular. 
Note that under the null hypothesis that the true parameter lies in the restricted space $\Theta_{r^*}$, the log-likelihood ratio statistic may not converge in law to a chi-squared distribution, as the true parameter $\theta^*$ essentially lies on the boundary of $\Theta_r$. However,  $-2\left(\mathcal{L}_n(\hat{\theta}_r) - \mathcal{L}_n(\hat{\theta}_{r^*})\right)$ is still $\cO_p(1)$ by theory on the boundary behaviour of LR [\citealp[Theorem 3]{liang1987asymptotic}] and tests on reduced-rank regression [\citealp[Theorem 3.2]{Robin_Smith_2000}].
Therefore, the maximum improvement in log-likelihood gained by fitting $r - r^*$ spurious components is bounded in probability. But $r > r^*$ leads to $d_r > d_{r^*}$ and consequently the penalty term $(d_r - d_{r^*})\log(n) \to \infty$ as $n \to \infty$. 
Thus, for any $r > r^*$:
\begin{equation*}
    \lim_{n \to \infty} \mathbb{P}\left( \text{BIC}(r) - \text{BIC}(r^*) > 0 \right) = 1.
\end{equation*}

Combining Case 1 and Case 2, the true rank $r^*$ strictly minimizes the BIC with probability tending to 1 as the sample size $n \to \infty$, i.e., $\lim_{n \to \infty} \mathbb{P}\left(\arg\min_r \text{BIC}(r)=r^*\right) = 1$.
\end{proof}

We also add a proof-of-concept simulation on the rank selection consistency of BIC under the oracle model; see \Cref{fig:bic-rank-consistency}.

\begin{figure}
    \centering
    \includegraphics[width=0.5\linewidth]{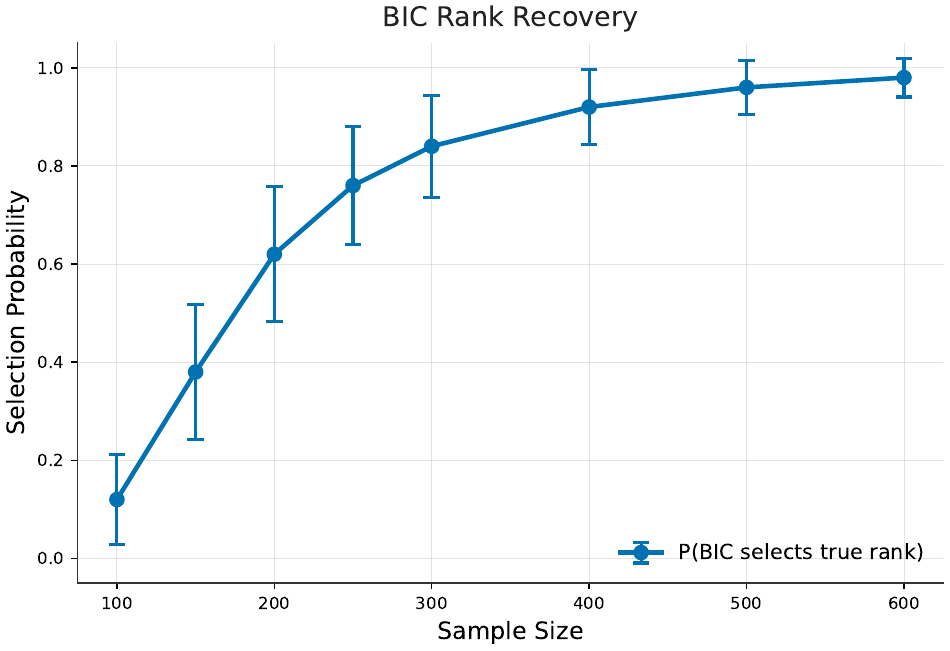}
    \caption{BIC rank selection consistency. The selection probability (y-axis) is estimated by the frequency of correct rank estimation over 50 repetitions. Error bars denote $\pm$ twice the standard error.}
    \label{fig:bic-rank-consistency}
\end{figure}

Based on the theory and simulations, we prefer BIC when the true data fit our method well or when computational resources are too limited to run cross-validation.

\subsection{Algorithm} \label{app:est-main}


\subsubsection{Initialization}
We initialize our parameters by fitting a pooled BTL model for $\mu$, all ones for $\gamma$, and a judgewise BTL model for each row of $S$. See \Cref{alg:initialize}.
\begin{algorithm}[!ht]
    \caption{\textsc{Initialize}: construct $(\gamma^{(0)},\mu^{(0)},U^{(0)},V^{(0)})$}
    \label{alg:initialize}
    \begin{algorithmic}[1]
    \Require Aggregated counts $\{(n_{kij},Y_{kij})\}_{(k,i,j)\in\Omega}$, target rank $r$
    \Ensure Initial value $\theta^{(0)}=(\gamma^{(0)},\mu^{(0)},U^{(0)},V^{(0)}) \in \Theta_r$
    
    \State Fit a pooled centered BTL model to obtain $\mu^{(0)}$.
    \State Re-center $\mu^{(0)} \gets \mu^{(0)} - \frac{1}{N}(\one_N^\top \mu^{(0)})\one_N$.
    \State Set $\gamma^{(0)} \gets \one_K$.
    
    \For{$k=1,\ldots,K$}
        \State Fit a centered judgewise BTL model using judge $k$'s comparisons to obtain $\tilde S^{(0)}_{k,:}$.
    \EndFor
    
    \State Form the initial residual matrix $R^{(0)} \gets \tilde S^{(0)} - \gamma^{(0)}(\mu^{(0)})^\top$.
    \State Judge-center the residual:
    \(
    R_c^{(0)} \gets R^{(0)} - \frac{1}{K}\one_K \one_K^\top R^{(0)}.
    \)
    
    \State Compute the rank-$r$ truncated SVD $R_c^{(0)} \approx P_r \Sigma_r Q_r^\top$.
    \State Set $U^{(0)} \gets P_r \Sigma_r^{1/2}$ and $V^{(0)} \gets Q_r \Sigma_r^{1/2}$.
    
    \State Compute $(\gamma^{(0)},\mu^{(0)},U^{(0)},V^{(0)}) \gets \textsc{ReAnchor}( \gamma^{(0)},\mu^{(0)},U^{(0)},V^{(0)} )$ based on \Cref{alg:reanchor}.

    
    \State \Return $\theta^{(0)}=(\gamma^{(0)},\mu^{(0)},U^{(0)},V^{(0)})$.
    \end{algorithmic}
\end{algorithm}


\subsubsection{Re-anchor and enforce identifiability}

\begin{algorithm}[!ht]
    \caption{\textsc{ReAnchor}: canonical re-anchoring with feasibility check}
    \label{alg:reanchor}
    \begin{algorithmic}[1]
    \Require Tentative $(\tilde{\gamma},\tilde{\mu},\tilde{U},\tilde{V})$ from the two block updates, tolerance $\delta_\mu>0$, spectral tolerance $\delta_\sigma>0$.
    \Ensure Either a feasible anchored point $(\gamma^+,\mu^+,U^+,V^+)\in\Theta_r$ or \textsc{Fail}.

    \State Set $\mu^+ \gets \tilde{\mu}$.
    \If{$\|\mu^+\|_2 < \delta_\mu$}
        \State \Return \textsc{Fail} \Comment{Avoid instability incurred on updating $V$}
    \EndIf

    \State Compute $a \gets \tilde{V}^\top \mu^+ / \|\mu^+\|_2^2$.
    \State Set $\bar V \gets \tilde V - \mu^+ a^\top$ and $\gamma^{+} \gets \tilde\gamma + \tilde U a$.

    \State Form $\bar H \gets \tilde U \bar V^\top$.
    \State Compute the thin SVD $\bar H = P\Sigma Q^\top$, with diagonal entries of $\Sigma$ arranged in decreasing order.
    \If{$\sigma_r(\Sigma) < \delta_\sigma$}
        \State \Return \textsc{Fail} \Comment{Avoid ill-conditioned matrices caused by a large leap}
    \EndIf

    \State Set $U^+ \gets P\Sigma/\sqrt{N}$, and $V^+ \gets \sqrt{N}\,Q$.
    Flip the corresponding columns of $(U^+,V^+)$ so that the first nonzero entry of each column of $U^+$ is positive.
    \State \Return $(\gamma^+,\mu^+,U^+,V^+)$
    \end{algorithmic}
\end{algorithm}
The re-anchoring step restores the nonlinear identification constraints without changing the implied score matrix $\tilde S = \tilde\gamma \tilde\mu^\top + \tilde U \tilde V^\top$. Concretely, \textsc{ReAnchor} removes the component of $\tilde V$ along $\tilde\mu$, compensates for this change through the update $\gamma^{\mathrm{+}} = \tilde\gamma + \tilde U a$, and then refactorizes the heterogeneity term by a normalized SVD. If the candidate loading violates the local nondegeneracy conditions, the routine returns \textsc{Fail}, and the main algorithm increases the proximal parameter before recomputing the block updates.

Unlike JA-Ranking \citep{xu2026judgeaware}, we do not log-reparameterize the judge-specific coefficient \(\gamma\). 
In our model, \(\gamma\) is an affine-anchored consensus-sensitivity vector recovered from the structured score matrix, not a strictly positive global discrimination parameter. 
A log transform would impose unnecessary positivity and would not be compatible with the \(S\)-preserving re-anchoring map. The proximal terms make the block updates well posed and provide descent, while re-anchoring restores the canonical representative without changing the likelihood.

In exact arithmetic, the post-anchor candidate automatically satisfies
$\one_K^\top \gamma^{\mathrm{+}} = K$ because the judge block enforces
$\one_K^\top \tilde\gamma = K$ and $\one_K^\top \tilde U = \zero_r^\top$.

\subsubsection{Confidence intervals for target functionals}

\begin{algorithm}[!ht]
\caption{\textsc{EstimateFisherInfo}: plug-in Fisher information}
\label{alg:covariance}
\begin{algorithmic}[1]
\Require Final estimator \(\widehat\theta=(\widehat\gamma,\widehat\mu,\widehat U,\widehat V)\), data \(\{(n_{kij},Y_{kij})\}_{(k,i,j)\in\Omega}\).
\Ensure Plug-in Fisher information matrix \(\mathcal I_n(\widehat\theta)\).
\State Use the local identified parameterization fixed in \Cref{subsec:estimation}; all gradients below are taken in this parameterization.
\For{each \((k,i,j)\in\Omega\)}
    \State Compute
    \(
        \widehat\eta_{kij}
        =
        \widehat\gamma_k(\widehat\mu_i-\widehat\mu_j)
        +
        \widehat U_{k,:}(\widehat V_{i,:}-\widehat V_{j,:})^\top .
    \)
    \State Set
    \(
        \widehat p_{kij}
        =
        \{1+\exp(-\widehat\eta_{kij})\}^{-1}.
    \)
    \State Compute the local gradient
    \(
        \widehat g_{kij}
        =
        \nabla_\theta \eta_{kij}(\widehat\theta).
    \)
\EndFor
\State Set
\(
    \mathcal I_n(\widehat\theta)
    \gets
    \sum_{(k,i,j)\in\Omega}
    \frac{n_{kij}}{n}
    \widehat p_{kij}(1-\widehat p_{kij})
    \widehat g_{kij}\widehat g_{kij}^{\top}.
\)
\State \Return \(\mathcal I_n(\widehat\theta)\).
\end{algorithmic}
\end{algorithm}

\begin{algorithm}[!ht]
\caption{\textsc{UncertaintyQuantification}: delta-method confidence intervals}
\label{alg:uq}
\begin{algorithmic}[1]
\Require Final estimator \(\widehat\theta\), plug-in information matrix \(\mathcal I_n(\widehat\theta)\), target functionals \(\mathcal Q\), confidence level \(1-\alpha\).
\Ensure Wald confidence intervals for \(\{q(\theta):q\in\mathcal Q\}\).
\State Use the local identified parameterization fixed in \Cref{subsec:estimation}.
\State Set
\(
    \widehat\Sigma_{\theta}
    \gets
    \frac{1}{n}\mathcal I_n(\widehat\theta)^{-1}.
\)
\For{each smooth scalar target \(q\in\mathcal Q\)}
    \State Compute \(\widehat q=q(\widehat\theta)\).
    \State Compute the local gradient
    \(
        \widehat a_q
        =
        \nabla_\theta q(\widehat\theta).
    \)
    \State Compute
    \(
        \widehat{\mathrm{se}}(\widehat q)
        =
        \left(
        \widehat a_q^\top
        \widehat\Sigma_{\theta}
        \widehat a_q
        \right)^{1/2}.
    \)
    \State Report
    \(
        \widehat q
        \pm
        z_{1-\alpha/2}\widehat{\mathrm{se}}(\widehat q).
    \)
\EndFor
\end{algorithmic}
\end{algorithm}

\subsection{Convergence analysis}\label{app:est:subsec:conv}

\begin{assumption}[Local algorithmic chart]
\label{asm:algo-chart}
There exists a compact set $\Theta_{r,0}\subset \Theta_r$ such that:
(i) $\theta^\ast$ lies in the relative interior of $\Theta_{r,0}$ after fixing the local anchor signs of the columns of $U$ and $V$;
(ii) for every $\theta=(\gamma,\mu,U,V)\in\Theta_{r,0}$, one has $\|\mu\|_2\ge \delta_\mu$, and the nonzero singular values of $UV^\top$ are separated from each other and from zero by at least $\delta_\sigma$;
(iii) the re-anchoring map is well-defined and maps $\Theta_{r,0}$ into itself.
\end{assumption}

\begin{assumption}[Local isolated minimizer]
\label{asm:algo-local-min}
For the empirical criterion $\cL_n$, there exists a unique local minimizer $\hat\theta_n\in\Theta_{r,0}$, after fixing the anchor signs, and $\nabla_\xi^2 \cL_n(\hat\theta_n)$ is positive definite in the local free-coordinate chart $\xi$.
\end{assumption}

\begin{assumption}[Initialization in basin]
\label{asm:algo-init}
The initialized iterate $\theta^{(0)}$ lies in the basin of attraction of $\hat\theta_n$ within $\Theta_{r,0}$.
\end{assumption}

\begin{proof}[Proof of \Cref{thm:main_convergence}]
    The proof of \Cref{thm:main_convergence} follows from \Cref{prop:algo-conv}.
\end{proof}

\begin{proposition}[Convergence of the proximal anchored alternating algorithm]
\label[proposition]{prop:algo-conv}
Consider the sequence $\{\theta^{(t)}\}_{t\ge 0}$ generated by \Cref{alg:mle}, where each block subproblem includes the proximal terms
\[
\frac{\tau_t}{2}\|\gamma-\gamma^{(t)}\|_2^2+\frac{\tau_t}{2}\|U-U^{(t)}\|_F^2
\quad\text{and}\quad
\frac{\tau_t}{2}\|\mu-\mu^{(t)}\|_2^2+\frac{\tau_t}{2}\|V-V^{(t)}\|_F^2,
\]
with proximal parameters uniformly bounded away from zero ($\tau_t \ge \underline{\tau} > 0$) if $\tau$ is allowed to change during different iterations.

\textbf{Part I: Global descent properties.} Suppose the structural assumptions (\Cref{asm:score-space,asm:algo-chart}) hold. Then:
\begin{enumerate}
    \item the sequence $\{\cL_n(\theta^{(t)})\}_{t\ge 0}$ is nonincreasing and convergent;
    \item the iterates satisfy $\sum_{t=0}^\infty \|\theta^{(t+1)}-\theta^{(t)}\|^2 < \infty$.
\end{enumerate}

\textbf{Part II: Local sequence convergence.} Suppose further that \Cref{asm:algo-local-min,asm:algo-init} concerning the basin hold. Then:
\begin{enumerate}
    \setcounter{enumi}{2}
    \item the iterates remain in $\Theta_{r,0}$ and the full anchored sequence $\theta^{(t)}$ converges to $\hat\theta_n$;
    \item equivalently, before anchoring, the factor pair $(U^{(t)},V^{(t)})$ converges to the equivalence class of $(\hat U_n,\hat V_n)$ up to the column-sign convention, and after re-anchoring the convergence is ordinary parameter convergence.
\end{enumerate}

\textbf{Part III: Local linear rate.} Following Part II, the convergence is locally linear: there exist constants $C>0$ and $\rho\in(0,1)$ such that
\[
\|\theta^{(t)}-\hat\theta_n\| \le C\rho^t
\quad\text{and}\quad
\cL_n(\theta^{(t)})-\cL_n(\hat\theta_n) \le C\rho^t
\]
for all sufficiently large $t$.
\end{proposition}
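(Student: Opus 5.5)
The proof has three layers, matching the three parts of \Cref{prop:algo-conv}: a sufficient-decrease inequality from the proximal block updates, local sequence convergence via a Kurdyka--{\L}ojasiewicz (KL) argument in the free-coordinate chart $\xi$, and a linear rate from positive curvature at $\hat\theta_n$.

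\textbf{Part I (descent).} The key observation is that \textsc{ReAnchor} preserves the score matrix. It replaces $(\tilde\gamma,\tilde\mu,\tilde U,\tilde V)$ by a tuple with the same $S=\gamma\mu^\top+UV^\top$, so $\mathcal{L}_n(\theta^{(t+1)})=\mathcal{L}_n(\tilde\gamma,\tilde\mu,\tilde U,\tilde V)$. For the judge block, comparing the objective at its minimizer with the feasible point $(\gamma^{(t)},U^{(t)})$ gives
\[
\mathcal{L}_n(\tilde\gamma,\mu^{(t)},\tilde U,V^{(t)})+\tfrac{\tau_t}{2}\big(\|\tilde\gamma-\gamma^{(t)}\|^2+\|\tilde U-U^{(t)}\|_F^2\big)\le \mathcal{L}_n(\theta^{(t)}).
\]
The same comparison for the item block yields a second inequality. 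Summing the two gives
\[
\mathcal{L}_n(\theta^{(t+1)})+\tfrac{\underline\tau}{2}\|\tilde\theta^{(t)}-\theta^{(t)}\|^2\le\mathcal{L}_n(\theta^{(t)}).
\]
Since $\mathcal{L}_n\ge 0$, the sequence of likelihood values is monotone and convergent, and $\sum_t\|\tilde\theta^{(t)}-\theta^{(t)}\|^2<\infty$.

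To transfer this summability to $\theta^{(t+1)}-\theta^{(t)}$, I need the anchoring map $\mathcal{A}$ to be Lipschitz on $\Theta_{r,0}$. \Cref{asm:algo-chart}(ii) gives $\|\mu\|\ge\delta_\mu$ and singular-value gaps at least $\delta_\sigma$, and Wedin/Davis--Kahan perturbation bounds then make the SVD refactorization smooth. Because $\mathcal{A}(\theta^{(t)})=\theta^{(t)}$ for an already-anchored point, this gives
\[
\|\theta^{(t+1)}-\theta^{(t)}\|=\|\mathcal{A}(\tilde\theta^{(t)})-\mathcal{A}(\theta^{(t)})\|\le L_{\mathcal A}\|\tilde\theta^{(t)}-\theta^{(t)}\|.
\]
The block subproblems are well posed (unique minimizers) once $\tau_t$ exceeds the negative part of the blockwise curvature, which is bounded on the compact set. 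If \textsc{ReAnchor} fails, the algorithm increases $\tau$, which shrinks the step and keeps $\tilde\theta^{(t)}$ close enough for \Cref{asm:algo-chart}(iii) to apply.

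\textbf{Parts II--III (local convergence and rate).} I work in the local free coordinates $\xi$ of the smooth manifold $\Theta_{r,0}$, where $f(\xi)=\mathcal{L}_n(\theta(\xi))$ is real-analytic. Analytic functions satisfy the KL inequality, and by \Cref{asm:algo-local-min} the Hessian of $f$ is positive definite at $\hat\xi_n$. Hence $f$ is locally strongly convex there, and the KL exponent is $1/2$:
\[
f(\xi)-f(\hat\xi_n)\le c\|\nabla f(\xi)\|^2 .
\]
The remaining ingredient is a relative-error bound $\|\nabla f(\xi^{(t+1)})\|\le b\|\xi^{(t+1)}-\xi^{(t)}\|$. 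I would obtain it from the first-order optimality conditions of the two block subproblems. Each stationarity condition has residual $\tau_t(\tilde\theta-\theta^{(t)})$, plus a cross-block gradient difference that is Lipschitz in the step. Projecting onto the tangent space of the chart and composing with the smooth map $\mathcal{A}$ then bounds the Riemannian gradient at $\theta^{(t+1)}$ by $O(\|\theta^{(t+1)}-\theta^{(t)}\|)$.

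With sufficient decrease, relative error, and KL exponent $1/2$, the Attouch--Bolte--Svaiter argument applies. Because the initialization lies in the basin (\Cref{asm:algo-init}), the iterates have finite length and stay in $\Theta_{r,0}$, so they converge to a critical point. By local uniqueness that critical point is $\hat\theta_n$, which gives Part II; the sign convention in \textsc{ReAnchor} turns equivalence-class convergence into ordinary convergence. The exponent $1/2$ then gives $f(\xi^{(t)})-f(\hat\xi_n)\le C\rho^t$. Finally, strong convexity gives $\|\xi-\hat\xi_n\|^2\lesssim f(\xi)-f(\hat\xi_n)$, and together with the tail-length bound this yields the geometric bound on $\|\theta^{(t)}-\hat\theta_n\|$ (Part III), and hence \Cref{thm:main_convergence}.

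\textbf{Main obstacle.} The delicate step is the relative-error bound through the nonlinear \textsc{ReAnchor} map. Alternating updates are taken in the ambient constrained space, not along the manifold $\Theta_r$, so the gradient must be related across the anchoring. My first attempt would use that $\mathcal{A}$ is constant along the $S$-preserving gauge orbits and is a smooth retraction onto $\Theta_{r,0}$. Then the component of $\nabla\mathcal{L}_n$ tangent to the orbit vanishes, and the transversal component is controlled by the block stationarity residuals.
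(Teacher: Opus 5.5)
Your proposal is correct. Part~I is the paper's sufficient-decrease argument, but Parts~II--III take a different route.

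\textbf{The paper's route.} It proves sequence convergence without any KL estimate:
\begin{itemize}
\item $\hat\theta_n$ is the unique stationary point in a compact neighborhood $\mathcal B$.
\item Monotonicity confines the iterates to a sublevel set inside $\mathcal B$.
\item The block optimality conditions $\nabla_{\phi_1}\mathcal L_n(\phi_1^{(t+1)},\phi_2^{(t)})=-\tau_t(\phi_1^{(t+1)}-\phi_1^{(t)})$, together with $\|\theta^{(t+1)}-\theta^{(t)}\|\to0$, make every limit point stationary.
\item A single limit point in a compact set then gives convergence.
\end{itemize}
KL exponent $1/2$ (via strong convexity and PL) enters only afterwards, to upgrade to a linear rate by citing the proximal-alternating theorem of Attouch et al.

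\textbf{Your route.} You run one Attouch--Bolte--Svaiter argument in the chart $\xi$, using sufficient decrease, a relative-error bound, and KL exponent $1/2$. This gives finite length, confinement, convergence and the rate simultaneously.

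\textbf{Comparison.} The paper's route is shorter and, for Part~II, needs only continuity of the gradient. Yours handles the re-anchoring explicitly, which the paper's proof does not:
\begin{itemize}
\item it writes the decrease and optimality relations with $\theta^{(t+1)}$ where only the pre-anchor point $\tilde\theta^{(t)}$ is valid;
\item it cites a theorem for plain proximal alternating minimization, which has no re-anchoring step;
\item it does not justify why the iterates cannot jump out of the sublevel component inside $\mathcal B$.
\end{itemize}
Your observation that \textsc{ReAnchor} factors as $\mathcal A=\Psi\circ S$ is what closes these points. It makes $\mathcal A$ constant on the $S$-preserving gauge orbits and locally Lipschitz under the $\delta_\mu,\delta_\sigma$ gaps, with $\nabla\mathcal L_n$ orthogonal to the orbits.

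\textbf{Two refinements.}
\begin{itemize}
\item Each block subproblem is convex, because $\eta_{kij}$ is linear in $(\gamma,U)$ and in $(\mu,V)$ separately. So the proximal minimizers are unique for every $\tau_t>0$, with no curvature comparison needed.
\item The relative-error bound comes out naturally in terms of $d_t=\|\tilde\theta^{(t)}-\theta^{(t)}\|$, not $\|\theta^{(t+1)}-\theta^{(t)}\|$, since anchoring removes gauge components of the step. This is enough for the KL argument: the decrease is at least $\tfrac{\underline\tau}{2}d_t^2$ and the anchored step is at most $L_{\mathcal A}d_t$. You also need $\tau_t$ bounded above, so that the relative-error constant $b$ is uniform; the paper likewise relies on boundedness of $\tau_t$.
\end{itemize}
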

In particular, whenever the sample criterion admits a unique local minimizer $\hat\theta_n$ in the identified chart $\Theta_{r,0}$, the output of \Cref{alg:mle} coincides with that local constrained MLE for all sufficiently accurate runs.
\begin{proof}[Proof of \Cref{prop:algo-conv}]
Let the parameter space be partitioned into two blocks: the judge block $\phi_1 = (\gamma, U)$ and the item block $\phi_2 = (\mu, V)$, such that the full parameter vector is $\theta = (\phi_1, \phi_2)$.

\paragraph{Part I: Global Descent Properties.} 
This part relies only on the structural conditions in \Cref{asm:score-space,asm:algo-chart} as the properties of proximal alternating minimization.

\textit{Claim 1 (Nonincreasing and convergent objective):}
For the judge block $\phi_1^{(t+1)}$, the update satisfies:
\[
\phi_1^{(t+1)} = \argmin_{\phi_1} \left\{ \cL_n(\phi_1, \phi_2^{(t)}) + \frac{\tau_t}{2}\|\phi_1 - \phi_1^{(t)}\|^2 \right\}.
\]
Comparing the value at the minimum $\phi_1^{(t+1)}$ against the previous iterate $\phi_1^{(t)}$ yields the sufficient decrease inequality for the first block:
\[
\cL_n(\phi_1^{(t+1)}, \phi_2^{(t)}) \le \cL_n(\phi_1^{(t)}, \phi_2^{(t)}) - \frac{\tau_t}{2}\|\phi_1^{(t+1)} - \phi_1^{(t)}\|^2.
\]
Applying identical logic to the item block update $\phi_2^{(t+1)}$ yields
\[
\cL_n(\phi_1^{(t+1)}, \phi_2^{(t+1)}) \le \cL_n(\phi_1^{(t+1)}, \phi_2^{(t)}) - \frac{\tau_t}{2}\|\phi_2^{(t+1)} - \phi_2^{(t)}\|^2.
\]
Adding these two inequalities provides the global sufficient decrease condition for a full iteration:
\begin{equation}\label{eq:sufficient_decrease}
\cL_n(\theta^{(t+1)}) \le \cL_n(\theta^{(t)}) - \frac{\tau_t}{2}\|\theta^{(t+1)} - \theta^{(t)}\|^2\leq \cL_n(\theta^{(t)}).
\end{equation}
The re-anchoring step (\Cref{alg:reanchor} and by \Cref{asm:algo-chart}) acts purely on the rotational and scaling equivalence classes of the bilinear factorization. Because it is an invariant transformation with respect to $S$, the objective value $\cL_n$ remains unchanged before and after re-anchoring. Therefore, the sequence of objective values $\{\cL_n(\theta^{(t)})\}$ is monotonically nonincreasing. Since the negative log-likelihood is bounded from below (by zero), the sequence converges to some limit $L^\ast \ge 0$.

\textit{Claim 2 (Square-summability):}
Summing the sufficient decrease inequality \eqref{eq:sufficient_decrease} from iteration $t=0$ to $T$ yields a telescoping sum:
\[
\sum_{t=0}^T \frac{\tau_t}{2}\|\theta^{(t+1)} - \theta^{(t)}\|^2 \le \cL_n(\theta^{(0)}) - \cL_n(\theta^{(T+1)}) \le \cL_n(\theta^{(0)}).
\]
Using the strict lower bound $\tau_t \ge \underline{\tau} > 0$ and taking the limit as $T \to \infty$, we obtain:
\[
\sum_{t=0}^\infty \|\theta^{(t+1)} - \theta^{(t)}\|^2 \le \frac{2}{\underline{\tau}}\cL_n(\theta^{(0)})< \infty.
\]
This finite sum guarantees square-summability of the iterate differences, which also trivially implies that the step sizes converge to zero: $\|\theta^{(t+1)} - \theta^{(t)}\| \to 0$ as $t \to \infty$.
\paragraph{Part II: Local Sequence Convergence.}
This part invokes the statistical and initialization conditions in \Cref{asm:algo-local-min,asm:algo-init}.

\textit{Claim 3 (Convergence to the local minimizer $\hat{\theta}_n$):}
By \Cref{asm:algo-local-min}, $\hat{\theta}_n$ is an isolated local minimizer. Therefore, there exists a compact neighborhood $\mathcal{B} \subset \Theta_{r,0}$ around $\hat{\theta}_n$ such that $\hat{\theta}_n$ is the strictly unique stationary point within $\mathcal{B}$, and $\cL_n(\theta) > \cL_n(\hat{\theta}_n)$ for all $\theta \in \mathcal{B} \setminus \{\hat{\theta}_n\}$. 

By \Cref{asm:algo-init}, the initialization $\theta^{(0)}$ lies close enough to $\hat{\theta}_n$ such that the connected component of the sublevel set $\mathcal{S} := \{ \theta \in \mathcal{B} \mid \cL_n(\theta) \le \cL_n(\theta^{(0)}) \}$ is strictly contained in the interior of $\mathcal{B}$, given the positive definiteness of the Hessian matrix.
Because $\mathcal{B}$ is compact and $\cL_n$ is continuous, the sublevel set $\mathcal{S}$ is also compact. 

By the global descent property established in Part I, $\cL_n(\theta^{(t+1)}) \le \cL_n(\theta^{(t)})$ for all $t$. Thus, the sequence cannot enter the region where $\cL_n(\theta) > \cL_n(\theta^{(0)})$, meaning the entire sequence $\{\theta^{(t)}\}_{t=0}^\infty$ is strictly confined within the compact set $\mathcal{S}$. The re-anchoring step under \Cref{asm:algo-chart} preserves this containment.

Because $\mathcal{S}$ is compact, the sequence $\{\theta^{(t)}\}$ must possess at least one limit point $\theta^\ast \in \mathcal{S}$ by the Bolzano-Weierstrass theorem. To characterize any such limit point, we invoke the first-order optimality conditions of the exact proximal block updates. For the judge block,
\[
\nabla_{\phi_1} \cL_n(\phi_1^{(t+1)}, \phi_2^{(t)}) + \tau_t (\phi_1^{(t+1)} - \phi_1^{(t)}) = 0.
\]
Taking the norm yields $\| \nabla_{\phi_1} \cL_n(\phi_1^{(t+1)}, \phi_2^{(t)}) \| = \tau_t \| \phi_1^{(t+1)} - \phi_1^{(t)} \|$. 
By Part I and the boundedness of $\tau_t$, the right-hand side converges to zero, and thus the gradient mapping uniformly converges to zero. Because $\cL_n$ is continuously differentiable, any subsequence converging to $\theta^\ast$ satisfies $\nabla \cL_n(\theta^\ast) = 0$. 

Therefore, every limit point of the sequence is a stationary point. But $\mathcal{S} \subset \mathcal{B}$, and by construction, $\hat{\theta}_n$ is the strictly unique stationary point within $\mathcal{B}$. Therefore, $\theta^\ast = \hat{\theta}_n$ is the unique limit point of the sequence. Since $\{\theta^{(t)}\}$ is a sequence in a compact set with exactly one limit point, the entire sequence globally converges to $\hat{\theta}_n$.

\textit{Claim 4 (Pre- and post-anchoring equivalence):}
Because the joint parameters converge, the sequence of latent evaluation scores $S^{(t)} = \gamma^{(t)}(\mu^{(t)})^\top + U^{(t)}(V^{(t)})^\top$ converges to the optimal empirical score matrix $\hat{S}_n$. Before the re-anchoring step is applied, the unanchored factor matrices $(U^{(t)}, V^{(t)})$ converge to the equivalence class $[\hat{U}_n, \hat{V}_n]$ defined by orthogonal rotations and scale-sign permutations. The re-anchoring map is a continuous function acting on $\Theta_{r,0}$ that uniquely identifies a representative from this equivalence class by fixing local anchor signs, and hence guarantees the ordinary parameter-wise convergence in Euclidean space.

\paragraph{Part III: Local linear rate.} This part draws the final conclusion on local convergence rate from the assumed and established properties of the objective function, the algorithm, and the topological structures of the global optimum.

We restrict the domain of the objective function in $\mathcal{S}$, where $\cL$ demonstrates strong convexity. By [\citealp[Theorem 2]{karimi2016linear}], $\cL$ satisfies the Polyak-\L ojasiewicz (PL) inequality \citep{polyak1963gradient}, and by Appendix F of \cite{karimi2016linear}, satisfies the proximal-PL property, and by Appendix G of \cite{karimi2016linear}, has a local Kurdyka-\L ojasiewicz (KL) exponent $\theta=1/2$. The local linear convergence rate hence directly follows from [\citealp[Theorem 3.4]{attouch2010proximal}].
\end{proof}

\clearpage
\section{Statistical Inference}\label{app:stat}
\subsection{Consistency of constrained MLE} \label{app:consistency}
Recall that the structural parameter is estimated by the constrained maximum likelihood estimator
\[
\hat{\theta}:=(\hat{\gamma},\hat{\mu},\hat{U},\hat{V})
\in
\arg\min_{\theta\in\Theta_{r}}\mathcal{L}(S(\theta)),
\qquad
S(\theta):=\gamma\mu^\top+UV^\top.
\]
Since the likelihood depends on $\theta$ only through the induced score matrix $S(\theta)$, the proof is most naturally carried out on the centered score space
\[
\mathcal{S}:=\{S\in\RR^{K\times N}: S\one_N=\zero, \text{ and \Cref{cond:S-rank} holds}\}.
\]
The signs of the columns of $(U,V)$ can be anchored and fixed via, say, \Cref{rem:sign-id}, and then \Cref{prop:ident} ensures $\cS=\{S(\theta):\theta\in\Theta_r\}$.

We formalize the regularity conditions mentioned in \Cref{subsec:UQ} as the assumptions below.
\begin{assumption}[Asymptotic sampling fractions]
\label{asm:sampling}
There exist fixed constants $w_{kij} \geq 0$ such that
\({n_{kij}}/{n} \pto w_{kij}\) for all \((k,i,j)\in\Omega\) as \(n\to\infty\).
\end{assumption}
If $n_{kij}$ is non-random, i.e., the design of the experiment is deterministic, the convergence in probability is reduced to convergence in the ordinary sense.
Note that a triple can be observed ($n_{kij}>0$) but still have asymptotic fraction $w_{kij}=0$. To account for this in subsequent analysis, we define:
\begin{equation}\label{eq:active-graph}
    \Omega_w:=\{(k,i,j)\in\Omega:w_{kij}>0\}
\end{equation}
as the set of asymptotically active triples.
\begin{assumption}[Graph connectivity]
\label{asm:graph}
For each judge $k \in [K]$, the comparison graph $\mathcal{G}_k^w = ([N], \mathcal{E}_k^w)$ with edge set $\mathcal{E}_k^w = \{\{i,j\} : (k,i,j) \in \Omega_w\}$ is connected, and $0<p_{kij}^*<1$ for each active triple \((k,i,j)\in\Omega_w\).
\end{assumption}
By connectedness, we mean an undirected graph where a path (a sequence of connecting edges) exists between every pair of nodes (items). Therefore, under \Cref{asm:graph}, all the items assessed by a judge should be linked via one or more comparisons. However, it is not required that all possible pairs be directly compared, and this scenario is accounted for via $\Omega_w$.
\begin{assumption}[Structured parameter identifiability]
\label{asm:score-space}
    The true model \eqref{eq:score-decomp} holds with the true parameter $\theta^*\in \Theta_r$, and the true consensus direction $\mu^*\neq \zero$. 
\end{assumption}
These assumptions can guarantee the consistency of $\hat\theta$, which is shown below.
\begin{proof}[Proof of \Cref{thm:asymptotics}: consistency part]
    For $S\in\mathcal S$, write
\[
\eta_{kij}(S):=S_{ki}-S_{kj},
\qquad
p_{kij}(S):=\frac{1}{1+\exp\{-\eta_{kij}(S)\}}.
\]
Also write $\bar Y_{kij}:=Y_{kij}/n_{kij}$. Then the normalized negative log-likelihood, viewed as a function of $S$, is
\begin{equation}\label{eq:score-nll}
M_n(S):=\frac1n\mathcal L(S)
=
\sum_{(k,i,j)\in\Omega}\frac{n_{kij}}{n}
\Big[
-\bar Y_{kij}\eta_{kij}(S)+\log\bigl(1+\exp\{\eta_{kij}(S)\}\bigr)
\Big].
\end{equation}
By the strong law, for each fixed $(k,i,j)\in\Omega$,
\[
\bar Y_{kij}\xrightarrow{\mathrm{a.s.}} p_{kij}^\ast,
\]
and by \Cref{asm:sampling},
\[
\frac{n_{kij}}{n}\to w_{kij}.
\]
Since $\Omega$ is fixed and each summand in \eqref{eq:score-nll} is continuous in $S$, it follows that $M_n(S)$ converges uniformly on every compact subset of $\mathcal S$ to
\begin{equation}\label{eq:population-score-criterion}
M(S)
:=
\sum_{(k,i,j)\in\Omega}
w_{kij}
\Big[
-p_{kij}^\ast\eta_{kij}(S)+\log\bigl(1+\exp\{\eta_{kij}(S)\}\bigr)
\Big].
\end{equation}

Let $S^\ast=S(\theta^\ast)\in\cS$. Then, by the usual Bernoulli Kullback--Leibler identity,
\begin{equation}\label{eq:kld-score}
M(S)-M(S^\ast)
=
\sum_{(k,i,j)\in\Omega}
w_{kij}\,
\mathrm{KL}\!\left(
\mathrm{Bern}(p_{kij}^\ast)\,\|\,\mathrm{Bern}(p_{kij}(S))
\right)\ge 0.
\end{equation}
Equality holds if and only if $p_{kij}(S)=p_{kij}^\ast$ for all $(k,i,j)\in\Omega_w$, equivalently,
\[
S_{ki}-S_{kj}=S^\ast_{ki}-S^\ast_{kj}
\qquad\text{for all }(k,i,j)\in\Omega_w.
\]

By \Cref{lem:active-ident}, equality in \eqref{eq:kld-score} implies $S=S^\ast$. Hence $S^\ast$ is the unique minimizer of $M$ on $\mathcal S$.


We first prove the consistency of the score-space MLE $\hat S\in\arg\min_{S\in\cS}\mathcal L(S)$.

Fix any open neighborhood $\mathcal O$ of $S^\ast$ in $\mathcal S$. Since $M$ is continuous and uniquely minimized at $S^\ast$, there exists $\delta>0$ such that
\begin{equation}\label{eq:score-gap}
\inf_{S\in\cS\setminus\mathcal O} M(S)\ge M(S^\ast)+3\delta.
\end{equation}
By \Cref{lem:score-coercive}, there exists $R<\infty$ such that
\begin{equation}\label{eq:score-ball-gap}
\inf_{S\in\cS:\,\|S\|_F>R} M(S)\ge M(S^\ast)+4\delta.
\end{equation}
Define the truncated score set
\[
K_R:=\{S\in\cS:\|S\|_F\le R\}.
\]
Since $K_R$ is closed and bounded in the finite-dimensional Euclidean space $\RR^{K\times N}$, it is compact. By uniform convergence on compact subsets,
\[
\sup_{S\in K_R}|M_n(S)-M(S)|\pto 0.
\]
Hence, with probability tending to one,
\[
\sup_{S\in K_R}|M_n(S)-M(S)|<\delta.
\]
On this event, for every $S\in K_R$,
\[
M_n(S^\ast)\le M(S^\ast)+\delta,
\]
whereas for every $S\in K_R\setminus\mathcal O$, by \eqref{eq:score-gap},
\[
M_n(S)\ge M(S)-\delta\ge M(S^\ast)+2\delta.
\]
Thus no minimizer of $M_n$ over $K_R$ can lie in $K_R\setminus\mathcal O$.

Further, by the sample coercivity part of \Cref{lem:score-coercive}, with probability tending to one,
\[
\inf_{S\in\cS:\,\|S\|_F>R} M_n(S)>M(S^\ast)+2\delta.
\]
Hence no global minimizer of $M_n$ over $\cS$ can lie outside $K_R$. Combining the two conclusions, any global minimizer $\hat S$ of $M_n$ over $\cS$ must belong to $\mathcal O$ with probability tending to one, i.e.,
\[
\PP(\hat S\in\mathcal O)\to 1.
\]
Since $\mathcal O$ was arbitrary, we obtain
\[
\hat S\pto S^\ast.
\]


We now pass the consistency in the score space into the parameter space after fixing the sign convention of \Cref{rem:sign-id}. By \Cref{lem:score-coercive}, the criterion $M_n$ remains coercive on $\mathcal S$ with high probability, ensuring that any global minimizer of $M_n$ belongs to the identified score space $\mathcal S$.

Define the structured score map
\[
\Phi: \Theta_r \to \mathcal{S}, \quad \Phi(\theta) = \gamma\mu^\top + UV^\top.
\]
By \Cref{prop:ident}, after fixing column signs, $\Phi$ is invertible with smooth inverse 
$\Phi^{-1}: \mathcal{S} \to \Theta_r$.  
Since both $\Phi$ and its inverse are continuous, the continuous mapping theorem yields
\[
\hat{\theta} = \Phi^{-1}(\hat{S}) \pto \Phi^{-1}(S^*) = \theta^*.
\]

\end{proof}

\subsection{Asymptotic normality and nondegeneracy of the Fisher information}
\label{app:normality}
The asymptotic normality of the constrained maximum likelihood estimator requires the nonsingularity of the Fisher information matrix
\[
I_{\Theta}(\theta^*)
=
\sum_{(k,i,j)\in\Omega_w}
w_{kij} p_{kij}^*(1-p_{kij}^*)
\nabla_\theta \eta_{kij}(\theta^*)
\nabla_\theta \eta_{kij}(\theta^*)^\top.
\]
In this section, we first argue that such nondegeneracy can be derived from the active comparison graph structure together with a local regularity condition on the structured score map, and then we establish the aysmptotic normality.
The argument is inspired by the graph-Laplacian viewpoint for pairwise-comparison Fisher information developed in \cite{han2025statisticalinferencepairwisecomparison}, where positivity of the Fisher information is obtained from graph connectivity (after quotienting out the additive invariance of pairwise-comparison
models). In our setting, the argument proceeds in two layers:
\begin{enumerate}
    \item establish positive definiteness of the Fisher information in the ambient row-centered score space \(\mathcal S\) via judge-wise connected active graphs; and
    \item pull this positivity back to the structured parameter space \(\Theta_r\) through the differential of the map \(\Phi(\theta)=S(\theta)=\gamma\mu^\top + UV^\top\).
\end{enumerate}
The second step is needed because our inferential target is not the unrestricted score matrix
\(S\), but the constrained latent parameter \(\theta=(\gamma,\mu,U,V)\).

We first isolate the local regularity condition needed to pass from score-space nondegeneracy to parameter-space nondegeneracy.

\begin{assumption}[Local chart regularity of the structured score map]
\label{asm:local-chart-regular}
Let \(\theta^*=(\gamma^*,\mu^*,U^*,V^*)\in\Theta_r\) denote the true parameter.
After fixing the column signs of \((U,V)\) according to a local sign convention, there exists a neighborhood of \(\theta^*\) in \(\Theta_r\) that is parameterized by a smooth local chart on the manifold. Moreover, the differential of the structured score map
\[\Phi:\Theta_r \to \mathcal S,
\qquad
\Phi(\theta)=S(\theta)=\gamma\mu^\top + UV^\top,
\]
is injective at \(\theta^*\) on the tangent space \(T_{\theta^*}\Theta_r\). That is,
\[D\Phi(\theta^*)[\delta\theta]=0,\ \delta\theta\in T_{\theta^*}\Theta_r
\quad\Longrightarrow\quad\delta\theta=0.
\]
\end{assumption}
Assumption \ref{asm:local-chart-regular} is the infinitesimal analog of parameter identifiability at the local chart of $\theta^*$.
Now we can state the proposition that is crucial to the proof of asymptotic normality.
\begin{proposition}[Graph-induced nondegeneracy of the Fisher information]
    \label[proposition]{prop:fisher_graph_nondegenerate}
    Suppose \Cref{asm:sampling,asm:graph,asm:score-space} hold, and assume in addition \Cref{asm:local-chart-regular}. Let
    \[
    S^*:=S(\theta^*)=\gamma^*{\mu^*}^\top + U^*{V^*}^\top.
    \]
    
    Then the Fisher information matrix
    \[
    I_{\Theta}(\theta^*)
    =
    \sum_{(k,i,j)\in\Omega_w}
    \omega_{kij}\,
    \nabla_\theta \eta_{kij}(\theta^*)
    \nabla_\theta \eta_{kij}(\theta^*)^\top
    \]
    is positive definite on the identified tangent space \(T_{\theta^*}\Theta_r\); that is,
    \[
    \delta\theta^\top I_{\Theta}(\theta^*)\delta\theta>0
    \qquad
    \text{for every }0\neq \delta\theta\in T_{\theta^*}\Theta_r.
    \]
    Consequently, \(I_\Theta(\theta^*)\) is nonsingular in the local identified coordinates.
\end{proposition}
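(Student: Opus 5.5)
The plan is to factor the Fisher information through the differential of the structured score map $\Phi$ and then show positivity in two layers: first in the row-centered score space, then pulled back to $\Theta_r$. Write $\omega_{kij}:=w_{kij}p^*_{kij}(1-p^*_{kij})$. Because $\eta_{kij}(\theta)=\eta_{kij}(S(\theta))=S_{ki}-S_{kj}$ is linear in $S$, the chain rule gives, for any $\delta\theta\in T_{\theta^*}\Theta_r$,
\[
\nabla_\theta\eta_{kij}(\theta^*)^\top\delta\theta=(e_i-e_j)^\top(\delta S_{k,:})^\top,\qquad \delta S:=D\Phi(\theta^*)[\delta\theta]=\delta\gamma\,{\mu^*}^\top+\gamma^*\delta\mu^\top+\delta U\,{V^*}^\top+U^*\delta V^\top .
\]
Hence the quadratic form takes the form
\[
\delta\theta^\top I_\Theta(\theta^*)\delta\theta=\sum_{k=1}^K\sum_{\{i,j\}\in\mathcal E_k^w}\omega_{kij}\bigl(\delta S_{ki}-\delta S_{kj}\bigr)^2=\sum_{k=1}^K\delta S_{k,:}\,L_k\,\delta S_{k,:}^\top,
\]
where $L_k$ is the weighted graph Laplacian of $\mathcal G_k^w$ with edge weights $\omega_{kij}$.

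In the first layer, I will show that this Laplacian form is positive definite on the centered space $\{\delta S:\delta S\one_N=\zero\}$. By \Cref{asm:sampling} and \Cref{asm:graph}, every active edge has $w_{kij}>0$ and $0<p^*_{kij}<1$, so $\omega_{kij}>0$. Each $\mathcal G_k^w$ is connected, so by the standard fact about Laplacians of connected graphs, $\ker L_k=\mathrm{span}(\one_N)$. Consequently, the form vanishes only if every row $\delta S_{k,:}$ is constant, and the centering condition then forces $\delta S=\zero$. This is the same argument as Part (i) of \Cref{prop:ident}, in infinitesimal form.

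In the second layer, I will verify that $D\Phi(\theta^*)$ maps $T_{\theta^*}\Theta_r$ into the centered space. I will differentiate the centering constraints of \Cref{cond:norm}\ref{cond:center}, namely $\one_N^\top\mu=0$ and $V^\top\one_N=\zero$, along curves in $\Theta_r$. This gives $\one_N^\top\delta\mu=0$ and $\delta V^\top\one_N=\zero$, and together with ${\mu^*}^\top\one_N=0$ and ${V^*}^\top\one_N=\zero$ it yields $\delta S\one_N=\zero$. Now suppose $\delta\theta\neq 0$. By \Cref{asm:local-chart-regular}, $\delta S\neq 0$, so by the first layer the form is strictly positive. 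This proves positive definiteness on $T_{\theta^*}\Theta_r$. In the local chart $\xi$, this means $J^\top I_\Theta J\succ0$, where $J$ is the chart Jacobian, which gives nonsingularity in identified coordinates.

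I expect the main obstacle to be the bookkeeping of the tangent space. One must check that the chart from \Cref{asm:local-chart-regular} really produces tangent vectors satisfying the linearized centering constraints; the remaining constraints (orthogonality and anchoring) are not needed for this argument. One must also check that working with $\nabla_\theta$ in ambient coordinates restricted to $T_{\theta^*}\Theta_r$ agrees with the Fisher information defined in the chart. I would handle this by writing $\theta=\psi(\xi)$ and applying the chain rule once. Then $\nabla_\xi\eta=J^\top\nabla_\theta\eta$, and the whole argument transfers with $\delta\theta=J\delta\xi$, using that $J$ is injective because the chart is smooth.
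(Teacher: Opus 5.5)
Your proposal is correct and follows the paper's proof essentially step for step. You write the quadratic form as \(\sum_k \delta S_{k,:} L_k \delta S_{k,:}^\top\), use connectivity of each \(\mathcal G_k^w\) to get \(\ker L_k=\mathrm{span}(\one_N)\) and hence positivity on the row-centered space, and then pull back through \(D\Phi(\theta^*)\) using \Cref{asm:local-chart-regular}. Your explicit check that \(D\Phi(\theta^*)\) maps \(T_{\theta^*}\Theta_r\) into the centered space, by differentiating \(\one_N^\top\mu=0\) and \(V^\top\one_N=\zero\), fills in a step the paper only asserts, as does your chart-level remark \(J^\top I_\Theta J\succ 0\).
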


We begin to prove the second part of \Cref{thm:asymptotics}. Proof of \Cref{prop:fisher_graph_nondegenerate} follows afterwards.
\begin{proof}[Proof of \Cref{thm:asymptotics}: asymptotic normality]
In \Cref{app:consistency}, we have established $\hat{S} \pto S^*$. Asymptotic normality of $\hat{S}$ can follow this consistency. Since $\mathcal{S}$ is a finite-dimensional Euclidean space and $M_n$ is a smooth criterion (the binomial log-likelihood is twice continuously differentiable in $S$), standard asymptotic MLE theory applies, i.e., consistency combined with the local positive definiteness 
of the Hessian (equivalently, the Fisher information in the $S$-parameterization) implies asymptotic normality. But \Cref{prop:fisher_graph_nondegenerate} guarantees the nonsingularity of the Fisher information in the $S$-space, so the local positive definiteness condition is satisfied.
This yields $\sqrt{n}(\hat{S} - S^*) \dto \mathcal{N}(0, \mathcal{I}_{\mathcal{S}}(S^*)^{-1})$.

We now pass this normality from the $S$-space to the parameter space $\Theta_r$.
By the delta method applied to $\Phi^{-1}$,
\[
\sqrt{n}(\hat{\theta} - \theta^*) = \sqrt{n} D\Phi^{-1}(S^*)(\hat{S} - S^*) + o_p(1) 
\dto \mathcal{N}(0, D\Phi^{-1}(S^*) \Sigma_S [D\Phi^{-1}(S^*)]^\top),
\]

The limiting covariance in the $S$-space is $\Sigma_S = \mathcal{I}_{\mathcal{S}}(S^*)^{-1}$, where
\[
\mathcal{I}_{\mathcal{S}}(S^*) = \sum_{(k,i,j)\in\Omega_w} w_{kij} p_{kij}^*(1-p_{kij}^*) \nabla_S \eta_{kij}(S^*) \nabla_S \eta_{kij}(S^*)^\top
\]
is positive definite by \Cref{prop:fisher_graph_nondegenerate} (Step 1--2 of its Proof).


The likelihood depends on $\theta$ only through the linear predictor $\eta_{kij}(\theta)$. By the chain rule, for each active triple $(k,i,j) \in \Omega_w$,
\begin{equation}
\frac{\partial \eta_{kij}}{\partial \theta} 
= 
\frac{\partial \eta_{kij}}{\partial S} \cdot \frac{\partial S}{\partial \theta}
=
\nabla_S \eta_{kij}(S^*) \cdot D\Phi(\theta^*),
\label{eq:chain-rule-eta}
\end{equation}
where $D\Phi(\theta^*)$ is the Jacobian matrix of the score map $\Phi(\theta) = \gamma\mu^\top + UV^\top$ evaluated at $\theta^*$.

The Fisher information in the structured parameter space is therefore
\begin{align}
\mathcal{I}_\Theta(\theta^*)
&=
\sum_{(k,i,j)\in\Omega_w}
w_{kij} p_{kij}^*(1-p_{kij}^*)
\left(\frac{\partial \eta_{kij}}{\partial \theta}\right)
\left(\frac{\partial \eta_{kij}}{\partial \theta}\right)^\top
\nonumber\\
&=
\sum_{(k,i,j)\in\Omega_w}
w_{kij} p_{kij}^*(1-p_{kij}^*)
D\Phi(\theta^*)^\top 
\nabla_S \eta_{kij}(S^*) 
\nabla_S \eta_{kij}(S^*)^\top 
D\Phi(\theta^*)
\nonumber\\
&=
D\Phi(\theta^*)^\top 
\left[
\sum_{(k,i,j)\in\Omega_w}
w_{kij} p_{kij}^*(1-p_{kij}^*)
\nabla_S \eta_{kij}(S^*) 
\nabla_S \eta_{kij}(S^*)^\top
\right]
D\Phi(\theta^*)
\nonumber\\
&=
D\Phi(\theta^*)^\top \, \mathcal{I}_{\mathcal{S}}(S^*) \, D\Phi(\theta^*).
\label{eq:fisher-transform}
\end{align}

\Cref{asm:local-chart-regular} ensures that $D\Phi(\theta^*)$ is injective. Combined with the positive definiteness of $\mathcal{I}_{\mathcal{S}}(S^*)$, the conjugation by $D\Phi(\theta^*)$ preserves positive definiteness.

Therefore,
\[
\sqrt{n}(\hat\theta - \theta^*) \dto \mathcal{N}(0, \mathcal{I}_{\Theta}(\theta^*)^{-1}),
\]
and for any smooth scalar target functional $q(\theta)$, the delta method yields
\[
\sqrt{n}\{q(\widehat{\theta})-q(\theta^*)\}
\dto
\mathcal{N}\!\left(
    0,
    \nabla_{\theta}q(\theta^*)^\top
    \cI(\theta^*)^{-1}
    \nabla_{\theta}q(\theta^*)
\right),
\]
completing the proof.
\end{proof}

\begin{proof}[Proof of \Cref{prop:fisher_graph_nondegenerate}]
The proof is divided into three steps.

\paragraph{Step 1: Score-space Fisher information as a weighted Laplacian form.}
Temporarily parameterize the model by the row-centered score matrix
\[
\mathcal S:=\{S\in\mathbb R^{K\times N}:S\one_N=0\}.
\]
For a fixed active triple \((k,i,j)\in\Omega_w\), recall that
\[
\eta_{kij}(S):=S_{ki}-S_{kj}.
\]
Under the binomial Bradley--Terry--Luce model, the population Fisher information in the ambient score parameterization is
\[
I_{\mathcal S}(S^*)
=
\sum_{(k,i,j)\in\Omega_w}
\omega_{kij}\,
\nabla_S \eta_{kij}(S^*)
\nabla_S \eta_{kij}(S^*)^\top,
\]
where \(\omega_{kij}=w_{kij}p_{kij}^*(1-p_{kij}^*)>0\).

Let \(H=(H_{kn})\in\mathcal S\) be an arbitrary score perturbation. Then the corresponding quadratic form is
\begin{equation}
\label{eq:score_fisher_quadratic}
\langle H, I_{\mathcal S}(S^*)H\rangle=\sum_{(k,i,j)\in\Omega_w}\omega_{kij}(H_{ki}-H_{kj})^2.
\end{equation}
Indeed, for each triple \((k,i,j)\), the directional derivative of \(\eta_{kij}(S)\) along \(H\) is
\[
D\eta_{kij}(S^*)[H]=H_{ki}-H_{kj},
\]
which gives \eqref{eq:score_fisher_quadratic}.

Now fix a judge \(k\in[K]\), and write \(h_k=(H_{k1},\dots,H_{kN})^\top\in\mathbb R^N\).
Then the \(k\)-th summand in \eqref{eq:score_fisher_quadratic} becomes
\[
\sum_{\{i,j\}\in E_k^w}\omega_{kij}(h_{k,i}-h_{k,j})^2=h_k^\top L_k h_k,
\]
where \(L_k\) is the weighted graph Laplacian of the active graph \(G_k^w=([N],E_k^w)\) with edge weights \(\omega_{kij}\). Therefore
\begin{equation}
\label{eq:block_laplacian_decomp}
\langle H, I_{\mathcal S}(S^*)H\rangle
=
\sum_{k=1}^K h_k^\top L_k h_k.
\end{equation}

\paragraph{Step 2: Judge-wise graph connectivity implies positivity on \(\mathcal S\).}
By \Cref{asm:graph}, each active graph \(G_k^w\) is connected.
Because \(L_k\) is the weighted Laplacian of the connected graph \(G_k^w\) with strictly positive edge weights, for every \(x\in\mathbb R^N\),
\[
x^\top L_k x=\sum_{\{i,j\}\in E_k^w}\omega_{kij}(x_i-x_j)^2 \ge 0.
\]
Hence \(x\in\ker(L_k)\) implies \(x^\top L_k x=0\), so \(x_i=x_j\) for every edge \(\{i,j\}\in E_k^w\). Since \(G_k^w\) is connected, \(x\) must be constant on all vertices,
that is, \(x=c\one_N\). Conversely, \(L_k\one_N=0\). Therefore,
\[
\ker(L_k)=\mathrm{span}\{\one_N\}.
\]

Hence, for each \(k\),
\[
h_k^\top L_k h_k=0\quad\Longleftrightarrow\quad h_k=c_k \one_N
\text{ for some }c_k\in\mathbb R.
\]
But \(H\in\mathcal S\) means every row of \(H\) is centered, i.e.
\[
H\one_N=0.
\]
Equivalently,
\[
\one_N^\top h_k=0 \qquad\text{for all }k\in[K].
\]
If \(h_k=c_k\one_N\), then \(0=\one_N^\top h_k = Nc_k\), so \(c_k=0\), and therefore \(h_k=0\).
Since this holds for every \(k\), we conclude that
\[
\langle H, I_{\mathcal S}(S^*)H\rangle=0
\quad\Longrightarrow\quad
H=0.
\]
Thus \(I_{\mathcal S}(S^*)\) is positive definite on \(\mathcal S\):
\begin{equation}
\label{eq:score_PD}
\langle H, I_{\mathcal S}(S^*)H\rangle>0
\qquad
\forall\, 0\neq H\in\mathcal S.
\end{equation}

\paragraph{Step 3: Pulling back positivity to the structured parameter space $\Theta_r$.}
Let \(\delta\theta\in T_{\theta^*}\Theta_r\) be an arbitrary tangent perturbation, and define the induced score perturbation
\[
H:=D\Phi(\theta^*)[\delta\theta]\in\mathcal S.
\]
Since
\[
\Phi(\theta)=\gamma\mu^\top + UV^\top,
\]
the first-order variation of \(S(\theta)\) at \(\theta^*\) is
\begin{equation}
\label{eq:dPhi_expansion}
H
=
\delta\gamma\,{\mu^*}^\top
+
\gamma^*\,\delta\mu^\top
+
\delta U\,{V^*}^\top
+
U^*\,\delta V^\top.
\end{equation}

For each active triple \((k,i,j)\in\Omega_w\), the directional derivative of the structured logit difference satisfies
\[
D\eta_{kij}(\theta^*)[\delta\theta]
=
H_{ki}-H_{kj}.
\]
Equivalently,
\[
\nabla_\theta \eta_{kij}(\theta^*)^\top \delta\theta
=
H_{ki}-H_{kj}.
\]
Therefore,
\begin{align}
\delta\theta^\top I_\Theta(\theta^*)\delta\theta
&=
\sum_{(k,i,j)\in\Omega_w}\omega_{kij}\bigl(\nabla_\theta \eta_{kij}(\theta^*)^\top\delta\theta\bigr)^2\nonumber\\
&=\sum_{(k,i,j)\in\Omega_w}\omega_{kij}(H_{ki}-H_{kj})^2\nonumber\\
&=\langle H, I_{\mathcal S}(S^*)H\rangle.
\label{eq:pullback_quadratic}
\end{align}

Now suppose
\[
\delta\theta^\top I_\Theta(\theta^*)\delta\theta=0.
\]
Then \eqref{eq:pullback_quadratic} and \eqref{eq:score_PD} imply that \(H=0\). That is,
\[
D\Phi(\theta^*)[\delta\theta]=0.
\]
By Assumption \ref{asm:local-chart-regular}, the differential \(D\Phi(\theta^*)\) is injective
on \(T_{\theta^*}\Theta_r\), hence \(\delta\theta=0\).

We have thus shown that
\[
\delta\theta^\top I_\Theta(\theta^*)\delta\theta>0
\qquad
\forall\, 0\neq \delta\theta\in T_{\theta^*}\Theta_r,
\]
which proves positive definiteness of \(I_\Theta(\theta^*)\) on the identified tangent space.
Therefore \(I_\Theta(\theta^*)\) is nonsingular in any local identified coordinate system.
\end{proof}

\begin{remark}[Interpretation of the two-layer argument]
\label{rem:two_layer_argument}
The proposition separates the source of nondegeneracy into two logically distinct parts.

Unlike in ordinary BTL models, graph connectivity alone is not sufficient for Fisher nondegeneracy in the structured latent-factor parameterization, due to parameter-level identifiability from local regularity of the structured map. It must be combined with local injectivity of the decomposition map reflected in \Cref{asm:local-chart-regular}.
\end{remark}

\begin{remark}[Relation to the pairwise-comparison literature]
\label{rem:relation_pairwise_inference}
The proof above is inspired by recent asymptotic-inference theory for pairwise-comparison models, where the Fisher information is represented as a graph-Laplacian quadratic form and its nondegeneracy is deduced from graph connectivity after quotienting out the additive invariance of pairwise scores \citep{han2025statisticalinferencepairwisecomparison}. The present argument extends that idea to the heterogeneous judge-aware model by treating the structured parameter space \(\Theta_r\) as a smooth submanifold of the row-centered score space \(\mathcal S\), and then pulling back the ambient Fisher information through the differential of the map \(\Phi(\theta)=\gamma\mu^\top+UV^\top\).

In other words, the graph structure controls nondegeneracy in the ambient score geometry, whereas the constraint geometry of \(\Theta_r\) controls whether such nondegeneracy is retained after passing to the latent parameters.
\end{remark}

\subsection{Auxiliary lemmas} \label{app:lemmas}
Two lemmas used in the proof of \Cref{thm:asymptotics} are established below.

\begin{lemma}[Injectivity of active score differences]
\label[lemma]{lem:active-ident}
Under \Cref{asm:graph}, if $S,S'\in\mathcal S$ satisfy
\[
S_{ki}-S_{kj}=S'_{ki}-S'_{kj}
\qquad\text{for all }(k,i,j)\in\Omega_w,
\]
then $S=S'$.
\end{lemma}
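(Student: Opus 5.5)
The plan is to reduce the statement to a judge-by-judge argument on the difference matrix $\Delta := S - S'$. Since $\mathcal S$ imposes the row-centering constraint $S\one_N=\zero$, we have $\Delta\one_N=\zero_K$. The hypothesis becomes $\Delta_{ki}-\Delta_{kj}=0$ for every active triple $(k,i,j)\in\Omega_w$. The rank condition in \Cref{cond:S-rank} plays no role here; only row-centering and the connectivity part of \Cref{asm:graph} are needed.

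Fix a judge $k$ and write $\delta_k=\Delta_{k,:}^\top\in\RR^N$. The hypothesis says that $\delta_{k,i}=\delta_{k,j}$ for every edge $\{i,j\}\in\mathcal E_k^w$ of the active graph $\mathcal G_k^w$. By \Cref{asm:graph}, $\mathcal G_k^w$ is connected. So for any two items $a,b\in[N]$ there is a path $a=i_0,i_1,\dots,i_m=b$ with each $\{i_{\ell-1},i_\ell\}\in\mathcal E_k^w$. Chaining the equalities along this path gives $\delta_{k,a}=\delta_{k,b}$. Hence $\delta_k=c_k\one_N$ for some $c_k\in\RR$.

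Equivalently, one can reuse Step 2 of the proof of \Cref{prop:fisher_graph_nondegenerate}. With unit edge weights, $\delta_k^\top L_k\delta_k=0$, and $\ker(L_k)=\mathrm{span}\{\one_N\}$ for a connected graph, which yields the same conclusion.

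Finally, row-centering gives $0=\one_N^\top\delta_k=Nc_k$, so $c_k=0$ and $\delta_k=\zero$. Doing this for every $k\in[K]$ gives $\Delta=0$, that is, $S=S'$.

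There is no real obstacle. The only point needing care is that the hypothesis covers only the asymptotically active triples $\Omega_w$, not all of $\Omega$. That is exactly why connectivity must be assumed for $\mathcal G_k^w$ rather than $\mathcal G_k$, which is how \Cref{asm:graph} is stated.
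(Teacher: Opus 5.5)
Your proposal is correct and follows essentially the same route as the paper's proof. The paper likewise fixes a judge $k$, sets $h_k = S_{k,:}-S'_{k,:}$, and uses connectivity of the active graph $\mathcal G_k^w$ to show $h_k$ is constant. It then uses row-centering to force that constant to zero; your path-chaining argument and the optional Laplacian-kernel remark are faithful expansions of these steps.
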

\begin{proof}[Proof of \Cref{lem:active-ident}]
Fix a judge $k$ and let $h_k:=S_{k,:}-S'_{k,:}\in\RR^N$. The assumption implies
\[
h_{ki}-h_{kj}=0
\qquad\text{for every }\{i,j\}\in\mathcal E_k^w.
\]
Because the graph $\mathcal G_k^w$ is connected, $h_k$ must be constant across items, say
\[
h_k=c_k\one_N.
\]
Since both $S$ and $S'$ lie in the centered space $\mathcal S$, we also have
\[
h_k^\top\one_N=0.
\]
Therefore $Nc_k=0$, so $c_k=0$ and hence $h_k=0$. This holds for every judge $k$, and thus $S=S'$.
\end{proof}

\begin{lemma}[Coercivity]
\label[lemma]{lem:score-coercive}
Under \Cref{asm:graph}, the population criterion $M$ in \eqref{eq:population-score-criterion} is coercive on $\mathcal S$, i.e.
\[
\|S\|_F\to\infty,\quad S\in\mathcal S
\qquad\Longrightarrow\qquad
M(S)\to\infty.
\]
Moreover, with probability tending to one, the sample criterion $M_n$ in \eqref{eq:score-nll} is also coercive on $\mathcal S$.
\end{lemma}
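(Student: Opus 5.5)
The proof reduces the criterion on $\mathcal S$ to a sum of judge-wise BTL terms and uses the graph structure of \Cref{asm:graph} to bound each term below by a linear function of $\|S\|_F$. Write $M(S)=\sum_{(k,i,j)\in\Omega} w_{kij}\,\ell(p^*_{kij},\eta_{kij}(S))$ with $\ell(p,\eta)=-p\eta+\log(1+e^\eta)$. Each summand is nonnegative, since $\ell(p,\eta)\ge -p\eta+\max(0,\eta)\ge 0$ for $p\in[0,1]$, so it suffices to control the active triples $\Omega_w$. For $0<p<1$ we have the elementary bound
\[
\ell(p,\eta)\ \ge\ \min\{p,1-p\}\,|\eta|.
\]
If $\eta\ge 0$, then $\ell\ge(1-p)\eta$; if $\eta<0$, then $\ell\ge -p\eta$. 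Set $c:=\min_{(k,i,j)\in\Omega_w} w_{kij}\min\{p^*_{kij},1-p^*_{kij}\}$. Then $c>0$ by \Cref{asm:graph} and the definition of $\Omega_w$, and
\[
M(S)\ \ge\ c\sum_{(k,i,j)\in\Omega_w}|S_{ki}-S_{kj}|.
\]

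The key step is a norm equivalence on the centered space. The map $S\mapsto \sum_{\Omega_w}|S_{ki}-S_{kj}|$ is a seminorm on $\{S:S\one_N=\zero\}$. It vanishes only at $S=0$, which is exactly \Cref{lem:active-ident} with $S'=0$, equivalently the connectivity of each $\mathcal G_k^w$. It is therefore a norm on this finite-dimensional space, so there is $\kappa>0$ with $\sum_{\Omega_w}|S_{ki}-S_{kj}|\ge\kappa\|S\|_F$. Explicitly, one can take $\kappa$ proportional to the reciprocal of the maximal path length in the spanning trees. Hence $M(S)\ge c\kappa\|S\|_F\to\infty$ on $\mathcal S$. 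This holds a fortiori on the subset satisfying \Cref{cond:S-rank}, which proves population coercivity. In particular, for any $\delta$, an $R$ as required in \eqref{eq:score-ball-gap} exists.

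For the sample criterion, apply the same bound with $w_{kij}$ replaced by $n_{kij}/n$ and $p^*_{kij}$ by $\bar Y_{kij}$. By \Cref{asm:sampling} and the strong law, with probability tending to one every active triple has $n_{kij}/n\ge w_{kij}/2$ and $\bar Y_{kij}\in[p^*_{kij}/2,(1+p^*_{kij})/2]$. The latter gives $\min\{\bar Y_{kij},1-\bar Y_{kij}\}\ge \tfrac12\min\{p^*_{kij},1-p^*_{kij}\}$. Since $\Omega$ is finite, this event has probability tending to one. On it, nonnegativity of the remaining summands gives $M_n(S)\ge (c/4)\kappa\|S\|_F$ uniformly in $S$. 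This yields coercivity of $M_n$ with a deterministic linear rate.

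The consistency proof also uses the sample tail bound $\inf_{\|S\|_F>R}M_n(S)>M(S^*)+2\delta$. The linear rate gives this by enlarging $R$ if needed to $R\ge 4(M(S^*)+2\delta)/(c\kappa)$. The main obstacle is the norm-equivalence step, which is where connectivity is essential. The only subtlety is ensuring that $\bar Y_{kij}$ stays away from $\{0,1\}$, which is why the interior condition $0<p^*_{kij}<1$ in \Cref{asm:graph} is needed.
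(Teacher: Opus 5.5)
Your proof is correct and follows the same route as the paper. Both arguments rest on three facts: every summand is nonnegative, each active term diverges as $|\eta|\to\infty$ because $0<p^*_{kij}<1$, and connectivity plus row-centering forces some active difference $|S_{ki}-S_{kj}|$ to blow up (the paper uses a path-telescoping bound where you use norm equivalence); the same argument is then repeated for $M_n$ on a high-probability event. Your quantitative version, $M(S)\ge c\kappa\|S\|_F$ and $M_n(S)\ge (c\kappa/4)\|S\|_F$ on an event keeping $\bar Y_{kij}$ and $n_{kij}/n$ uniformly away from the boundary, is slightly stronger than the paper's event $\{0<Y_{kij}<n_{kij}\}$: that event gives only realization-wise coercivity, whereas your bound supplies the $n$-independent radius $R$ that the consistency proof uses in $\inf_{\|S\|_F>R}M_n(S)>M(S^*)+2\delta$.
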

\begin{proof}[Proof of \Cref{lem:score-coercive}]
We first prove the coercivity of the population criterion.

For each fixed active triple $(k,i,j)\in\Omega_w$, define
\[
\phi_{kij}(x):=-p_{kij}^\ast x+\log(1+e^x).
\]
Because $p_{kij}^\ast\in(0,1)$, the function $\phi_{kij}$ is continuous, nonnegative, and satisfies
\[
\phi_{kij}(x)\to\infty
\qquad\text{as }|x|\to\infty.
\]
Indeed, as $x\to+\infty$,
\[
\phi_{kij}(x)=(1-p_{kij}^\ast)x+o(x)\to\infty,
\]
while as $x\to-\infty$,
\[
\phi_{kij}(x)=-p_{kij}^\ast x+o(1)\to\infty.
\]

Now fix a judge $k$ and write the $k$th row of $S$ as
\[
s_k:=(S_{k1},\dots,S_{kN})^\top\in\RR^N.
\]
Since $S\in\mathcal S$, we have $s_k^\top\one_N=0$. Because $\mathcal G_k^w$ is connected and $N$ is fixed, we have
\begin{equation}\label{eq:row-graph-bound}
\|s_k\|_2
\le \sqrt{N}\|s_k\|_1 \le
2\sqrt{N}\sum_{\{i,j\}\in\mathcal E_k^w}|s_{k,i}-s_{k,j}|.
\end{equation}
To see this, fix a reference node in $\mathcal G_k^w$ and telescope along paths from the reference node to each other node; since the graph is finite and connected, all resulting path lengths are uniformly bounded.

Suppose now that $\|S\|_F\to\infty$ with $S\in\mathcal S$. Then for some judge $k$ we must have $\|s_k\|_2\to\infty$. By \eqref{eq:row-graph-bound}, this implies that for at least one active edge $\{i,j\}\in\mathcal E_k^w$,
\[
|S_{ki}-S_{kj}|=|s_{k,i}-s_{k,j}|\to\infty.
\]
Since the corresponding summand in $M(S)$ is
\[
w_{kij}\phi_{kij}(S_{ki}-S_{kj}),
\]
with $w_{kij}>0$, it follows that this term diverges to $+\infty$. All other summands are nonnegative, and therefore
\[
M(S)\to\infty.
\]
This proves population coercivity.

We next prove sample coercivity. On the event
\[
\mathcal A_n:=\{0<Y_{kij}<n_{kij}\text{ for all }(k,i,j)\in\Omega_w\},
\]
define
\[
\phi_{kij,n}(x):=-\bar Y_{kij}x+\log(1+e^x).
\]
Whenever $0<\bar Y_{kij}<1$, the same argument as above shows that
\[
\phi_{kij,n}(x)\to\infty
\qquad\text{as }|x|\to\infty.
\]
Since $p_{kij}^\ast\in(0,1)$ and $n_{kij}\to\infty$ for each active triple, we have
\[
\PP\bigl(Y_{kij}=0 \text{ or } Y_{kij}=n_{kij}\bigr)\to 0,
\]
and therefore, by a union bound over the finite set $\Omega_w$,
\[
\PP(\mathcal A_n)\to1.
\]
On $\mathcal A_n$, the same graph argument used for the population criterion shows that
\[
\|S\|_F\to\infty,\quad S\in\mathcal S
\qquad\Longrightarrow\qquad
M_n(S)\to\infty.
\]
Thus $M_n$ is coercive on $\mathcal S$ with probability tending to one.
\end{proof}

\subsection{Partial sampling scheme} \label{subsec:partial-sample}

Let $\mathcal{G}:=([N],\mathcal{E})$ denote the pooled graph where $\mathcal{E}:=\{\{i,j\}:(k,i,j)\in\Omega_w\text{ for some }k\}$ denotes the edge set. If the graph belonging to a judge, say, $\mathcal{G}_k$ is not connected, then the sampling is considered partial. 
For example, a judge may compare Item 1, 2, 3 and 4, 5, 6, but the graphs of 1--2--3 and 4--5--6 are separated. Under such a scenario, identification becomes the major obstacle to statistical analysis. The judge may assign higher scores to all 1, 2, and 3, and lower scores to all 4, 5, and 6 without breaking \Cref{eq:S-id} or altering the likelihood. Therefore, assumptions should be made to rule out such ambiguity: the true parameter shall lie in a region where such perturbations must make it impossible for $S$ to still admit the structured representation \eqref{eq:score-decomp} with unchanged $r$. 

We define the structured score manifold
\[\mathcal{M}_r:=\{S=\gamma\mu^\top+UV^\top:(\gamma,\mu,U,V)\in\Theta_r\}
\cap \mathcal{S},
\]
and let $\mathcal{T}^*:=T_{S^*}\mathcal{M}_r$ be its tangent space at $S^*$ to characterize the potential local perturbations on $S^*$.
\begin{assumption}[Identification under partial overlap]
\label{asm:graph-partial}
Every nonzero local perturbation of the structured score matrix changes at least one observed pairwise log-odds difference. Equivalently, if $H\in\mathcal{T}^*$ satisfies
\[
H_{ki}-H_{kj}=0
\qquad\text{for all }(k,i,j)\in\Omega_w,
\]
then we must have $H=0$.
\end{assumption}
On the other hand, we assume connectivity on the pooled graph to ensure that there are shared nodes (items) that connect the disconnected parts of each judge, so information is expected to be aligned with the help of these connecting nodes.
\begin{assumption}[Connectivity of the pooled graph]\label{asm:pooled-connect}
    The pooled graph $\mathcal{G}=([N],\mathcal{E})$ with edge set $\mathcal{E}:=\{\{i,j\}:(k,i,j)\in\Omega_w\text{ for some }k\}$ is connected.
\end{assumption}
\begin{remark}[How overlap helps]
The point of \Cref{asm:graph-partial,asm:pooled-connect} is that overlap lets different judges' local comparisons be aligned through the shared low-rank structure. For example, suppose judge 1 compares items $(1,2)$ and judge 2 compares items $(2,3)$, but neither judge compares all three pairs. Item 2 then serves as an overlap point linking the two local pieces of information. \Cref{asm:graph-partial} dictates that, after imposing the structured model, such overlaps are rich enough that one cannot change the latent score matrix in a nontrivial way without altering at least one observed comparison.
\end{remark}
\begin{assumption}[Local compact parameter chart]
\label{asm:local-compact}
There exists a compact set $\Theta_{r,0}\subset\Theta_r$ such that $\theta^*$ lies in its relative interior after fixing the local anchor signs of the columns of $U$ and $V$. And the population structured criterion
\[
m(\theta):=\sum_{(k,i,j)\in\Omega_w}w_{kij}\Big[-p_{kij}^*\eta_{kij}(\theta)+\log\big(1+\exp\{\eta_{kij}(\theta)\}\big)\Big],
\]
has $\theta^*$ as its unique minimizer over $\Theta_{r,0}$.
\end{assumption}
This assumption replaces the coercivity established in \Cref{lem:score-coercive} under \Cref{asm:graph} (full sampling setting). As the identification constraint becomes obscured for the entire space $\mathcal{S}$, this assumption, together with \Cref{asm:local-compact}, dictates that even though the likelihood may be flat along component-wise shifts of an unrestricted row, those shifts should either be infeasible within the fixed-rank structured model or break the $\Theta_{r,0}$ boundary that we set.

By the standard asymptotic theory of MLE, the asymptotic properties of $\hat{\theta}$ of \Cref{thm:asymptotics} still hold under \Cref{asm:pooled-connect,asm:graph-partial,asm:sampling,asm:score-space,asm:graph-partial,asm:local-compact}.

In practice, however, we need to avoid this setting due to the difficulty in verifying \Cref{asm:local-compact,asm:graph-partial}, unless the sample size is really limited.

\clearpage

\section{Simulation Studies}\label{app:simulation}

This section details \Cref{subsec:comparison} on the specifications and settings.

\subsection{Main heterogeneous DGP}
\label{app:sim-main-dgp}

The main synthetic simulations use \(N=8\) items, \(K=4\) judges, true heterogeneity rank \(r=1\), and 50 independent repetitions for each condition. The sample-size experiment varies
\[
n_{\mathrm{cmp}} \in \{400,800,1200,1600,2000,2500,3000\}
\]
at heterogeneity scale \(h=1\). The heterogeneity experiment fixes \(n_{\mathrm{cmp}}=800\) and varies \(h\in\{0,0.5,1,2\}\). Error bars in Figure~\ref{fig:synthetic} report mean \(\pm 1.96\) standard errors across repetitions.

For each repetition, the true parameters are generated as follows. We draw \(\tilde\mu_i \overset{\iid}{\sim} \cN(0,1)\) and center it as
\[
\mu=\tilde\mu-\bar{\tilde\mu}\one_N ,
\]
so that \(\one_N^\top\mu=0\). We draw judge consensus sensitivities as \(\gamma=Kp\), where \(p\sim\mathrm{Dirichlet}(\one_K)\), so that \(\one_K^\top\gamma=K\). For \(r>0\), raw item and judge factor matrices are drawn with iid standard-normal entries. The item factor is column-centered and projected off the consensus direction,
\[
\widetilde V \leftarrow \widetilde V-\one_N\bar v^\top,\qquad
\widetilde V \leftarrow \widetilde V-\mu\frac{\mu^\top\widetilde V}{\max(\mu^\top\mu,\varepsilon)} .
\]
A QR decomposition is then applied in this constrained column space, yielding an item factor satisfying \(V^\top\one_N=0\) and \(\mu^\top V=0\). The judge factor is similarly column-centered,
\[
\widetilde U \leftarrow \widetilde U-\one_K\bar u^\top ,
\]
and orthogonalized in the zero-sum judge space, so that \(\one_K^\top U=0\).

The code assigns decreasing factor strengths \(r,r-1,\ldots,1\) and sets
\[
V=Q_V\operatorname{diag}\{\sqrt N\,\mathrm{strengths}\},
\qquad
U=Q_U\operatorname{diag}\{\sqrt K\,\mathrm{strengths}\}.
\]
The item factor is then re-centered, re-projected off \(\mu\), and the columns of \((U,V)\) are sign-canonicalized. Finally, both \(U\) and \(V\) are multiplied by \(\sqrt h\), so the heterogeneous score component \(UV^\top\) is scaled by \(h\). These operations preserve the zero-sum and orthogonality constraints.

By construction, the generated parameters satisfy
\[
\one_N^\top\mu=0,\qquad
\one_K^\top\gamma=K,\qquad
\one_K^\top U=0,\qquad
V^\top\one_N=0,\qquad
\mu^\top V=0.
\]
After QR normalization and canonical sign anchoring, the factors also satisfy the anchoring requirements in \Cref{cond:norm}. The true score matrix is then
\[
S=\gamma\mu^\top+UV^\top,
\]
and pairwise outcomes are generated according to the BTL model in \eqref{eq:binomial-model}.
We also use an auxiliary DGP from JA-Ranking, which uses the same \(\mu,\gamma\) draw but sets \(U=V=0\), so \(S_{ki}=\gamma_k\mu_i\). The experiment specifications and results are shown in \Cref{app:sim-ja-dgp}.

\subsection{Comparison design, fitting protocol, and metrics}
\label{app:sim-protocol}

For the design of experiment, comparisons are allocated using a near-balanced design over all judge-item-pair cells \((k,i,j)\) with \(i<j\). Let \(C=KN(N-1)/2\). Each cell receives either \(\lfloor n_{\mathrm{cmp}}/C\rfloor\) or \(\lceil n_{\mathrm{cmp}}/C\rceil\) comparisons, with the remaining cells selected uniformly without replacement. For each comparison in cell \((k,i,j)\), we draw
\[
Y_{kij}^{(\ell)}\sim \mathrm{Bernoulli}\{\sigma(S_{ki}-S_{kj})\},
\]
where \(Y_{kij}^{(\ell)}=1\) means that judge \(k\) prefers item \(i\) to item \(j\). The generated records are aggregated into upper-triangular comparison counts \(n_{kij}\) and win counts \(Y_{kij}\).

All methods are fit to the same aggregated counts. For pooled BTL, we aggregate comparisons over judges and fit a single centered score vector \(\hat\mu_{\mathrm{BTL}}\), then form
\[
\hat S_{\mathrm{BTL}}=\one_K\hat\mu_{\mathrm{BTL}}^\top .
\]
For JA-Ranking, we fit the sensitivity-only model
\[
\hat S_{\mathrm{JA}}=\hat\gamma\hat\mu^\top ,
\]
using the same centering and normalization convention as in the main text. For HJA, we fit the rank-\(r\) constrained model
\[
\hat S_{\mathrm{HJA}}=\hat\gamma\hat\mu^\top+\hat U\hat V^\top .
\]
All score-level metrics below are computed from these fitted score matrices.

For each method, score-entry intervals are obtained by applying the corresponding asymptotic covariance estimate to the fitted score-entry functional \(q(\theta)=S_{ki}\). For HJA, this uses the plug-in Fisher-information estimator from Theorem~\ref{thm:asymptotics}; for JA-Ranking and pooled BTL, we use the corresponding model-specific delta-method covariance estimates under their restricted parameterizations. For misspecified baselines, coverage is evaluated against the true heterogeneous score matrix \(S\), so undercoverage reflects both statistical uncertainty and structural bias.

We report four groups of metrics:
\begin{enumerate}[(1)]
    \item \textbf{Score-matrix error.}
    Score recovery is measured by the mean squared error
    \[
    \mathrm{MSE}
    =
    \frac{1}{KN}
    \sum_{k=1}^K\sum_{i=1}^N
    (\hat S_{ki}-S_{ki})^2 .
    \]

    \item \textbf{Consensus ranking recovery.}
    Let \(r_i^\star\) and \(\hat r_i\) denote the ranks of item \(i\) induced by the true consensus score \(\mu\) and the estimated consensus score \(\hat\mu\), respectively, with rank \(1\) assigned to the largest score. We report the Spearman rank correlation
    \[
    \rho_{\mathrm{sp}}(\mu,\hat\mu)
    =
    \operatorname{corr}
    \bigl(
    (r_1^\star,\ldots,r_N^\star),
    (\hat r_1,\ldots,\hat r_N)
    \bigr).
    \]
    When there are no ties, this is equivalently
    \[
    \rho_{\mathrm{sp}}(\mu,\hat\mu)
    =
    1-
    \frac{6\sum_{i=1}^N(r_i^\star-\hat r_i)^2}
    {N(N^2-1)}.
    \]

    We also report NDCG@\(N\). Let \(\hat\pi\) be the ordering of items sorted by decreasing \(\hat\mu_i\), and define the true relevance score
    \[
    \operatorname{rel}_i = 2^{N-r_i^\star}-1 .
    \]
    The discounted cumulative gain of the estimated ranking is
    \[
    \operatorname{DCG}@N
    =
    \sum_{\ell=1}^N
    \frac{\operatorname{rel}_{\hat\pi(\ell)}}{\log_2(\ell+1)}.
    \]
    Let \(\pi^\star\) be the ordering of items sorted by decreasing \(\mu_i\). The ideal discounted cumulative gain is
    \[
    \operatorname{IDCG}@N
    =
    \sum_{\ell=1}^N
    \frac{\operatorname{rel}_{\pi^\star(\ell)}}{\log_2(\ell+1)}.
    \]
    We define
    \[
    \operatorname{NDCG}@N
    =
    \frac{\operatorname{DCG}@N}{\operatorname{IDCG}@N}.
    \]

    \item \textbf{Uncertainty calibration.}
    Let \(C_{ki}^{0.95}\) denote the nominal 95\% confidence interval for the score entry \(S_{ki}\). Coverage is measured by
    \[
    \mathrm{Coverage}
    =
    \frac{1}{KN}
    \sum_{k=1}^K\sum_{i=1}^N
    \ind\{S_{ki}\in C_{ki}^{0.95}\}.
    \]
    Note that ``Unstructured BTL + SVD'' is excluded from this metric as truncated SVD is a non-smooth mapping prohibiting ordinary uncertainty quantification.

    \item \textbf{Judge-specific sign accuracy.}
    Judge-specific pairwise behavior is measured by the fraction of item-pair signs correctly recovered:
    \[
    \mathrm{SignAcc}
    =
    \frac{1}{K\binom{N}{2}}
    \sum_{k=1}^K
    \sum_{1\le i<j\le N}
    \ind
    \left\{
    \operatorname{sign}(S_{ki}-S_{kj})
    =
    \operatorname{sign}(\hat S_{ki}-\hat S_{kj})
    \right\}.
    \]
\end{enumerate}

\clearpage
\subsection{Convergence analysis} \label{app:subsec:sim-converge}

A run is declared converged when the relative likelihood decrease falls below the tolerance used in \Cref{alg:mle}.
\Cref{fig:convergence} offers a typical example of the convergence of \Cref{alg:mle}. Since each iteration step calculates a minimum, its convergence rate is fast.
\begin{figure}[!ht]
    \centering
    \includegraphics[width=0.9\linewidth]{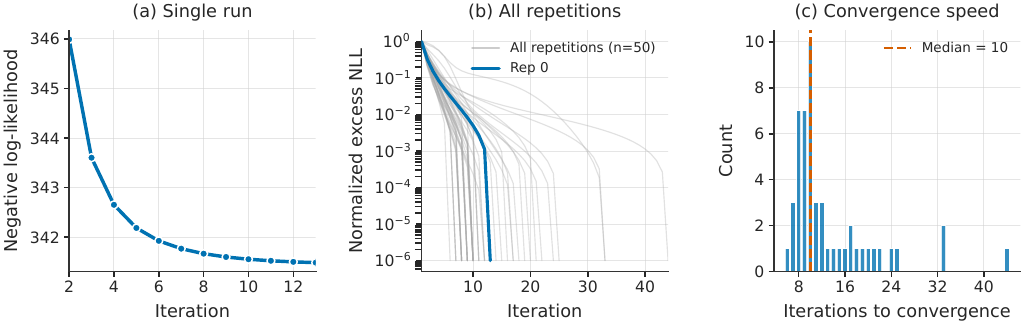}
    \caption{Convergence behavior of our proposed model. (a) Negative log-likelihood (NLL) decreases monotonically over iterations for a representative run. (b) Normalized excess NLL (log-scale) across $n=50$ repetitions shows consistent geometric convergence regardless of random seed. (c) Histogram of convergence iterations demonstrates that the algorithm consistently converges in a small, predictable number of steps.}\label{fig:convergence}
\end{figure}

\subsection{Auxiliary sensitivity-only DGP}
\label{app:sim-ja-dgp}

We also run an auxiliary simulation under a sensitivity-only data-generating process matching the structure of JA-Ranking. The true score matrix is
\[
S_{ki}=\gamma_k\mu_i,
\]
with no low-rank heterogeneous component. We vary
\[
n_{\mathrm{cmp}}\in\{400,800,1200,1600,2000,2500,3000\},
\]
and otherwise use the same comparison allocation, aggregation, and evaluation metrics as in \Cref{app:sim-protocol}. HJA is still fit with rank \(r=1\), so this experiment evaluates robustness to mild over-parameterization when the true heterogeneity rank is zero.

Figure~\ref{fig:sim-ja-dgp} reports the results. The proposed method remains stable despite fitting one redundant heterogeneous direction. In some finite-sample settings, HJA can match or slightly improve on the sensitivity-only fit. We interpret this as a numerical finite-sample effect rather than a structural advantage: when some generated \(\gamma_k\) values are close to zero, the log-parameterization used by the sensitivity-only baseline can become unstable, whereas the affine parameterization used by HJA does not require strict positivity of \(\gamma_k\).

\begin{figure}[!ht]
    \centering
    \includegraphics[width=1\linewidth]{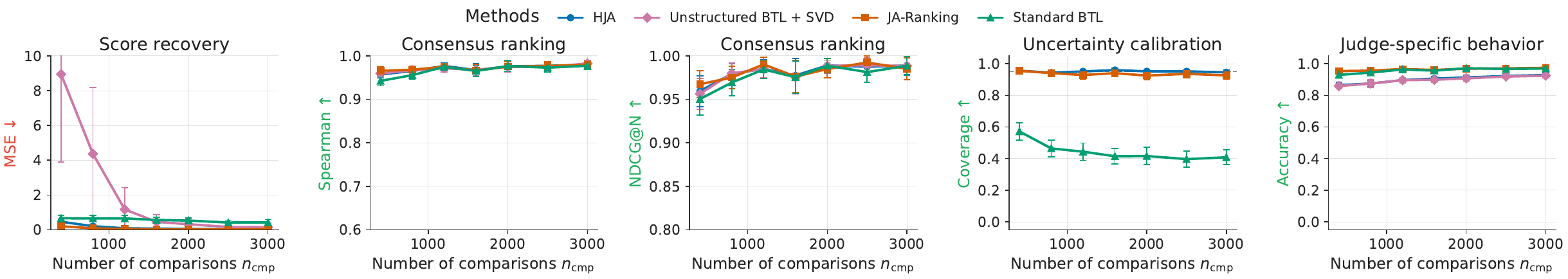}
    \caption{Synthetic performance under JA-Ranking data-generating process \cite{xu2026judgeaware}, where the true score matrix has \(S_{ki}=\gamma_k\mu_i\) and no low-rank interaction term.}
    \label{fig:sim-ja-dgp}
\end{figure}

\subsection{Ablation Study}\label{app:subsec:ablation}
We introduce a variant that performs the same alternating likelihood updates but omits ReAnchor as the ablated model. We also add an unstructured estimation scheme where $\bar S$ (subject to \eqref{eq:S-id}) is first estimated from the data and $\hat S$ and $\hat \theta$ are recovered via \Cref{prop:ident} and truncated SVD on $\tilde S$ defined by \eqref{eq:S-residual}. 
We compare the score recovery, consensus ranking, and sign prediction accuracy of our full model, the direct-S truncated-SVD model, and the ablated model where the ReAnchor Step (\Cref{alg:reanchor}) is removed from \Cref{alg:mle}. The settings are the same as in the main simulation experiment, although the plot of confidence interval coverage accuracy is removed due to the non-identifiability of the ablated model. 
\begin{figure}[!ht]
    \centering
    \includegraphics[width=0.9\linewidth]{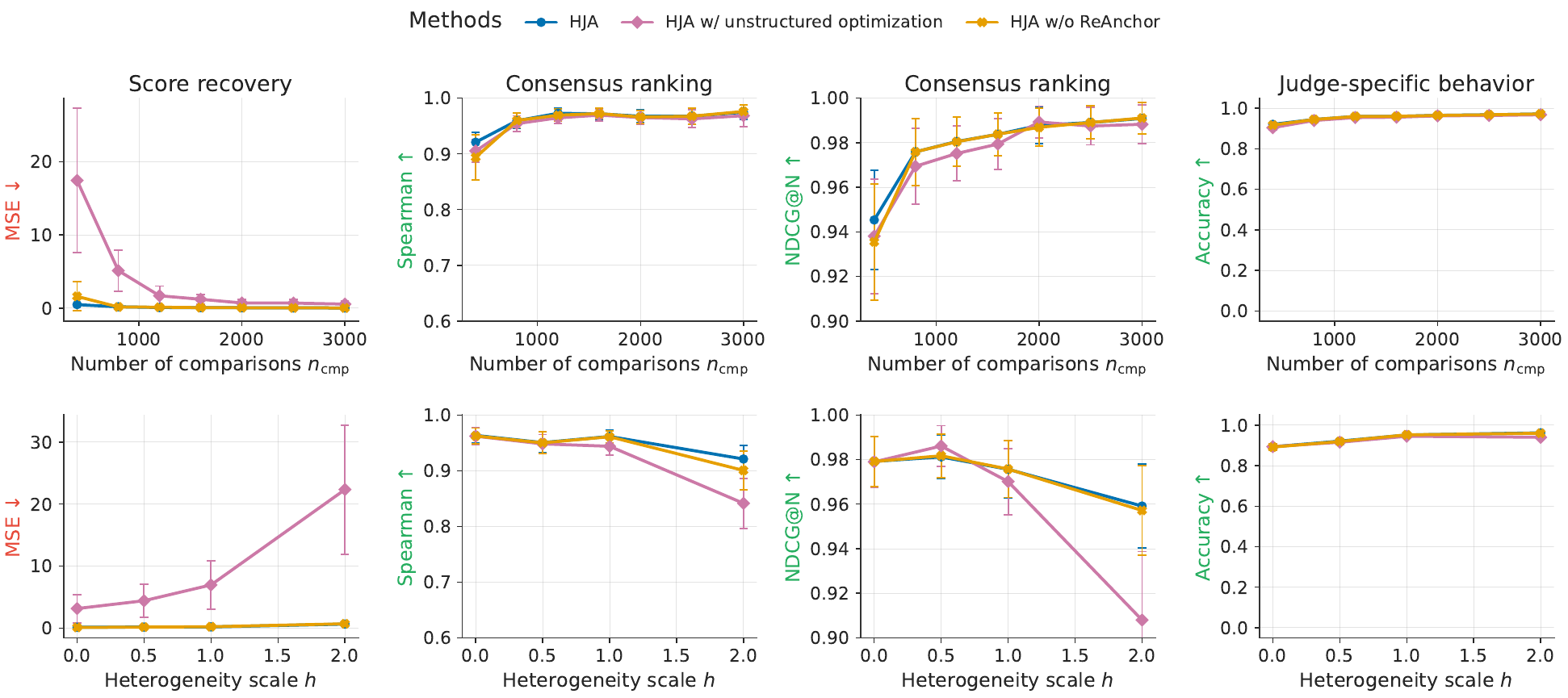}
    \caption{ReAnchor ablation under the synthetic DGP. We compare the proposed HJA estimator, the direct-S truncated-SVD estimator, and an ablated HJA variant that omits the ReAnchor step after each alternating update. Columns report score MSE, Spearman correlation, NDCG@N, and judge-specific sign accuracy; rows vary sample size and heterogeneity strength.}
    \label{fig:ablation}
\end{figure}

\Cref{fig:ablation} shows that the performance of the ablated models are worse than that of the full model. This shows that ReAnchor and structure are important for stabilizing the identified low-rank parameterization and improving score recovery, while the non-reanchored variant can still preserve much of the ordinal information needed for ranking and judge-specific sign prediction.
In summary, the design of \Cref{alg:mle} stabilizes score recovery while preserving the strong ranking and judge-behavior performance of the proposed estimator.

\clearpage

\section{Real Data} \label{app:sec:real}
This section provides three layers of real-data analysis: (i) experimental 
specifications and robustness benchmarks comparing HJA to baselines (\Cref{app:subsec:real-evaluation}); 
(ii) proper interpretations of patterns of model outputs (\Cref{app:subsec:interpretation}).
\subsection{Experiment specifications for Real-data robustness evaluation of \Cref{subsec:real-evaluation}}
\label{app:subsec:real-evaluation}

The overall aim of real data comparison is to evaluate whether the proposed heterogeneous judge-aware model improves both prediction and ranking sensitivity relative to JA-Ranking and Standard BTL. The experiments are organized around held-out prediction, robustness to injected noisy judges, and accuracy on near-tie item pairs.


\noindent\textbf{Dataset description.}
We evaluate on four pairwise LLM-comparison datasets downloaded from \url{https://github.com/TanXZfra/A-Judge-Aware-Ranking-Framework-for-Evaluating-Large-Language-Models-without-Ground-Truth/tree/main/data}. Recall that $N$ denotes the number of candidate models, $K$ the number of judge models, and $n_{\mathrm{cmp}}$ the number of usable pairwise comparisons after preprocessing. Each comparison is represented as $(k,i,j,y)$, where $k$ indexes the judge, $i,j$ index the compared candidate models, and $y \in \{0,0.5,1\}$ records the judge preference, with ties mapped to $0.5$. We drop only records with unknown labels. For each judge-item-pair cell, $n_{kij}$ denotes the number of observed comparisons for judge $k$ on candidate pair $(i,j)$; the summaries below report the average, minimum, and maximum over nonzero $n_{kij}$ cells.

The Chatbot Arena dataset is derived from the LMSYS Chatbot Arena pairwise evaluation setting under the Apache 2.0 license, a crowdsourced battle platform for comparing chat assistants \cite{chiang2024chatbot,zheng2023mtbench}. The original file contains 10,000 pairwise records. After removing 63 unknown-label records, we use $n_{\mathrm{cmp}}=9{,}937$ comparisons, including 1,224 ties; the tie-excluded count is 8,713. The dataset has $N=20$ candidate models and $K=10$ judge models. Among the possible judge-item-pair cells, 1,737 of 1,900 are observed, and nonzero $n_{kij}$ has average/minimum/maximum $5.72/1/29$.

The MT-Bench dataset follows the LMSYS MT-Bench setting (Apache 2.0 license), a controlled multi-turn benchmark with 80 questions. The original file contains 10,000 pairwise records comparing $N=6$ candidate models, evaluated by $K=20$ judge models \cite{zheng2023mtbench}. After removing 294 unknown-label records, we use $n_{\mathrm{cmp}}=9{,}706$ comparisons, including 756 ties; the tie-excluded count is 8,950. All 300 possible judge-item-pair cells are observed, with nonzero $n_{kij}$ average/minimum/maximum $32.35/10/48$.

The UltraFeedback dataset is derived from UltraFeedback-style instruction-response comparisons (under MIT license), based on a large-scale AI-feedback resource built from diverse prompts and model responses \cite{cui2024ultrafeedback}. The original file contains 10,000 pairwise records. After removing 274 unknown-label records, we use $n_{\mathrm{cmp}}=9{,}726$ comparisons, including 761 ties; the tie-excluded count is 8,965. The dataset contains $N=17$ candidate models and $K=20$ judge models. We observe 2,379 of 2,720 possible judge-item-pair cells, and nonzero $n_{kij}$ has average/minimum/maximum $4.09/1/15$.

The in-house dataset is a private pairwise comparison set over recent LLMs collected by \cite{xu2026judgeaware}. The original file contains 36,000 pairwise records. After removing 161 unknown-label records, we use $n_{\mathrm{cmp}}=35{,}839$ comparisons, including 1,782 ties; the tie-excluded count is 34,057. This is the largest dataset in our experiments, with $N=45$ candidate models and $K=18$ judge models. We observe 15,374 of 17,820 possible judge-item-pair cells, with nonzero $n_{kij}$ average/minimum/maximum $2.33/1/9$.

\noindent\textbf{Experiment details.}
Stability and near-tie experiments use a record-level 80/20 train/test split. Our proposed model selects rank by 5-fold validation on negative log likelihood on the training split or full sample, depending on the workflow. Candidate ranks are \(0,\ldots,\min(K-1,N-2)\).
For the stability experiment, we fit each method on the training records and report held-out pairwise accuracy on the test records. For the robustness experiment, we fit base rankings on the original data, then add noisy judges sequentially. The saved default noisy-judge grid is 1 through 10 added noisy judges; the main table reports steps 1, 5, and 10 as low, mid, and high perturbation levels, with the numbers indicating exact-ranking accuracy versus the base ranking after noisy judges. For the near-tie slice, we select up to 20 item pairs from the training split with at least 20 training records per pair, sorted by empirical win-rate distance to 0.5 and ordered into low/mid/high tertiles, and evaluate the selected near-tie test records separately from all other test records after skipping tie-labeled test records.


All experiments were run on a CPU server with 12 cores, the type being Intel(R) Xeon(R) Gold 6126 CPU. One repetition of a whole Robustness experiment takes 45 minutes, while other experiments take minutes for one repetition. The optimization parameter $\tau=30$ is used by default, while it is finetuned to be 5 and 300 for the Robustness experiment on UltraFeedback (excluding low noisy injection) and the Near-tie experiment in in-house data. For each convex optimization task, i.e., Line \ref{line:argmin1}--\ref{line:argmin2} in \Cref{alg:mle}, we use L-BFGS-B method \cite{byrd1995limited} after reparameterization based on \Cref{cond:norm} \ref{cond:center}-\ref{cond:scale}.

\subsection{Interpretation} \label{app:subsec:interpretation}
\noindent\textbf{Outlier judge identification. }
We first show the composite analysis for MT-Bench and UltraFeedback in \Cref{fig:mt-feedback-composite-with-glm}. In both experiments, \texttt{Llama-4-Scout-17B-16E-In} appears noisy and not sensitive to the consensus of other judges.

\begin{figure}[htbp]
    \centering
    
    \begin{subfigure}[b]{1\textwidth}
        \centering
        \includegraphics[width=\textwidth]{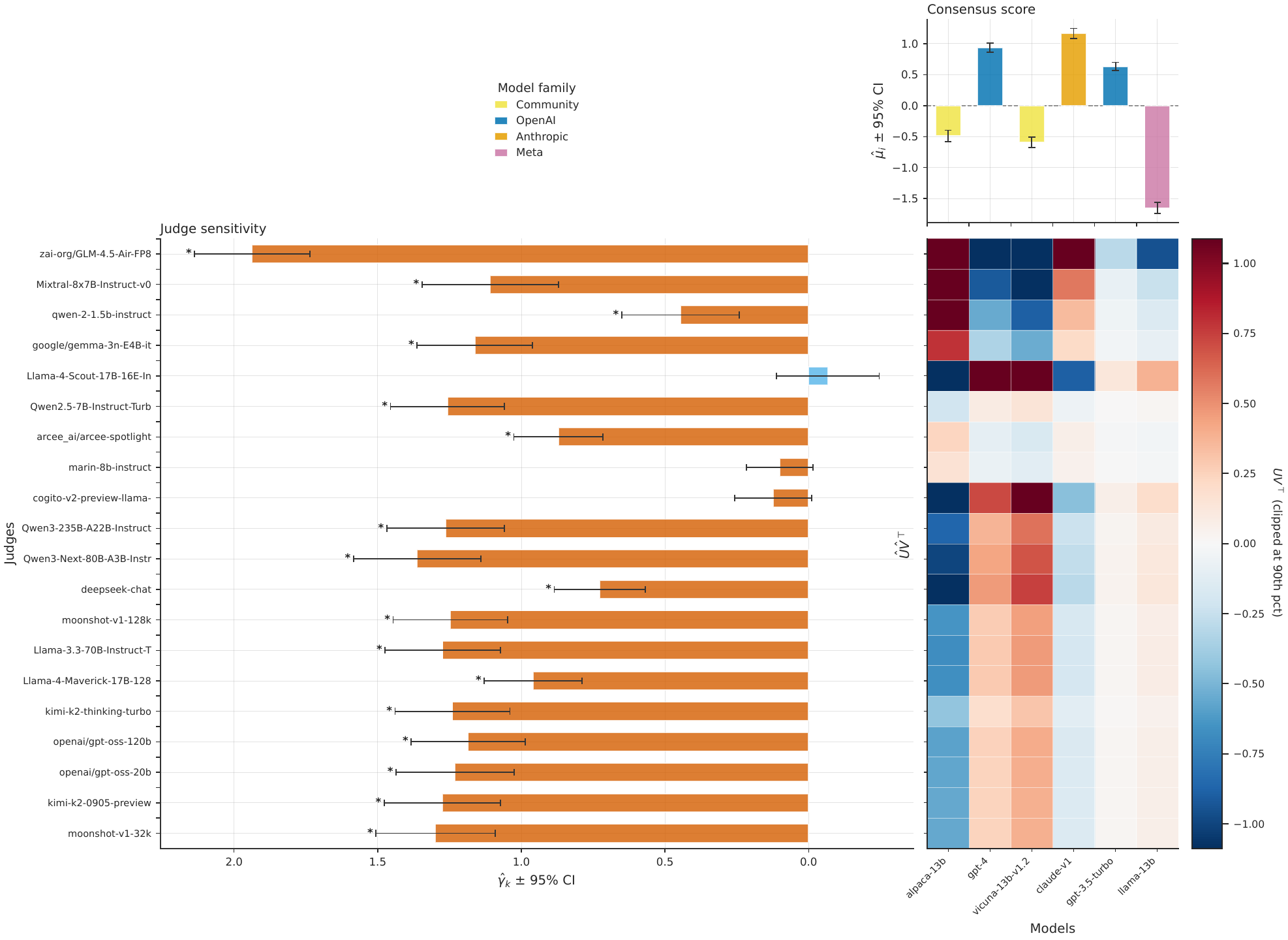}
        \caption{MT-Bench}
        \label{fig:a}
    \end{subfigure}
    
    \vspace{10pt}
    
    \begin{subfigure}[b]{1\textwidth}
        \centering
        \includegraphics[width=\textwidth]{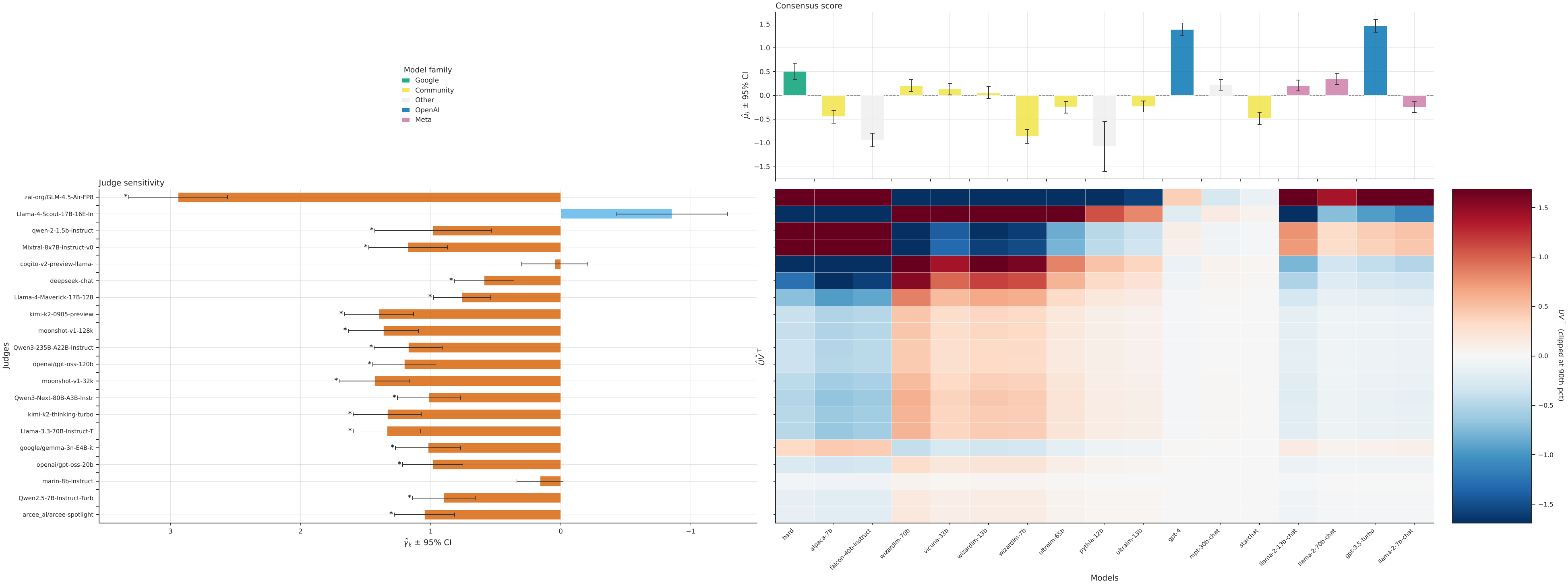}
        \caption{UltraFeedback}
        \label{fig:b}
    \end{subfigure}
    
    \caption{The composite figures showing the judge sensitivity, model rankings, and judge-model interaction by our proposed model. Error bars are given with near-normality assumption.}
    \label{fig:mt-feedback-composite-with-glm}
\end{figure}

However, in order to interpret fitted judge behavior, we need to distinguish consensus sensitivity from residual disagreement. 
A large value of \(\hat\gamma_k\) indicates that judge \(k\) responds strongly to the consensus ranking direction \(\hat\mu\), but it does not by itself imply that the judge is globally reliable. 
We therefore also examine the residual leverage 
\[
\hat h_k=\|(\hat U\hat V^\top)_{k,:}\|_2,
\]
or its relative version 
\[
\hat\rho_k=
\frac{\|(\hat U\hat V^\top)_{k,:}\|_2}
{|\hat\gamma_k|\|\hat\mu\|_2+\varepsilon}.
\]
Judges with large \(\hat\gamma_k\) and small \(\hat h_k\) are strongly aligned with the consensus signal, whereas judges with both large \(\hat\gamma_k\) and large \(\hat h_k\) are high-leverage judges whose evaluations combine sharp consensus discrimination with substantial off-consensus structure. 
In the rank-one fits used here, the sign of \(\hat U_{k1}\) further indicates whether this residual disagreement is aligned with or opposed to the dominant judge-side pattern.

\Cref{fig:leverage-glm} plots $\hat h_k$ against $\hat \gamma_k$. Because only the outliers matter under this metric, we only mark the outlier judge.
\begin{figure}
    \centering
    \includegraphics[width=0.6\linewidth]{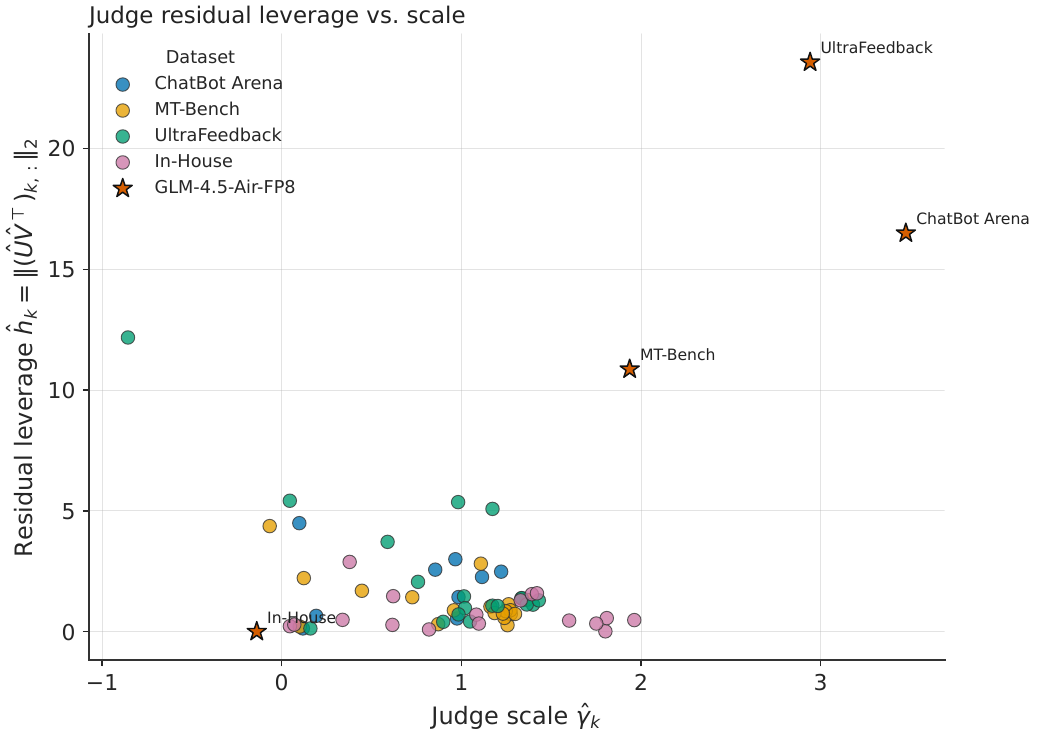}
    \caption{The scatter plot of leverages, where the outlier model is explicitly annotated.}
    \label{fig:leverage-glm}
\end{figure}

Under this diagnostic, \texttt{zai-org/GLM-4.5-Air-FP8} is not best interpreted as simply a highly reliable judge. 
Although it has the largest estimated consensus sensitivity, it also has unusually large residual leverage, with a residual direction opposed to the main judge cluster. 
Thus, GLM-4.5 behaves as a high-sensitivity but high-bias judge: it sharply separates models along the consensus direction while simultaneously imposing a strong idiosyncratic preference component.

In light of this, we rerun our model on MT-Bench and UltraFeedback without GLM 4.5 and visualize the results in \Cref{fig:overall-ultra-mt-no-glm}. From these results, we find that \texttt{Llama-4-Scout-17B-16E-In} is no longer a pure noise, but rather a judge with moderate sensitivity. This shows that the extreme leverage of GLM 4.5 distorts the fitted judge behavior of other judges, and its removal allows a more nuanced interpretation of the remaining judges.

\begin{figure}[htbp]
    \centering
    
    \begin{subfigure}[b]{1\textwidth}
        \centering
        \includegraphics[width=\textwidth]{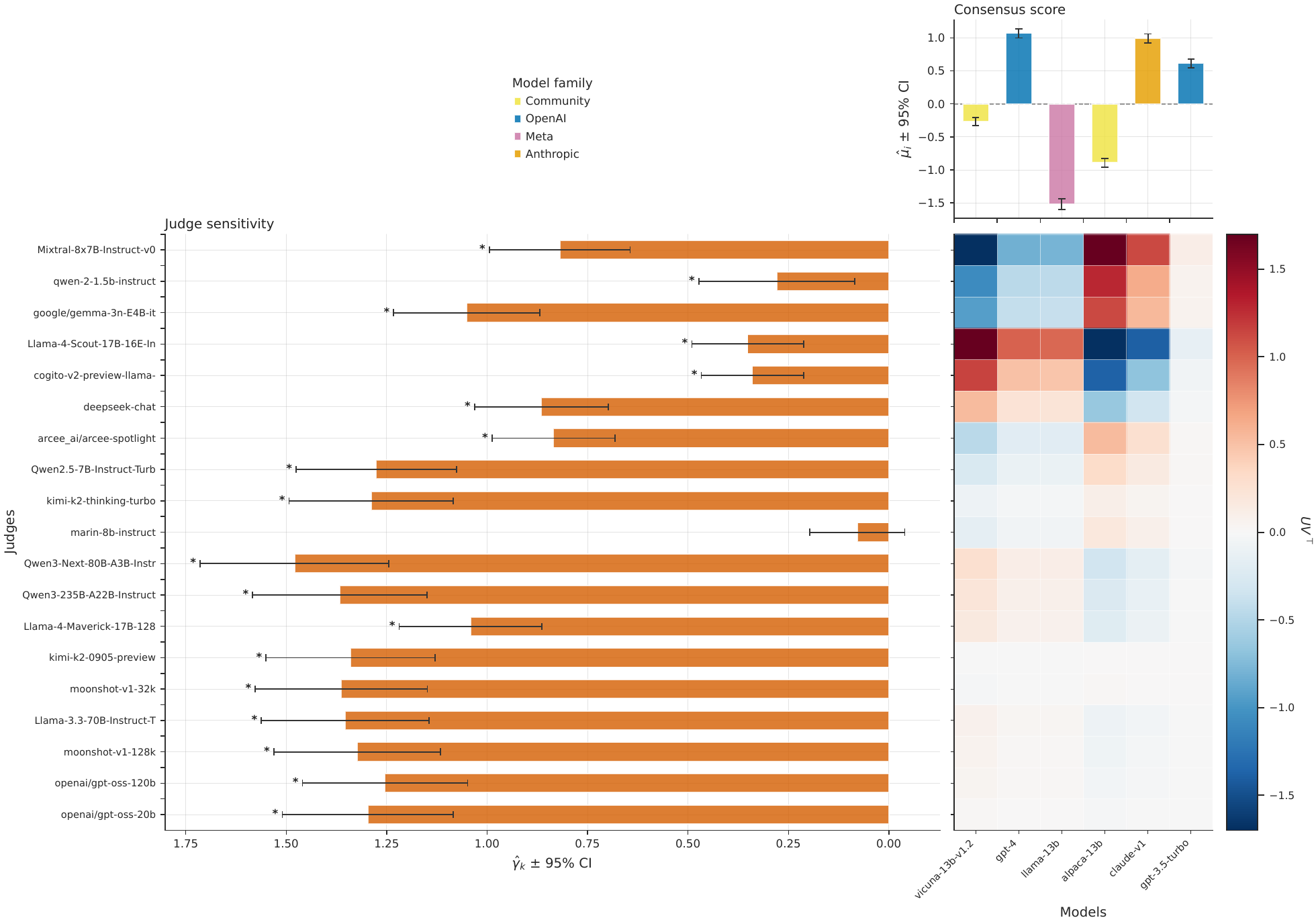}
        \caption{MT-Bench}
        \label{fig:mt-bench}
    \end{subfigure}
    
    \vspace{10pt}
    
    \begin{subfigure}[b]{1\textwidth}
        \centering
        \includegraphics[width=\textwidth]{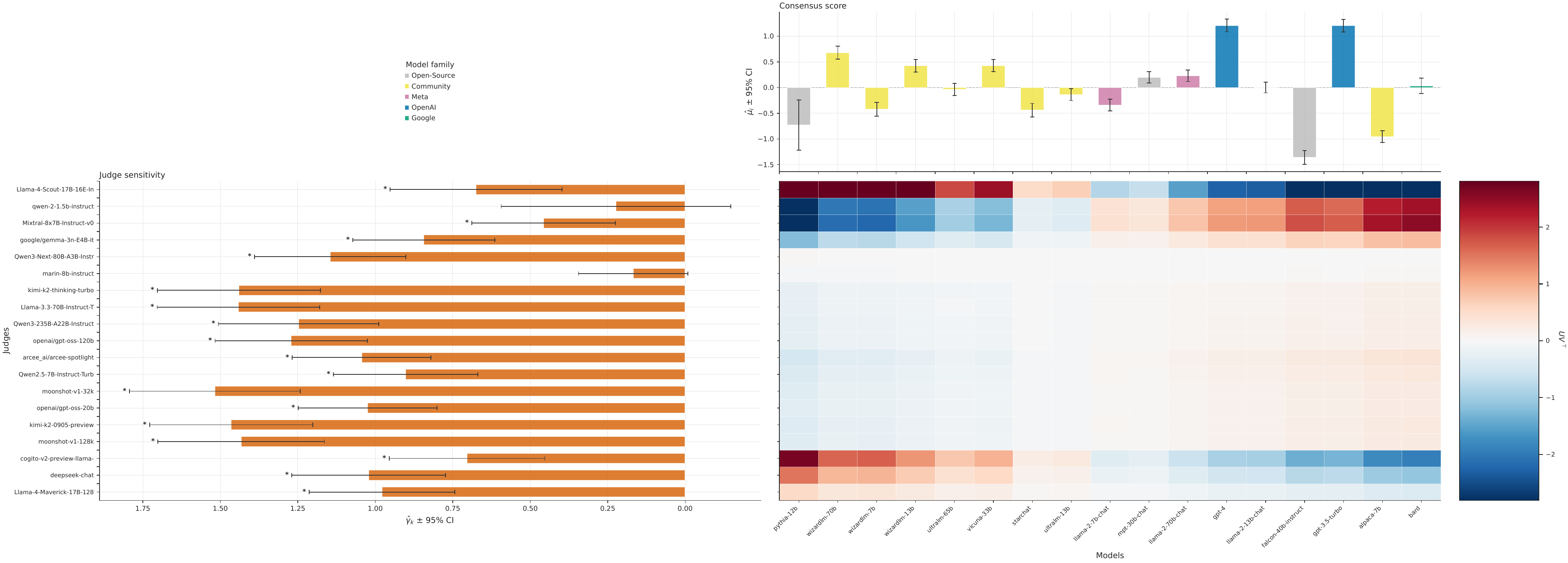}
        \caption{UltraFeedback}
        \label{fig:ultrafeedback}
    \end{subfigure}
    
    \caption{The composite figures showing the judge sensitivity, model rankings, and judge-model interaction by our proposed model with GLM 4.5 removed. Error bars are given with near-normality assumption.}
    \label{fig:overall-ultra-mt-no-glm}
\end{figure}

\noindent\textbf{Self-evaluation bias.} A well-documented phenomenon in automated evaluation is self-evaluation bias (or self-preference), where an LLM judge systematically assigns higher scores to responses generated by itself or models from the same provider family. While this bias poses a significant threat to naive aggregation methods when a judging panel is monopolized by a single model family, its impact is fundamentally mitigated in a diverse, multi-judge setting.
However, apart from selecting a dataset containing diversified judges and models, our formulation can also absorb a judge's systematic self-preference into the orthogonal residual disagreement term $UV^\top$. The heatmap of $UV^\top$ should show bright ``hotspots'' exactly where a judge prefers the output of its own or its family members. This mechanism ensures that $\mu$ retains a high-fidelity representation of consensus quality without the need to completely discard a biased judge. 

As an example, we plot the heatmap of $UV^\top$ to show the heterogeneous interaction between judges and the items on ChatBot Arena after the removal; see \Cref{fig:chatbot-uvt-heatmap}, which illustrates the interaction between the judges and the models.

\Cref{fig:chatbot-uvt-heatmap} shows that although different judges indeed prefer the styles from different models, there is no observed self-preferencing bias, possibly due to the double-blindness in the design and collection of these data.
\begin{figure}
    \centering
    \includegraphics[width=1\linewidth]{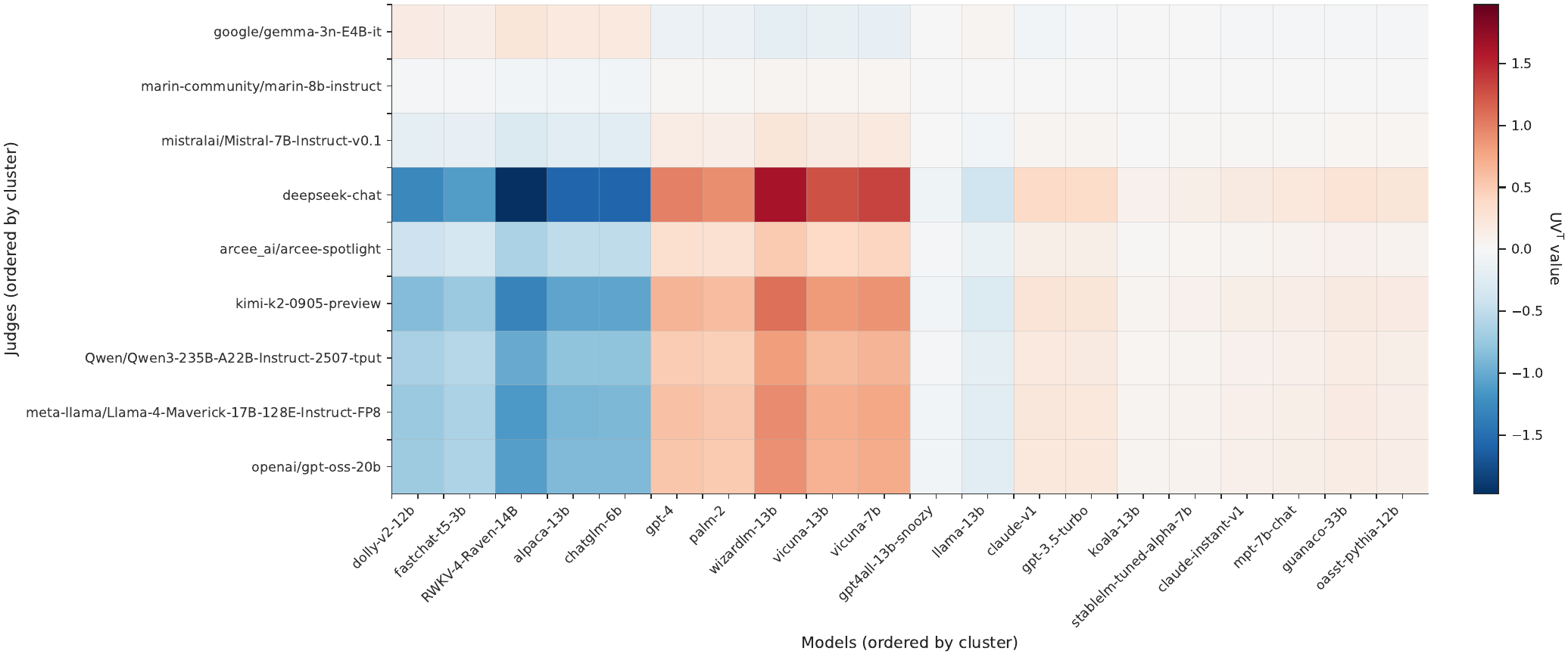}
    \caption{Judge-model heterogeneous interaction terms $UV^\top$ for ChatBot Arena clustered by judge-model affinity (extreme judge removed). The first three judges on the $y$-axis (\texttt{gemma-3n-E4B-it}, \texttt{marin-8b-instruct}, and \texttt{Mistral-7B-Instruct-v0.1}) show comparatively weak heterogeneous preferences, indicating closer alignment with the consensus ranking. Most disagreement happens mainly on the first ten models shown on the $x$-axis, with the majority of judges preferring model 6--10 than 1--5 on the $x$-axis.
    }
    \label{fig:chatbot-uvt-heatmap}
\end{figure}

We also try a biclustering analysis \cite{hartigan1972direct} on the estimated $\hat S$ of in-house data (with GLM 4.5) removed to mine some patterns; see \Cref{fig:bicluster}. We analyze in-house data because it contains the most diversified models and judges and is hence the most suitable for biclustering analysis. We also include the biclustering results for the other three datasets in \Cref{fig:biclustering-all}. The block structures validates HJA's decomposition: judges from the same family 
converge on similar preference profiles, suggesting that systematic disagreement 
is not random noise but reflects coherent, interpretable differences in evaluation philosophy. This makes the residual component $UV^\top$ a meaningful diagnostic tool rather than an unexplained variance sink.

\begin{figure}[htbp]
    \centering
    \includegraphics[width=\linewidth]{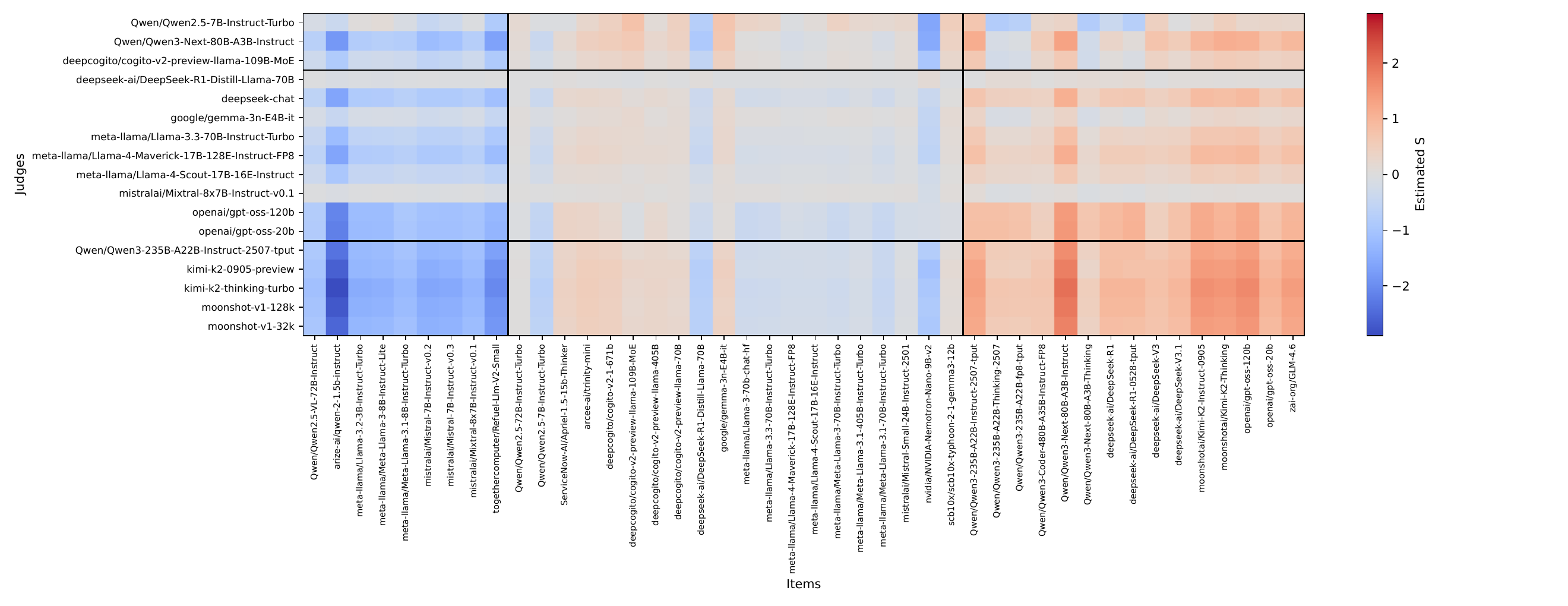}
    \caption{
    Biclustering of the estimated judge-specific score matrix \(\hat S\) on the in-house dataset after removing the high-leverage GLM-4.5 judge. 
    Rows and columns are reordered by biclustering, so contiguous blocks indicate judges and models with similar fitted preference profiles. 
    On the item side, the clustering separates earlier/smaller Qwen--Llama--Mistral-style models from a middle group of DeepCogito, DeepSeek-distilled, Gemma, Llama, and Mistral/Nemotron models, and a rightmost block dominated by recent Qwen3, DeepSeek, Kimi/Moonshot, OpenAI \texttt{gpt-oss}, and GLM models. 
    On the judge side, Qwen-family judges show similar scoring behavior, Kimi and Moonshot judges form a closely related cluster, and DeepSeek, Gemma, Llama, Mistral, and OpenAI judges occupy a more moderate shared cluster. 
    The block structure indicates that HJA recovers family-level affinity patterns in judge-specific rankings, rather than only isolated judge or model effects.
    }
    \label{fig:bicluster}
\end{figure}

\begin{figure}[htbp]
    \centering
    
    \begin{subfigure}[b]{\textwidth}
        \centering
        \includegraphics[width=0.5\textwidth]{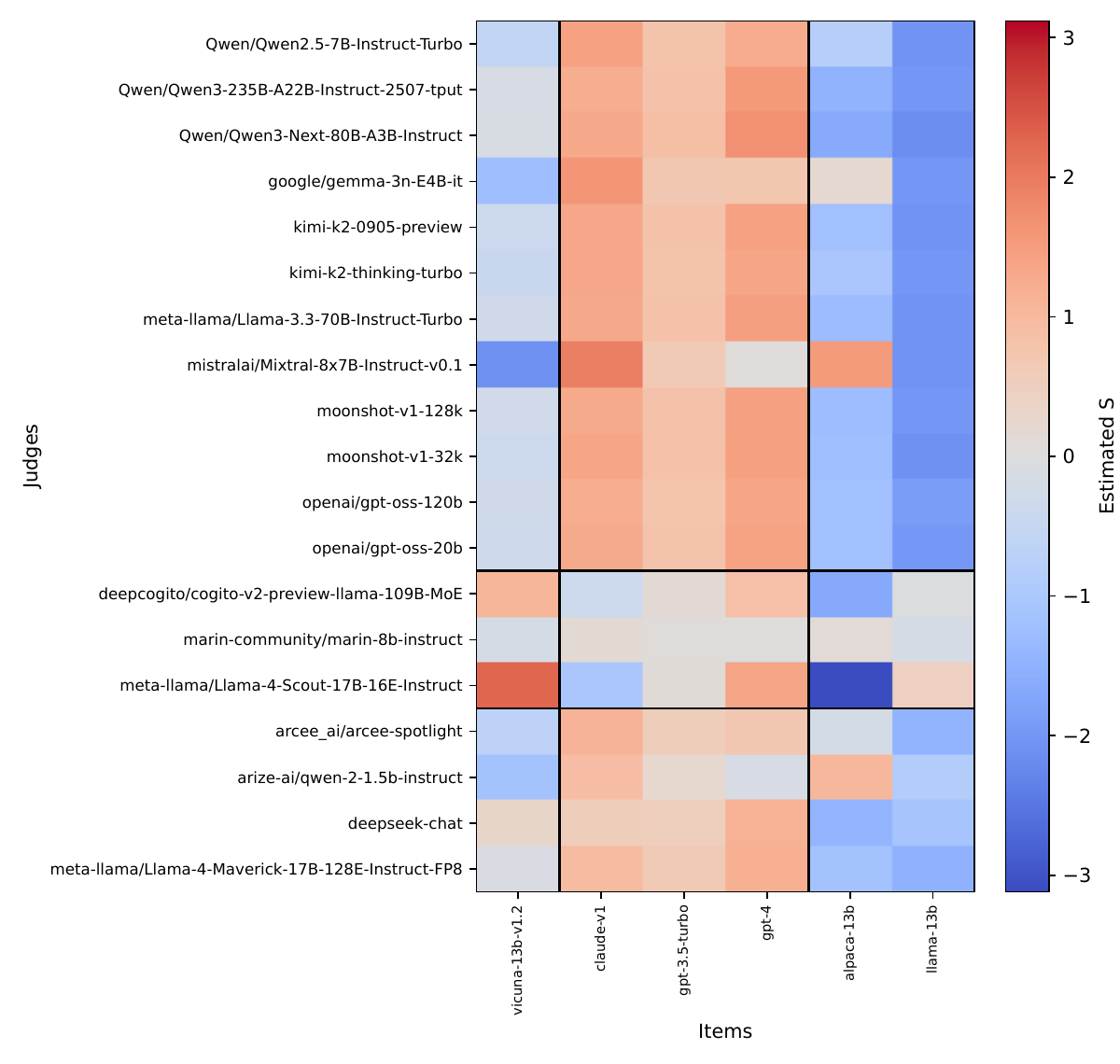}
        \caption{MT-Bench}
        \label{fig:mtbench-bicluster}
    \end{subfigure}
    
    \vspace{12pt}
    
    \begin{subfigure}[b]{0.5\textwidth}
        \centering
        \includegraphics[width=\textwidth]{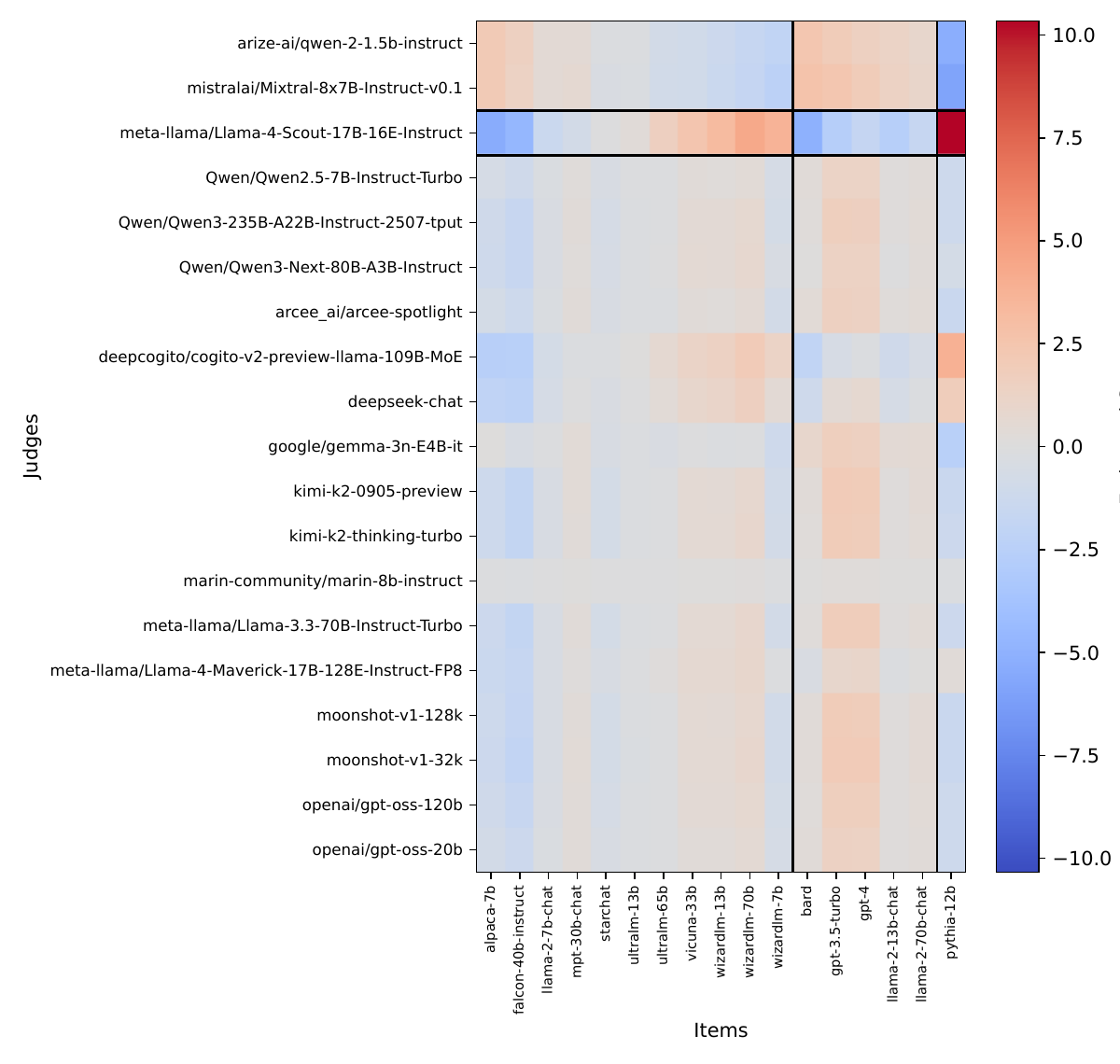}
        \caption{UltraFeedback}
        \label{fig:ultrafeedback-bicluster}
    \end{subfigure}
    
    \vspace{12pt}
    
    \begin{subfigure}[b]{\textwidth}
        \centering
        \includegraphics[width=0.5\textwidth]{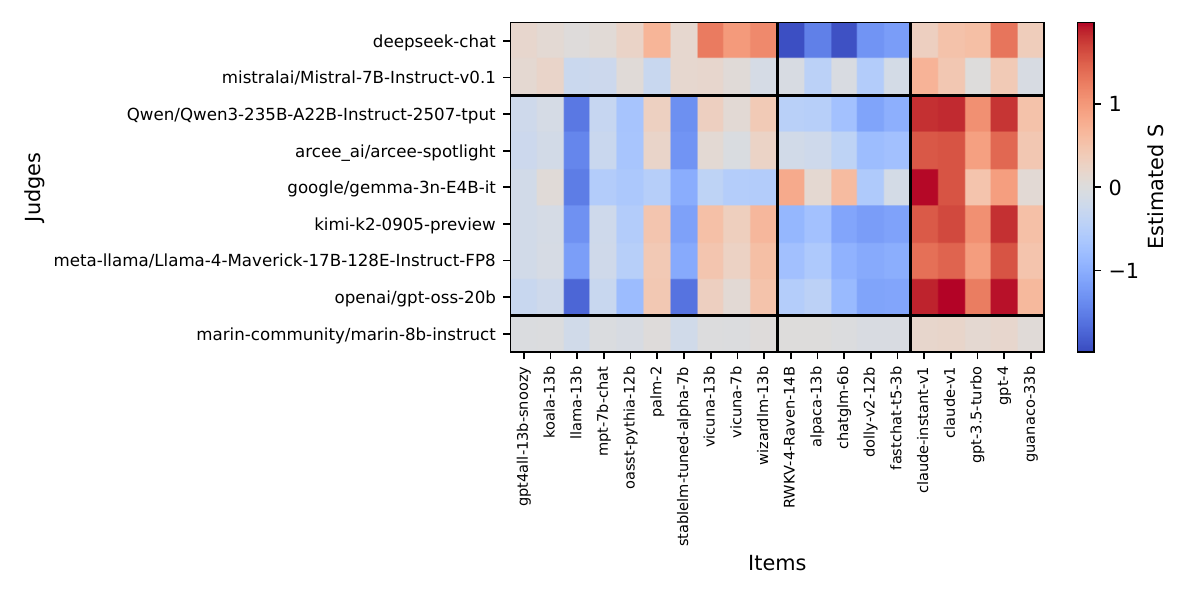}
        \caption{Chatbot Arena}
        \label{fig:arena}
    \end{subfigure}
    
    \caption{Biclustering of estimated $\hat S$ across (a) MT-Bench, (b) UltraFeedback, and (c) Chatbot Arena, all with GLM-4.5 removed. From (c), we find that Claude and GPT are preferred by most judges, and indeed, they have a high consensus score. Such a concordance pattern also appears in (a).}
    \label{fig:biclustering-all}
\end{figure}

\end{bibunit}

\end{document}